\newtheorem{lemma}{Lemma}
\newtheorem{Proposition}{Proposition}
\newtheorem{Theorem}{Theorem}
\title{
\color{black}Evolutionary game with stochastic payoffs in a finite island model \color{black}}
\author{Dhaker Kroumi$^1$\footnote{Author for
correspondence, and e-mail: dhaker.kroumi@kfupm.edu.sa} and Sabin Lessard$^2$
\\$^1$Department of Mathematics and Statistics\\King Fahd University of Petroleum and Minerals\\Dhahran 31261, Saudi Arabia\\
$^2$Department of Mathematics and Statistics\\University of Montreal \\
 Montreal H3C 3J7, Canada\\
 }
\date{}
\begin{document}
\maketitle

%%%%%%%%%%%%%%%%%%%%%%%%%%%%%%%%%%%%%%%%%%%%%%%%%%%%%%%%%%%%%%%%%%%%%%%%%%%%%%%%%%%%%%%%%%%%%%%
%%%%%%%%%%%%%%%%%%%%%%%%%%%%%%%%%%%%%%%%%%%%%%%%%%%%%%%%%%%%%%%%%%%%%%%%%%%%%%%%%%%%%%%%%%%%%%%

\section*{Abstract}
In this paper, we consider a two-player two-strategy game with random payoffs in a population subdivided into $d$ demes, each containing $N$ individuals at the beginning of any given generation and experiencing local extinction and recolonization with some fixed probability $m$ after reproduction and selection among offspring. Within each deme, offspring engage in random pairwise interactions, and the payoffs are assumed to have means and variances proportional to the inverse of the population size. By verifying the conditions given in Ethier and Nagylaki (1980) to approximate Markov chains with two time scales, we establish that the discrete-time evolutionary dynamics with $Nd$ generations as unit of time converges to a continuous-time diffusion as $d\rightarrow\infty$. The infinitesimal mean and variance of this diffusion are expressed in terms of the population-scaled means and variances of the payoffs besides identity-by-descent measures between offspring in the same deme in a neutral population. We show that the probability for a strategy to fix in the population starting from an initial frequency $(Nd)^{-1}$ generally increases as the payoffs to that strategy exhibit less variability or the payoffs to the other strategy more variability. As a result, differences in variability can make this fixation probability for cooperation larger than the corresponding one for defection. As the deme-scaled extinction rate $\nu=mN$ decreases for $N$ large enough and $m$ small enough,  creating a higher level of identity among offspring within demes, the differences between the population-scaled variances of the payoffs for interacting offspring of different types increases this effect to a greater extent than the differences for interacting offspring of the same type. Furthermore, assuming no variability in the payoffs to cooperation, we explore in more details the relationships between the levels of variability in the payoffs to defection as functions of $\nu$ that are necessary for the fixation probability of cooperation to be above $(Nd)^{-1}$, the corresponding fixation probability of defection  to be below $(Nd)^{-1}$, and the former fixation probability to exceed the latter.

%%%%%%%%%%%%%%%%%%%%%%%%%%%%%%%%%%%%%%%%%%%%%%%%%%%%%%%%%%%%%%%%%%%%%%%%%%%%%%%%%%%%%%%%%%%%%%%%%%%%%%%%%%%%%%%%%%%%%%%%%%%%%%%%%%%%%%%%%%%%%%%%%%%%%%%%%%%%%%%%%%%%%%%%%%%%%%%%%%%%%%%%%%

\noindent \textbf{Keywords and phrases}:  Prisoner's Dilemma; Random payoffs; Fixation probability; Identity by descent; Island Model; Diffusion approximation.  

\noindent \textbf{Mathematics Subject Classification (2010)}: Primary 92D25; Secondary 60J70

%%%%%%%%%%%%%%%%%%%%%%%%%%%%%%%%%%%%%%%%%%%%%%%%%%%%%%%%%%%%%%%%%%%%%%%%%%%%%%%%%%%%%%%%%%%%%%%%%%%%%%%%%%%%%%%%%%%%%%%%%%%%%%%%%%%%%%%%%%%%%%%%%%%%%%%%%%%%%%%%%%%%%%%%%%%%%%%%%%%%%%%%%

\section{Introduction}

Evolutionary game dynamics is a process of frequency-dependent selection, in which the reproductive success of an individual, also called its fitness, is determined by the payoff that the individual receives  according to the strategy used by the individual in interactions with others. Individuals with higher fitnesses are more likely to reproduce and pass on their strategies (Maynard Smith \cite{M1982}, Maynard Smith and Price \cite{MP1973}, Hofbauer and Sigmund \cite{HS1988}, Nowak \cite{N2006}). The reproductive success of an individual is not determined by its absolute fitness but rather by its relative fitness compared to the other individuals in the population. This framework can be used to shed light on the emergence of cooperation or competition in natural populations (Axelrod and Hamilton \cite{AH1981}, Sugden \cite{S1986}).

Early studies of evolutionary games have primarily focused on an infinite well-mixed population, using a straightforward differential equation known as the replicator equation which assumes that the growth rate of a strategy is given its average payoff (Taylor and Jonker \cite{TJ1978}, Zeeman \cite{Z1980}). However, real biological species are finite in number, and their dynamics cannot be adequately captured by a deterministic differential equation. In order to address this fact, investigations have turned towards finite well-mixed populations as modeled by finite Markov chains (see, e.g., Nowak \textit{et al.} \cite{NSTF2004}, Taylor and Wild \cite{WT2004}, Imhof and Nowak \cite{IN2006}, Antal \textit{et al.} \cite{ANT2009}). These stochastic processes play a crucial role in comprehending the emergence and maintenance of cooperative behaviors, the influence of random drift, and the impact of population size on evolutionary dynamics.
The Moran model  (Moran \cite{M1958}) assumes that one individual at a time is replaced by another following a birth-death event, while the Wright-Fisher model (Fisher \cite{F1930}, Wright \cite{W1931}) assumes non-overlapping generations with a binomial scheme from one generation to the next. More general exchangeable selection models from one time step to the next extending the neutral Cannings model (Cannings \cite{Cannings1974}) have also been considered (Lessard and Ladret \cite{LL2007}).

Well-mixed populations oversimplify the reality of biological species, which are more complex. Evidence of geographic structure in genetic data (Slatkin \cite{S1985}) highlights the importance of studying populations subdivided into different groups subject to specific environmental and ecological effects.  Evolutionary game dynamics in structured populations explore the dynamics of strategy frequencies among interacting individuals in populations that exhibit spatial or social structure. In such populations, individuals are not randomly mixed but instead interact primarily with their neighbors or with individuals in close geographic proximity. These structural contraints on interactions can significantly influence the evolutionary outcomes such as the emergence of cooperation.

The island model, which serves as a basic representation of a structured population, is commonly used to investigate the impact of limited migration, local adaptation, and population subdivision on the process of evolution. Ethier and Nagylaki \cite{EN1980} have established conditions for a discrete-time Markov chain with two time scales to converge weakly to a diffusion process in the limit of a large population. The strong migration limit as the deme size tends to infinity and the number of demes is kept fixed was studied in
Nagylaki \cite{N1980, N1982, N2000}, while Slatkin \cite{S1981} and Takahata \cite{T1991} considered the island model under conditions of low migration. 
%In Lessard \cite{L2009}, it is the deme size that was kept constant while the number of demes was let to go to infinity, which allows to study the effects of kin selection and group selection.

Numerical simulations have been conducted by Cherry \cite{C2003a,C2003b} and Cherry and Wakeley \cite{CW2003} to evaluate the effectiveness of diffusion methods in scenarios involving a large number of small demes. A similar approach using diffusion methods was employed by Whitlock \cite{W2003} to investigate the stepping-stone model of population structure. This model assumes a population subdivided into demes of fixed size connected by migration. Roze and Rousset \cite{RR2003} have proposed a general method for constructing diffusion approximations in structured population genetics models. They achieved this by using approximations for the expectation and variance in allele frequency change over one generation, based on partial derivatives of fitness functions with respect to the trait value and probabilities of genetic identity under neutrality. Wakeley \cite{W2003}, Wakeley and Takahashi \cite{WT2004}  and Lessard \cite{L2009} have studied an island model consisting of a finite number of demes, all of the same finite size in the limit of a large number of demes, using a diffusion approximation based on the conditions in Ethier and Nagylaki \cite{EN1980}. This is an ideal framework to study the effects of kin selection and group selection.

All the aforementioned studies assume a constant environment over time. However, this assumption is unrealistic as environmental conditions can randomly fluctuate, impacting competition abilities and birth-death rates (Kaplan \textit{et al.} \cite{KHH1990}, Lande \textit{et al.} \cite{LES2003}, May \cite{M1973}). These fluctuations can influence population size and composition, leading researchers to investigate their effects. For instance, Lambert \cite{L2006}, Parsons and Quince \cite{PQ2007a,PQ2007b}, and Otto and Whitlock \cite{OW1997} have studied the fixation probability of a mutant type in an unstructured population with dynamic size fluctuations due to demographic scenarios of growth or decline. Conversely, Uecker and Hermisson \cite{UH2011} have focused on a single beneficial allele in a population that experiences temporal variation in both size and selection pressure.

A random payoff matrix in pairwise interactions provides an alternative approach for examining fluctuations in the environment, reflecting the stochastic nature of real-world interactions. It encompasses variations in the environment, the presence of noise or decision-making errors, and the inherent unpredictability of outcomes. 

Recently, the effect of stochastic changes in payoffs in discrete time were studied with particular attention to stochastic local stability of fixation states and constant polymorphic equilibria in an infinite population (Zheng \emph{et al.} \cite{ZLLT2017, ZLLT2018}), the fixation probability in a large population that reproduces according to a Wright-Fisher model (Li and Lessard \cite{LL2020}) or in a finite well-mixed population that reproduces according to a Moran model (Kroumi et al. \cite{KMLL2021, KML2022}). Moreover,
Kroumi and Lessard \cite{KL2021, KL2022} explored the effects of random payoffs on the average abundance of strategies such as cooperation and defection under recurrent mutation in finite well-mixed populations using the Moran model for reproduction. The general findings reveal that the inclusion of uncertainty in the payoffs received by defectors or the reduction of uncertainty in the payoffs received by cooperators can facilitate the evolution of cooperation.

The objective of this paper is to examine the influence of stochastic variability in a population structured into a large number of isolated demes of the same finite size. Our analysis will focus on exploring how the interplay between the second moments in payoffs and the migration rate affects the evolutionary dynamics of cooperation, specifically its probability of becoming fixed under different scenarios.

The structure of this paper is as follows: Section 2 presents the island model with random pairwise interactions within demes and random payoffs for a two-player game, while Section 3 establishes the diffusion approximation in the limit of a large number of demes based on the two time scales for the changes in strategy frequencies and deme type frequencies. In Sections 4 and 5, we develop the expressions of the infinitesimal mean and variance in the diffusion approximation and we examine conditions for the evolution of a strategy to be favoured by selection based the fixation probability following its introduction as a single mutant. Sections 6-9 investigate these conditions under various scenarios on the variances of the payoffs. Finally, in Section 10, we discuss our results and their relationship with the existing literature. Technical details are relegated to Appendices A-J.

%%%%%%%%%%%%%%%%%%%%%%%%%%%%%%%%%%%%%%%%%%%%%%%%%%%%%%%%%%%%%%%%%%%%%%%%%%%%%%%%%%%%
%%%%%%%%%%%%%%%%%%%%%%%%%%%%%%%%%%%%%%%%%%%%%%%%%%%%%%%%%%%%%%%%%%%%%%%%%%%%%%%%%%%%
\section{Model}

Consider a haploid population that is structured into $d$ isolated demes. Generations are discrete, non-overlapping, and every generation starts with $N$ individuals in each deme. These produce a large number of offspring which is assumed to be the same for  each individual in the population.  This is followed by
random pairwise interactions between the offspring produced in the same deme. 

Each individual in the population can be of type $S_1$ or type $S_2$, and an offspring inherits the type of its parent. In the case where an offspring of type $S_1$ interacts with another offspring of type $S_1$, both offspring  receive a payoff  $a_{1,1}$. If an offspring  of type $S_1$ interacts with an offspring  of type $S_2$, the former receives a payoff $a_{1,2}$ while the latter receives $a_{2,1}$. Finally, when an offspring of type $S_2$ interacts with another offspring  of type $S_2$, both offspring  receive a payoff $a_{2,2}$.

The payoffs are given by the entries of the game matrix
\begin{equation}\label{sec1-eq1}
\bordermatrix {
& S_1 & S_2 \cr
S_1& a_{1,1} & a_{1,2} \cr
S_2 & a_{2,1} &a_{2,2} \cr}=(a_{k,l}).
\end{equation}
 In this paper, we suppose that the payoffs randomly fluctuate over successive generations in such a way that $|a_{k,l}|<1$ and
\begin{subequations}\label{sec1-eq2}
\begin{align}
&E\left[a_{k,l}\right]=\frac{\mu_{k,l}}{Nd}+o\left(\frac{1}{d}\right),\\
&E\left[a_{k,l}^2\right]=\frac{\sigma^2_{k,l}}{Nd}+o\left(\frac{1}{d}\right),\\
&E\left[|a_{k,l}|^n\right]=o\left(\frac{1}{d}\right),
\end{align}
\end{subequations}
for $k,l=1,2$ and $n\geq 3$. The parameters $\mu_{k,l}$ and $\sigma^2_{k,l}$ are the population-scaled mean and variance of the payoffs, respectively, for $k,l=1,2$. In addition, we suppose that $a_{k_1,l_1}$ and $a_{k_2,l_2}$ are independent random variables as long as $(k_1,l_1)\not=(k_2,l_2)$.
%In addition, we suppose that 
%\begin{equation}\label{sec1-eq3}
%E\left[\prod_{i,j=1}^2 |a_{i,j}|^{r_{i,j}}\right]=o\left(\frac{1}{d}\right)
%\end{equation}
%for any integers $r_{i,j} \geq 0$ such that $\sum_{i,j=1}^2r_{i,j}\geq3$. 

Note that the payoff received by an offspring $I$ can be written as  
\begin{equation}\label{sec1-eq4}
a_{I}=\sum_{k,l=1}^{2}a_{k,l}q_{k,I}q_{l,I'},
\end{equation}
where $I'$ represents a randomly chosen offspring produced in the same deme as $I$, while $q_{k,l}$ and $q_{l,I'}$ denote the frequencies ($0$ or $1$) of $S_k$ and $S_l$, respectively, in $I$ and $I'$, respectively, for $k,l=1, 2$. This payoff is added to the reference value $1$ to obtain the viability of $I$, given by
\begin{equation}\label{sec1-eq5}
\omega_{I}=1+a_{I}.
\end{equation}
This quantity, always positive, is proportional to the survival probability of $I$ from conception to maturity. 

Note that the average viability of all offspring produced by an individual can be interpreted as the reproductive success, or fitness, of this individual. Moreover, the average viability of all offspring produced in a deme is proportional to the number of offspring in the deme after viability selection. It is assumed that a fixed proportion of these offspring go to a migrant pool, while the other proportion stay in the deme where they were produced. 
%Then, each individual will produce a number of offspring proportionally to its fertility rate. 

%This model is equivalent to the following model, where in each generation, any individual produces the same large number of offspring, and the survival of these offspring is determined according to the payoffs received by their parents. Each offspring
%will be of the same strategy as its parent and its viability coefficient is $\omega=1+a$, where $a$ is the payoff received by the parent. 
%

On the other hand, each deme goes to extinction with a fixed probability $m\in (0, 1)$ and, when this occurs, the deme is repopulated by a large number of offspring coming from the migrant pool. Note that the contributions of the demes to the migrant pool being proportional to the average viabilities within demes, this is a case of hard selection (Christiansen \cite{C1975}). Moreover, the limiting case $m=1$ corresponds to a well-mixed population as in Hilbe \cite{H2011}.

Finally, the individuals to start the next generation are obtained by sampling at random $N$ offspring within each deme.
Note that the number of individuals of type $S_1$ to start a deme can vary from $0$ to $N$, leading to a total of $N+1$ possible deme states. A deme is considered to be of type $i$ if it starts with $i$ individuals of type $S_1$ and $N-i$ individuals of type $S_2$, for $i=0,1,\ldots,N$. Note that the initial frequency of $S_1$ in a deme of type $i$ is given by
\begin{align}
x_i=\frac{i}{N},
\end{align}
and this is also the frequency of $S_1$ among all offspring produced in the deme, for $i=0,1,\ldots,N$.

Let $Z_i(t)$ be the fraction of demes of type $i$ at the beginning of generation $t\geq0$. Then, the population state is described by the frequency vector $\mathbf{Z}(t)=(Z_0(t),Z_1(t),\ldots,Z_N(t))$, where $\sum_{i=0}^{N}Z_i(t)=1$.
Note that
\begin{equation}\label{sec1-eq6}
X(t)=\sum_{i=0}^{N}Z_i(t)x_i
\end{equation}
is the frequency of type $S_1$ in the whole population at the beginning of generation $t\geq0$. 

Suppose that the initial population state is $\mathbf{Z}(0)=\mathbf{z}=(z_0,z_1,\ldots,z_N)$ with corresponding frequency of type $S_1$ in the population given by
\begin{equation}\label{sec1-eq7}
X(0)=x=\sum_{i=0}^{N}z_ix_i.
\end{equation} 
%Note that 
%\begin{equation}\label{sec1-eq8}
%x=E_{\mathbf{z}}\left( q_{1,I}\right),
%\end{equation} 
%where $I$ is an individual chosen at random from the population at the beginning of generation $0$.
Following reproduction and random pairwise interactions, the  average payoffs received by an offspring of type $S_k$ and by offspring of all types in a deme of type $i$ are 
 \begin{subequations}\label{sec1-eq9}
 \begin{align}
\tilde{a}_{k,i}&=x_ia_{k,1}+\left(1-x_i\right)a_{k,2},\\
\tilde{a}_{i}&=x_i\tilde{a}_{1,i}+\left(1-x_i\right)\tilde{a}_{2,i},
\end{align}
\end{subequations}
respectively, and the corresponding average viabilities are
\begin{subequations}\label{sec1-eq10}
 \begin{align}
\tilde{\omega}_{k,i}&=1+\tilde{a}_{k,i},\\
\tilde{\omega}_{i}&=x_i\tilde{\omega}_{1,i}+\left(1-x_i\right)\tilde{\omega}_{2,i}=1+\tilde{a}_{i},
\end{align}
\end{subequations}
respectively, for $k=1, 2$ and $i=0,1, \ldots, N$. Therefore, the frequency
 of type $S_1$ among offspring after viability selection in a deme of type $i$ can be expressed as
\begin{equation}\label{sec1-eq11}
\tilde{x}_i=\frac{x_i\tilde{\omega}_{1,i}}{\tilde{\omega}_{i}},
\end{equation}
for $i=0,1, \ldots, N$. 

Note that the weighted frequency of $S_1$ among all offspring after viability selection is
\begin{equation}\label{sec1-eq12}
\tilde{x}=\sum_{i=0}^{N}z_i\tilde{x}_i=\sum_{i=0}^{N}z_ix_i\frac{\tilde{\omega}_{1,i}}{\tilde{\omega}_{i}}
\end{equation}
if all demes have the same weight, but 
\begin{equation}\label{sec1-eq15}
\tilde{\tilde{x}}=\sum_{i=0}^{N}z_i\tilde{x}_{i}\frac{\tilde{\omega}_{i}}{\tilde{\omega}}=\sum_{i=0}^{N}z_ix_i\frac{\tilde{\omega}_{1,i}}{\tilde{\omega}_{}}
\end{equation}
where
\begin{equation}\label{sec1-eq13}
\tilde{\omega}=\sum_{i=0}^{N}z_i\tilde{\omega}_{i}=1+\sum_{i=0}^{N}z_i\tilde{a}_{i}=1+\tilde{a},
\end{equation}
if the weights of the demes are given by the average viabilities of the offspring within demes. The latter is the frequency of $S_1$ in the migrant pool. Therefore, this is the frequency of $S_1$ among offspring after recolonization in a deme of type $i$ that goes extinct, which occurs with probability $m$. In such a case, the number of offspring of type $S_1$ drawn at random in that deme to start the next generation follows a binomial distribution with parameters $N$ and $\tilde{\tilde{x}}$. Otherwise, with probability $1-m$, the parameters are $N$ and $\tilde{x}_i$, for $i=0, 1, \ldots, N$.

%In case of extinction, which occurs with probability $m$, a deme will be recolonized by offspring coming from the migration pool. 
%Demes of type $i$ will contribute in average $z_i\tilde{\omega}_{i}/\tilde{\omega}$ in the vacant deme, with
%\begin{equation}\label{sec1-eq13}
%\tilde{\omega}=\sum_{i=0}^{N}z_i\tilde{\omega}_{i}=1+\tilde{a}
%\end{equation}
%is the mean viability in the whole population and
%\begin{equation}\label{sec1-eq14}
%\tilde{a}=\sum_{i=0}^{N}z_i\tilde{a}_{i}
%\end{equation} is the mean payoff in the whole population.
%Then, the average frequency of type $S_1$ in the vacant deme after recolonization is 
%\begin{equation}\label{sec1-eq15}
%\tilde{\tilde{x}}=\sum_{i=0}^{N}z_i\tilde{x}_{i}\frac{\tilde{\omega}_{i}}{\tilde{\omega}}.
%\end{equation}

%The new frequency of type $S_1$ in any deme of type $i$ at generation $0$, taking into account the possible scenario of local extinction and uniform recolonization, is transformed according to 
%\begin{equation}\label{sec1-eq16}
%\tilde{\tilde{x}}_i=(1-m)\tilde{x}_i+m\tilde{\tilde{x}},
%\end{equation}
%for $i=0,1,\ldots,N$. 
%From Eq. (\ref{sec1-eq16}), it is clear that the new frequency is the result of a competition within each deme measured by $\tilde{x}_i$ and weighted by $1-m$, and a competition between demes measured by $\tilde{\tilde{x}}$ and weighted by $m$.
%The next generation, denoted by $t=1$, is produced by randomly selecting $N$ individuals within each deme. 
We conclude that a deme that was of type $i$ at the beginning of generation $0$ has a probability 
\begin{equation}\label{sec1-eq17}
P_{ij}(\mathbf{z})=(1-m)\binom{N}{j}\left(\tilde{x}_i\right)^j\left(1-\tilde{x}_i\right)^{N-j}+m\binom{N}{j}\tilde{\tilde{x}}^j\left(1-\tilde{\tilde{x}}\right)^{N-j}
\end{equation}
to become a deme of type $j$ at the beginning of generation $1$, for $i, j= 0, 1, \ldots, N$.

%%%%%%%%%%%%%%%%%%%%%%%%%%%%%%%%%%%%%%%%%%%%%%%%%%%%%%%%%%%%%%%%%%%%%%%%%%%%%%%%%%%%%%%%%%%%%%%%
%%%%%%%%%%%%%%%%%%%%%%%%%%%%%%%%%%%%%%%%%%%%%%%%%%%%%%%%%%%%%%%%%%%%%%%%%%%%%%%%%%%%%%%%%%%%%%%%

\section{Diffusion approximation}
In this section, we will derive a diffusion approximation that relies on the existence of two timescales in the stochastic process. The first timescale is long and pertains to the changes in the individual type frequencies. The second timescale, that will be described later on, is short and relates to the changes in the deviations of the deme type frequencies from their equilibrium values in a neutral population subdivided into an infinite number of demes.

Let $Z_{ij}(1)$ be the proportion of demes of type $j$ at the beginning of generation $1$ that were of type $i$ at the beginning of generation $0$. Note that  the variables $Z_{i_1j_1}(1)$ and $Z_{i_2j_2}(1)$ are independent for all $i_2,j_2=0,1,\ldots,N$ as long as $i_1\neq i_2$. Moreover, given $\mathbf{Z}(0)=\mathbf{z}$ and the payoff matrix $(a_{k, l})$, the vector $dz_i\Big(Z_{i0}(1),Z_{i1}(1),\ldots,Z_{iN}(1)\Big)$ follows a conditional multinomial distribution with parameters $dz_i$ and $\Big(P_{i0}(\mathbf{z}),P_{i1}(\mathbf{z}),\ldots,P_{iN}(\mathbf{z})\Big)$, for $i=0, 1, \ldots, N$.
Therefore, we have
\begin{subequations}\label{sec2-eq1}
\begin{align}
&E_{\mathbf{z}}\left[ Z_{ij}(1))\Big|(a_{k,l})\right]=P_{ij}(\mathbf{z}),\\
&Var_{\mathbf{z}}\left[ Z_{ij}(1))\Big|(a_{k,l})\right]=\frac{P_{ij}(\mathbf{z})\left(1-P_{ij}(\mathbf{z})\right)}{dz_i},\\
&Cov_{\mathbf{z}}\left(Z_{ij_1}(1)),Z_{ij_2}(\mathbf{z}))\Big|(a_{k,l})\right)=-\frac{P_{ij_1}(\mathbf{z})P_{ij_2}(\mathbf{z})}{dz_i},
\end{align}
\end{subequations}
for $j_1\not= j_2$. Here, we use $E_{\mathbf{z}}$, $Var_{\mathbf{z}}$, and $Cov_{\mathbf{z}}$ to represent the conditional expectation, variance, and covariance, respectively, given that $\mathbf{Z}(0)=\mathbf{z}$.

Note that, for the fraction of demes of type $j$ at the beginning of generation $1$ given by
\begin{equation}\label{sec2-eq2}
Z_j(1)=\sum_{i=0}^{N}z_iZ_{ij}(1),
\end{equation}
we have
\begin{subequations}\label{sec2-eq3}
\begin{align}
&E_{\mathbf{z}}\left[Z_j(1)\Big|(a_{k,l})\right]=\sum_{i=0}^{N}z_iP_{ij}(\mathbf{z}),\\
%%%%
%%%%
&Var_{\mathbf{z}}\left(Z_j(1)\Big|(a_{k,l})\right)
=\frac{1}{d}\sum_{i=0}^{N}z_iP_{ij}(\mathbf{z})\left(1-P_{ij}(\mathbf{z})\right),\\
%%%%
%%%%
&Cov_{\mathbf{z}}\left(Z_{j_1}(1),Z_{j_2}(1)\Big|(a_{k,l})\right)
=-\frac{1}{d}\sum_{i=0}^{N}z_iP_{ij_1}(\mathbf{z})P_{ij_2}(\mathbf{z}),
\end{align}
\end{subequations}
for $j, j_1, j_2=0, 1, \ldots, N$ with $j_1\ne j_2$.

The first three propositions below address changes in the individual type frequencies. Throughout the paper, $I$ and $J$ will be used to designate randomly chosen offspring produced in demes chosen at random and independently before viability selection and recolonization. For offspring produced in the same deme chosen at random, the notation $I_1$, $I_2$, etc. will be used, and similarly $J_1$, $J_2$, etc. for offspring produced in the same deme chosen at random and independently from the former deme. 
Moreover, we introduce the population-scaled parameters 
\begin{subequations}\label{sec2-eq4}
\begin{align}
\mu_{I_1}&=\sum_{k,l=1}^{2}q_{k,I_1}q_{l,I_2}\mu_{k,l},\\
\sigma_{I_1,I_2}&=\sum_{k,l=1}^{2}q_{k,I_1}q_{k,I_2}q_{l,I_3}q_{l,I_4}\sigma^2_{k,l},\\
\sigma_{I_1,J_1}&=\sum_{k,l=1}^{2}q_{k,I_1}q_{k,J_1}q_{l,I_2}q_{l,J_2}\sigma^2_{k,l}.
\end{align}
\end{subequations}
On the other hand, note that
\begin{subequations}\label{sec1-eq8}
\begin{align}
&E_{\mathbf{z}}\left( q_{1,I_1}\right)=x,\\
&Var_{\mathbf{z}}(q_{1,I_1})=x-x^2,\\
&Cov_{\mathbf{z}}\left(q_{1,I_1},q_{1,I_2}\right)=\sum_{i=0}^{N}z_ix^2_i-x^2,
\end{align} 
\end{subequations}
where $I_1$ and $I_2$ are two randomly chosen offspring produced in the same deme at the beginning of generation $0$.
%where $I$ is an individual chosen at random from the population at the beginning of generation $0$.
%where $I'$ and $J'$ denote two offspring randomly selected in the same demes that include $I$ and and $J$, respectively. 

Our first result concern the expected change in the frequency of of $S_1$.
%%%%%%%%
%%%%%%%%

\begin{Proposition}\label{Proposition1}
The first conditional moment of the change in the frequency of $S_1$ in the whole population from the beginning of generation $0$ to the beginning of generation $1$ is 
\begin{equation}\label{sec2-eq5}
E_{\mathbf{z}}\left[X(1)-X(0)\right]=\frac{1}{Nd}M(\mathbf{z})+o\left(d^{-1}\right),
\end{equation}
where
\begin{align}\label{sec2-eq6}
M(\mathbf{z})&=Cov_{\mathbf{z}}(q_{1,I_1},\mu_{I_1})+(1-m)\Big[Cov_{\mathbf{z}}(q_{1,I_1},\sigma_{I_2,I_3})-Cov_{\mathbf{z}}(q_{1,I_1},\mu_{I_2})-Cov_{\mathbf{z}}(q_{1,I_1},\sigma_{I_1,I_2})\Big]\nonumber\\
&\quad-mCov_{\mathbf{z}}(q_{1,I_1},\sigma_{I_1,J_1}).
\end{align}
\end{Proposition}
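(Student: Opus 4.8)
The plan is to compute $E_{\mathbf{z}}[X(1)-X(0)]$ by first conditioning on the payoff matrix $(a_{k,l})$ and on $\mathbf{Z}(0)=\mathbf{z}$, then averaging over the payoffs, and finally extracting the $O(d^{-1})$ term. Since $X(1)=\sum_j x_j Z_j(1)$, equation \eqref{sec2-eq3} gives
\begin{equation*}
E_{\mathbf{z}}\left[X(1)\Big|(a_{k,l})\right]=\sum_{i,j=0}^{N}z_i x_j P_{ij}(\mathbf{z})=(1-m)\sum_{i=0}^{N}z_i\tilde{x}_i+m\tilde{\tilde{x}},
\end{equation*}
using that $\sum_j x_j\binom{N}{j}p^j(1-p)^{N-j}=p$ for the binomial mean. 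Subtracting $X(0)=x=\sum_i z_i x_i$ and recalling $\tilde{x}=\sum_i z_i\tilde{x}_i$, the conditional expected change is $(1-m)(\tilde{x}-x)+m(\tilde{\tilde{x}}-x)$. So the whole problem reduces to a Taylor expansion of $\tilde{x}-x$ and $\tilde{\tilde{x}}-x$ in the payoffs $a_{k,l}$, followed by taking expectations using \eqref{sec1-eq2}.

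The key step is the second-order expansion. Writing $\tilde{x}_i=x_i\tilde\omega_{1,i}/\tilde\omega_i$ with $\tilde\omega_{1,i}=1+\tilde a_{1,i}$ and $\tilde\omega_i=1+\tilde a_i$, and expanding $(1+\tilde a_i)^{-1}=1-\tilde a_i+\tilde a_i^2-\cdots$, one gets
\begin{equation*}
\tilde{x}_i-x_i=x_i\big(\tilde a_{1,i}-\tilde a_i\big)-x_i\tilde a_i\big(\tilde a_{1,i}-\tilde a_i\big)+\cdots=x_i\big(\tilde a_{1,i}-\tilde a_i\big)-x_i\tilde a_{1,i}\tilde a_i+x_i\tilde a_i^2+\cdots,
\end{equation*}
and similarly for $\tilde{\tilde{x}}=\sum_i z_i x_i\tilde\omega_{1,i}/\tilde\omega$ with $\tilde\omega=1+\tilde a=1+\sum_i z_i\tilde a_i$. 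Because each $a_{k,l}$ has mean $O((Nd)^{-1})$, second moment $O((Nd)^{-1})$, and higher moments $o(d^{-1})$, the linear terms contribute via $E[a_{k,l}]=\mu_{k,l}/(Nd)$ and the quadratic terms via $E[a_{k,l}^2]=\sigma^2_{k,l}/(Nd)$ (cross terms $E[a_{k_1,l_1}a_{k_2,l_2}]$ vanish to this order by independence), so both sources land at order $(Nd)^{-1}$ and everything beyond is $o(d^{-1})$. The remaining work is bookkeeping: group the resulting sums over $i$ and over $k,l$ and recognize them as the covariance expressions in \eqref{sec2-eq6}. For instance, $\sum_i z_i x_i(\tilde a_{1,i}-\tilde a_i)$ with $\tilde a_{1,i}-\tilde a_i=(1-x_i)(\tilde a_{1,i}-\tilde a_{2,i})$ becomes, after taking expectations, a difference $E_{\mathbf{z}}[q_{1,I_1}\mu_{I_1}]-E_{\mathbf{z}}[q_{1,I_1}]E_{\mathbf{z}}[\mu_{I_1}]=Cov_{\mathbf{z}}(q_{1,I_1},\mu_{I_1})$, once one identifies $\mu_{I_1}=\sum_{k,l}q_{k,I_1}q_{l,I_2}\mu_{k,l}$ and uses that $I_2$ is an independent offspring from the same deme so that averaging $q_{l,I_2}$ over a deme of type $i$ produces the factor $x_i$ or $1-x_i$; the definitions in \eqref{sec2-eq4} and \eqref{sec1-eq8} are designed precisely so these sums collapse into covariances. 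The quadratic terms involving $\tilde a_i^2=(\sum_k x\text{-weighted }a)^2$ generate the $\sigma_{I_1,I_2}$ and $\sigma_{I_2,I_3}$ pieces (same-deme sampling), while the $\tilde a$-cross-deme term in $\tilde{\tilde{x}}$ yields $\sigma_{I_1,J_1}$.

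The main obstacle I anticipate is the combinatorial matching of indices: keeping straight which quadratic cross-product of payoffs corresponds to two offspring in the same deme ($\sigma_{I_1,I_2}$ or $\sigma_{I_2,I_3}$, depending on whether one of the offspring is the focal individual $I_1$) versus two offspring in independently chosen demes ($\sigma_{I_1,J_1}$, arising only from the $m$-weighted recolonization term through $\tilde\omega=\sum_i z_i\tilde\omega_i$), and making sure the signs are right after expanding the denominators. One must be careful that $\tilde a_i$ appears both "with" the focal type (as in $x_i\tilde a_{1,i}\tilde a_i$, pairing the focal offspring's partner with a second random same-deme offspring) and "alone" (as in $x_i\tilde a_i^2$), and that these combine with the linear correction $-Cov_{\mathbf{z}}(q_{1,I_1},\mu_{I_2})$ coming from expanding $\tilde a$ once in $\tilde{\tilde{x}}$. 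I would organize this by writing $\tilde a_{1,i}=\sum_k q_{k}$-free combinations explicitly, pushing the deme-average over an independent same-deme offspring inside, and only at the very end replacing $\sum_i z_i(\cdots)$ by $E_{\mathbf{z}}(\cdots)$ and rewriting products-of-averages-minus-average-of-products as covariances; the details of this last reduction are routine once the index conventions are fixed, so I would relegate them to an appendix as the authors indicate.
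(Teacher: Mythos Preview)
Your proposal is correct and follows essentially the same route as the paper: condition on the payoffs to obtain $E_{\mathbf{z}}[X(1)-X(0)\mid(a_{k,l})]=(1-m)\tilde{x}+m\tilde{\tilde{x}}-x$, expand $\tilde{x}$ and $\tilde{\tilde{x}}$ to second order in the $a_{k,l}$ (this is exactly the content of the paper's Lemmas~1 and~2), and then rewrite the resulting deme sums as covariances.

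One small correction to your bookkeeping sketch, precisely in the area you flagged as delicate: the term $-Cov_{\mathbf{z}}(q_{1,I_1},\mu_{I_2})$ does \emph{not} come from expanding $\tilde a$ in $\tilde{\tilde{x}}$. It arises from the linear part of $\tilde{x}$, namely $-\sum_i z_i x_i\,\tilde a_i$, which equals $-E_{\mathbf{z}}[q_{1,I_1}a_{I_2}]$ with $I_2$ a random offspring in the \emph{same} deme as $I_1$; this is why the term carries the factor $(1-m)$ in $M(\mathbf{z})$. By contrast, the linear part of $\tilde{\tilde{x}}-x$ is $\sum_i z_i x_i\,\tilde a_{1,i}-x\tilde a$, which is already $Cov_{\mathbf{z}}(q_{1,I_1},a_{I_1})$ on the nose (since $x\tilde a=E_{\mathbf{z}}[q_{1,I_1}]E_{\mathbf{z}}[a_{I_1}]$), so no $\mu_{I_2}$ appears there. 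Getting this attribution right is what makes the $(1-m)$ and $m$ coefficients fall into place.
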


%%%%%%%%%
%%%%%%%%%
In order to prove this proposition, we will use two  technical lemmas whose proofs are relegated to Appendices A and B.

\begin{lemma}\label{lemma1}
%The first order approximation of the average frequency of type $S_1$ in the population after reproduction but before extinction and recolonization is
For the uniformly weighted frequency of $S_1$ after viability selection within demes given in Eq. (\ref{sec1-eq12}), we have
\begin{align}\label{sec2-eq7}
E_{\mathbf{z}}\left[\tilde{x}\right]&=x+\frac{1}{Nd}\Big[Cov_{\mathbf{z}}(q_{1,I_1},\mu_{I_1})+Cov_{\mathbf{z}}(q_{1,I_1},\sigma_{I_2,I_3})-Cov_{\mathbf{z}}(q_{1,I_1},\mu_{I_2})\nonumber\\
&\quad\quad\quad\quad\quad-Cov_{\mathbf{z}}(q_{1,I_1},\sigma_{I_1,I_2})\Big]+o(d^{-1}).
\end{align}
%where $I_1$, $I_2$, and $I_3$ are three offspring randomly selected in the same deme before extinction and recolonization.
\end{lemma}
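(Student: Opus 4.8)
The plan is to expand the uniformly weighted post‑selection frequency $\tilde{x}=\sum_{i=0}^{N}z_i x_i\tilde{\omega}_{1,i}/\tilde{\omega}_{i}$ from (\ref{sec1-eq12}) in powers of the payoff entries $a_{k,l}$, to keep the terms up to second order, and to argue that the remainder contributes only $o(d^{-1})$ to the conditional expectation given $\mathbf{z}$. Since $|a_{k,l}|<1$, each $\tilde{a}_{k,i}$ and $\tilde{a}_{i}$ in (\ref{sec1-eq9}) is a convex combination of the $a_{k,l}$ with modulus strictly less than $1$, so $\tilde{\omega}_{i}=1+\tilde{a}_{i}$ stays bounded away from $0$ and $1/\tilde{\omega}_{i}$ has a convergent geometric expansion. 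Writing $\tilde{x}_i=x_i(1+\tilde{a}_{1,i})/(1+\tilde{a}_{i})$ and expanding gives
\begin{equation*}
\tilde{x}_i=x_i+x_i\big(\tilde{a}_{1,i}-\tilde{a}_{i}\big)+x_i\big(\tilde{a}_{i}^{2}-\tilde{a}_{1,i}\tilde{a}_{i}\big)+x_i R_i,\qquad R_i=\frac{\tilde{a}_{i}^{2}\big(\tilde{a}_{1,i}-\tilde{a}_{i}\big)}{1+\tilde{a}_{i}},
\end{equation*}
where $R_i$, once expanded, is a sum of terms each carrying a product of at least three payoff entries counted with multiplicity.

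The next step is to discard the negligible contributions. The entries $a_{k,l}$ with distinct $(k,l)$ are independent, $E[|a_{k,l}|^{n}]=o(d^{-1})$ for $n\ge3$, and $E[|a_{k,l}|]$, $E[a_{k,l}^{2}]$ are $O(d^{-1})$ by (\ref{sec1-eq2}); hence the expectation of any monomial in the $a_{k,l}$ of total degree at least three is $o(d^{-1})$. This yields $E_{\mathbf{z}}[x_i R_i]=o(d^{-1})$, and it also permits dropping, inside $E_{\mathbf{z}}[\tilde{a}_{i}^{2}-\tilde{a}_{1,i}\tilde{a}_{i}]$, every cross product $a_{k,l}a_{k',l'}$ with $(k,l)\neq(k',l')$ --- each being $O(d^{-2})$ --- so that only the squares survive, with $E[a_{k,l}^{2}]=\sigma^2_{k,l}/(Nd)+o(d^{-1})$. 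Similarly, $E_{\mathbf{z}}[\tilde{a}_{1,i}-\tilde{a}_{i}]$ reduces to a linear combination of the $\mu_{k,l}/(Nd)$. Substituting $x_i=i/N$, one obtains $E_{\mathbf{z}}[\tilde{x}]-x$ as $(Nd)^{-1}$ times a sum $\sum_{i=0}^{N}z_i$ of explicit polynomials in $x_i$, with coefficients linear in the $\mu_{k,l}$ and the $\sigma^2_{k,l}$, plus $o(d^{-1})$.

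The last step is to recognize these polynomials as the covariances appearing in (\ref{sec2-eq7}). Using the sampling scheme --- a deme is of type $i$ with probability $z_i$ and, within such a deme, independently drawn offspring carry $S_1$ with probability $x_i$ --- together with the definitions (\ref{sec2-eq4}) of $\mu_{I_1}$, $\sigma_{I_1,I_2}$ and $\sigma_{I_2,I_3}$, one identifies $Nd\sum_{i=0}^{N}z_i x_i E_{\mathbf{z}}[\tilde{a}_{1,i}]$, $Nd\sum_{i=0}^{N}z_i x_i E_{\mathbf{z}}[\tilde{a}_{i}]$, $Nd\sum_{i=0}^{N}z_i x_i E_{\mathbf{z}}[\tilde{a}_{i}^{2}]$ and $Nd\sum_{i=0}^{N}z_i x_i E_{\mathbf{z}}[\tilde{a}_{1,i}\tilde{a}_{i}]$, to leading order, with $E_{\mathbf{z}}[q_{1,I_1}\mu_{I_1}]$, $E_{\mathbf{z}}[q_{1,I_1}\mu_{I_2}]$, $E_{\mathbf{z}}[q_{1,I_1}\sigma_{I_2,I_3}]$ and $E_{\mathbf{z}}[q_{1,I_1}\sigma_{I_1,I_2}]$, respectively. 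The one delicate point is that in $q_{1,I_1}\sigma_{I_1,I_2}$ the offspring $I_1$ occurs in both factors, so $q_{1,I_1}q_{k,I_1}$ equals $q_{1,I_1}$ when $k=1$ and vanishes otherwise; this collapse is exactly why the monomial pattern of $E_{\mathbf{z}}[\tilde{a}_{1,i}\tilde{a}_{i}]$ differs from that of $E_{\mathbf{z}}[\tilde{a}_{i}^{2}]$. Converting the resulting products of expectations into covariances by means of $E_{\mathbf{z}}[q_{1,I_1}]=x$ from (\ref{sec1-eq8}) and the exchangeability identities $E_{\mathbf{z}}[\mu_{I_1}]=E_{\mathbf{z}}[\mu_{I_2}]$ and $E_{\mathbf{z}}[\sigma_{I_1,I_2}]=E_{\mathbf{z}}[\sigma_{I_2,I_3}]$ makes the spurious mean‑by‑mean terms cancel within the two matched pairs, and (\ref{sec2-eq7}) follows.

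I expect the principal obstacle to be the combinatorial bookkeeping in the last step: tracking which of the auxiliary offspring $I_1,I_2,I_3$ coincide and which are sampled independently, and matching each monomial $x_i^{p}(1-x_i)^{q}$ to the correct covariance with the correct labels. Bounding the remainder $R_i$ and the cross products uniformly is routine once $|a_{k,l}|<1$ and the vanishing of the third and higher moments in (\ref{sec1-eq2}) are invoked.
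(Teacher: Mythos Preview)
Your proposal is correct and follows essentially the same route as the paper's proof in Appendix~A: a second-order Taylor/geometric expansion of $x_i(1+\tilde a_{1,i})/(1+\tilde a_i)$, a remainder bound via the moment assumptions in (\ref{sec1-eq2}), and identification of the resulting deme sums with the covariances in (\ref{sec2-eq7}). The only cosmetic difference is the order of operations --- the paper first rewrites $\sum_i z_i x_i E_{\mathbf z}[\tilde a_{1,i}-\tilde a_i]$ and $\sum_i z_i x_i E_{\mathbf z}[(\tilde a_{1,i}-\tilde a_i)\tilde a_i]$ as covariances with the random payoffs $a_{I_1},a_{I_2},a_{I_1}a_{I_2},a_{I_2}a_{I_3}$ and only then passes to the scaled parameters $\mu$ and $\sigma^2$, whereas you take expectations first and recognise the covariances afterwards via exchangeability --- but the substance is identical.
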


%%%%%%

%%%%%%%%%
%%%%%%%%%
\begin{lemma}\label{lemma2}
%The average frequency of type $S_1$ in a deme that has been extinguished and uniformly recolonized can be written as
For the frequency of $S_1$ in the migrant pool given in Eq. (\ref{sec1-eq15}), we have
\begin{equation}\label{sec2-eq12}
E_{\mathbf{z}}\left[\tilde{\tilde{x}}\right]=x+\frac{1}{Nd}\Big[Cov_{\mathbf{z}}(q_{1,I_1},\mu_{I_1})-Cov_{\mathbf{z}}(q_{1,I_1},\sigma_{I_1,J_1})\Big]+o\left(d^{-1}\right).
\end{equation}
%where $I_1$ and $J_1$ are two individuals randomly chosen in the population before reproduction.
\end{lemma}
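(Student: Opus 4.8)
The plan is to expand the migrant-pool frequency directly from its definition in Eq.~(\ref{sec1-eq15}). Writing $\tilde{\tilde{x}}=\sum_{i=0}^{N}z_ix_i(1+\tilde{a}_{1,i})/(1+\tilde{a})$ with $\tilde{a}_{1,i}$ and $\tilde{a}$ as in Eqs.~(\ref{sec1-eq9})--(\ref{sec1-eq13}), I would first note that the denominator $1+\tilde{a}=\tilde{\omega}$ is the \emph{same} for every deme, since the payoff matrix $(a_{k,l})$ is shared by the whole population in a given generation; writing $\tilde{a}=\sum_{k,l}A_{k,l}a_{k,l}$ shows that its coefficients are the nonnegative deme-averaged pair weights $A_{k,l}$, which sum to one, so $|\tilde{a}|\le\max_{k,l}|a_{k,l}|<1$ and the geometric expansion $(1+\tilde{a})^{-1}=1-\tilde{a}+\tilde{a}^2-\cdots$ is valid. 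This gives
\[
\frac{1+\tilde{a}_{1,i}}{1+\tilde{a}}=1+\tilde{a}_{1,i}-\tilde{a}-\tilde{a}_{1,i}\tilde{a}+\tilde{a}^2+R_i,
\]
where $R_i$ collects all monomials that are products of at least three of the payoffs $a_{k,l}$. Taking $E_{\mathbf{z}}$ of $\sum_i z_i x_i$ times this expansion, and using the independence of the $a_{k,l}$ with $E[a_{k,l}]=\mu_{k,l}/(Nd)+o(d^{-1})$, every product of two distinct payoffs contributes $O(d^{-2})=o(d^{-1})$, so the linear part reduces to $\sum_i z_i x_i E_{\mathbf{z}}[\tilde{a}_{1,i}]-xE_{\mathbf{z}}[\tilde{a}]$ at order $d^{-1}$. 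Collecting this with $A_{k,l}=E_{\mathbf{z}}[q_{k,I_1}q_{l,I_2}]$ and $q_{1,I_1}q_{k,I_1}=\delta_{k,1}q_{1,I_1}$, and comparing with the definition of $\mu_{I_1}$ in Eq.~(\ref{sec2-eq4}), I would identify the linear contribution as exactly $(Nd)^{-1}Cov_{\mathbf{z}}(q_{1,I_1},\mu_{I_1})$.

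The quadratic part, coming from $xE_{\mathbf{z}}[\tilde{a}^2]-\sum_i z_i x_i E_{\mathbf{z}}[\tilde{a}_{1,i}\tilde{a}]$, is the conceptual heart and is where the independent-deme measure $\sigma_{I_1,J_1}$ enters. Since $E[a_{k,l}^2]=\sigma^2_{k,l}/(Nd)+o(d^{-1})$ while the cross moments $E[a_{k,l}a_{k',l'}]$ with $(k,l)\ne(k',l')$ are $o(d^{-1})$, only diagonal second moments survive, producing coefficients $A_{k,l}^2$. The key observation is that, by the independence of the two randomly sampled demes, $A_{k,l}^2=E_{\mathbf{z}}[q_{k,I_1}q_{l,I_2}]\,E_{\mathbf{z}}[q_{k,J_1}q_{l,J_2}]=E_{\mathbf{z}}[q_{k,I_1}q_{k,J_1}q_{l,I_2}q_{l,J_2}]$, which is precisely the indicator product appearing in $\sigma_{I_1,J_1}$ of Eq.~(\ref{sec2-eq4}). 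Thus the self-pairing of the deme-averaged quantity $\tilde{a}$ becomes the two--independent--deme identity-by-type measure; this is why $\sigma_{I_1,J_1}$ appears here, in contrast with the same-deme $\sigma_{I_1,I_2}$ of Lemma~\ref{lemma1}. Carrying this out gives $xE_{\mathbf{z}}[\tilde{a}^2]=(Nd)^{-1}E_{\mathbf{z}}[q_{1,I_1}]\,E_{\mathbf{z}}[\sigma_{I_1,J_1}]$ and $\sum_i z_i x_i E_{\mathbf{z}}[\tilde{a}_{1,i}\tilde{a}]=(Nd)^{-1}E_{\mathbf{z}}[q_{1,I_1}\sigma_{I_1,J_1}]$, so the quadratic contribution equals $-(Nd)^{-1}Cov_{\mathbf{z}}(q_{1,I_1},\sigma_{I_1,J_1})$, completing the identification of the two covariance terms in Eq.~(\ref{sec2-eq12}).

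The main obstacle is to show $E_{\mathbf{z}}[\sum_i z_i x_i R_i]=o(d^{-1})$. On the good region this is routine: each monomial in $R_i$ is a product of at least three factors $a_{k,l}$, and by independence together with $E[|a_{k,l}|^n]=o(d^{-1})$ for $n\ge 3$ and $E[|a_{k,l}|]\le(E[a_{k,l}^2])^{1/2}=O(d^{-1/2})$, every such expectation is $o(d^{-1})$. The delicate point is that the geometric remainder carries a factor $(1+\tilde{a})^{-1}$ that is not bounded away from $0$ uniformly. I would handle this by splitting on the event $\{\max_{k,l}|a_{k,l}|\le 1/2\}$, on which $1+\tilde{a}\ge 1/2$ makes the remainder bound polynomial in the $|a_{k,l}|$, hence $o(d^{-1})$ by the previous estimate; on the complementary event I would use that $\tilde{\tilde{x}}\in[0,1]$ and that the explicit approximant is bounded (because $|a_{k,l}|<1$), so that contribution is at most a constant times $P(\max_{k,l}|a_{k,l}|>1/2)$, which by Markov's inequality is at most $8\sum_{k,l}E[|a_{k,l}|^3]=o(d^{-1})$. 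Combining the linear, quadratic, and remainder estimates yields Eq.~(\ref{sec2-eq12}).
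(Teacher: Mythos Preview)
Your proof is correct and follows essentially the same approach as the paper: expand $(1+\tilde{a})^{-1}$ to second order, identify the linear piece with $Cov_{\mathbf{z}}(q_{1,I_1},\mu_{I_1})$ and the quadratic piece with $-Cov_{\mathbf{z}}(q_{1,I_1},\sigma_{I_1,J_1})$, and show the cubic remainder is $o(d^{-1})$. The paper packages the quadratic part as $\sum_i z_i x_i E_{\mathbf{z}}[(\tilde a_{1,i}-\tilde a)\tilde a]=Cov_{\mathbf{z}}(q_{1,I_1},a_{I_1}a_{J_1})$ and then invokes Eq.~(\ref{sec1-eq2}), whereas you expand the product in the $a_{k,l}$ directly and spell out why $A_{k,l}^2=E_{\mathbf{z}}[q_{k,I_1}q_{k,J_1}q_{l,I_2}q_{l,J_2}]$; these are the same computation. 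Your splitting argument on $\{\max_{k,l}|a_{k,l}|\le 1/2\}$ for the remainder is in fact more careful than the paper, which uses the Lagrange form $\tilde a^{3}/(1+\xi)^{3}$ without explicitly bounding $(1+\xi)^{-3}$.
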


%%%%%%%%%
%%%%%%%%%

%%%%%%%%%
\begin{proof}[Proof of Proposition \ref{Proposition1}] 
Note first that the probability distribution given by $(P_{ij}(\mathbf{z}))_{j=0}^{N}$ is a mixture of a binomial distribution with parameters $N$ and $\tilde{x}_i$ and a binomial distribution with parameters $N$ and $\tilde{\tilde{x}}$ whose weights are $1-m$ and $m$, respectively. Therefore, we have
\begin{equation}\label{sec2-eq15}
\sum_{j=0}^{N}jP_{ij}(\mathbf{z})=(1-m)N\tilde{x}_i+mN\tilde{\tilde{x}}.
\end{equation}
Using this identity and Eq. (\ref{sec2-eq3}), we find that
\begin{align}\label{sec2-eq16}
E_{\mathbf{z}}\left[X(1)-X(0)\Big|(a_{k,l})\right]&=\sum_{j=0}^{N}x_jE_{\mathbf{z}}\left[Z_{j}(1)\Big|(a_{k,l})\right]-x\nonumber\\
&=\sum_{i=0}^{N}z_i\sum_{j=0}^{N}x_jP_{ij}(\mathbf{z})-x\nonumber\\
&=(1-m)\tilde{x}+m\tilde{\tilde{x}}-x.
\end{align}
Taking the expected value and using Lemmas 1 and 2 lead  to Proposition \ref{Proposition1}.
\end{proof}

%%%%%%%%%
Our second result concerns the variance of the change in the frequency of $S_1$.
%%%%%%%%%

\begin{Proposition}\label{Proposition2}
The conditional variance of the change in the frequency of $S_1$ in the whole population from the beginning of generation $0$ to the beginning of generation $1$ is  
\begin{equation}\label{sec2-eq17}
Var_{\mathbf{z}}\left[X(1)-X(0)\right]=\frac{1}{Nd}Q(\mathbf{z})+o\left(d^{-1}\right),
\end{equation}
where
\begin{align}\label{sec2-eq18}
Q(\mathbf{z})&=Var_{\mathbf{z}}(q_{1,I_1})+(1-m)(Nm-1)Cov_{\mathbf{z}}(q_{1,I_1},q_{1,I_2})+m^2Cov_{\mathbf{z}}(q_{1,I_1}q_{1,J_1},\sigma_{I_1,J_1})\nonumber\\
&\quad-2mx\,Cov_{\mathbf{z}}(q_{1,I_1},\sigma_{I_1,J_1})
+(1-m^2)Cov_{\mathbf{z}}(q_{1,I_1},q_{1,J_1}\sigma_{I_1,J_1})\nonumber\\
&\quad-(1-m^2)Cov_{\mathbf{z}}(q_{1,I_1},q_{1,J_1}\sigma_{I_2,J_1})
+2m(1-m)x\,Cov_{\mathbf{z}}(q_{1,I_1},\sigma_{I_2,J_1})\nonumber\\
&\quad+(1-m)^2Cov_{\mathbf{z}}(q_{1,I_1},q_{1,J_1}\sigma_{I_2,J_2})-(1-m)^2Cov_{\mathbf{z}}(q_{1,I_1},q_{1,J_1}\sigma_{I_1,J_2}).
\end{align}
%Here $I_1$ and $I_2$ are two individuals chosen randomly in the same deme, while $J_1$ and $J_2$ designate two individuals selected randomly in an other deme, all before reproduction.
\end{Proposition}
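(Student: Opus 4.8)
The plan is to condition on the realized payoff matrix $(a_{k,l})$ and apply the law of total variance. Since $X(0)=x$ is deterministic,
\[
Var_{\mathbf z}[X(1)-X(0)]=E_{\mathbf z}\!\left[Var_{\mathbf z}\big(X(1)\mid(a_{k,l})\big)\right]+Var_{\mathbf z}\!\left(E_{\mathbf z}\big[X(1)-X(0)\mid(a_{k,l})\big]\right),
\]
and I will show that the first summand produces the two sampling terms of $Q(\mathbf z)$ (those built from $Var_{\mathbf z}(q_{1,I_1})$ and $Cov_{\mathbf z}(q_{1,I_1},q_{1,I_2})$), while the second produces the seven remaining terms, the ones carrying the population-scaled variances $\sigma^2_{k,l}$.

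For the first summand, write $X(1)=\sum_j x_jZ_j(1)$ and use Eq.~(\ref{sec2-eq3}) to obtain $Var_{\mathbf z}(X(1)\mid(a_{k,l}))=d^{-1}\sum_i z_i\big[\sum_j x_j^2P_{ij}(\mathbf z)-(\sum_j x_jP_{ij}(\mathbf z))^2\big]$. Since $(P_{ij}(\mathbf z))_j$ is the $(1-m,m)$-mixture of a Binomial$(N,\tilde x_i)$ and a Binomial$(N,\tilde{\tilde x})$ law (Eq.~(\ref{sec1-eq17})), one has $\sum_j x_jP_{ij}(\mathbf z)=(1-m)\tilde x_i+m\tilde{\tilde x}$ and $\sum_j x_j^2P_{ij}(\mathbf z)=(1-m)\big(\tilde x_i^2+N^{-1}\tilde x_i(1-\tilde x_i)\big)+m\big(\tilde{\tilde x}^2+N^{-1}\tilde{\tilde x}(1-\tilde{\tilde x})\big)$, so the bracket collapses to $N^{-1}\big[(1-m)\tilde x_i(1-\tilde x_i)+m\tilde{\tilde x}(1-\tilde{\tilde x})\big]+m(1-m)(\tilde x_i-\tilde{\tilde x})^2$. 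Because the payoffs have second moments of order $d^{-1}$, the deviations $\tilde x_i-x_i$ and $\tilde{\tilde x}-x$ have $L^2$-norm $O(d^{-1/2})$ and mean $O(d^{-1})$; as the displayed quantity carries the factor $d^{-1}$, replacing $\tilde x_i$ by $x_i$ and $\tilde{\tilde x}$ by $x$ costs only $o(d^{-1})$, after which the expectation is immediate. Rewriting the result with $x-x^2=Var_{\mathbf z}(q_{1,I_1})$ and $\sum_i z_ix_i^2-x^2=\sum_i z_i(x_i-x)^2=Cov_{\mathbf z}(q_{1,I_1},q_{1,I_2})$ (Eq.~(\ref{sec1-eq8})) and grouping the factors of $N$ gives exactly $\frac{1}{Nd}\big[Var_{\mathbf z}(q_{1,I_1})+(1-m)(Nm-1)Cov_{\mathbf z}(q_{1,I_1},q_{1,I_2})\big]+o(d^{-1})$.

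For the second summand, recall from Eq.~(\ref{sec2-eq16}) that $E_{\mathbf z}[X(1)-X(0)\mid(a_{k,l})]=(1-m)\tilde x+m\tilde{\tilde x}-x=:Y$, which vanishes at $(a_{k,l})=0$, is analytic near the origin (the denominators $\tilde\omega_i$ being then near $1$), and is bounded on the whole support since $\tilde x_i,\tilde{\tilde x}\in[0,1]$ always. Write $Y=L+R$ with $L$ the part of $Y$ linear in the $a_{k,l}$ and $|R|\le C\|a\|^2$ on the support. Using the moment bounds $E[a_{k,l}]=\mu_{k,l}/(Nd)+o(1/d)$, $E[a_{k,l}^2]=\sigma^2_{k,l}/(Nd)+o(1/d)$, $E[|a_{k,l}|^n]=o(1/d)$ for $n\ge3$, the independence of distinct payoffs, and the fact that $N$ is fixed, one gets $E_{\mathbf z}[R^2]=o(d^{-1})$, hence by Cauchy--Schwarz $Cov_{\mathbf z}(L,R)=o(d^{-1})$, while $(E_{\mathbf z}L)^2=O(d^{-2})$; therefore
\[
Var_{\mathbf z}(Y)=E_{\mathbf z}[L^2]+o(d^{-1})=\frac{1}{Nd}\sum_{k,l=1}^2 c_{k,l}^2\,\sigma^2_{k,l}+o(d^{-1}),
\]
where $c_{k,l}$ is the coefficient of $a_{k,l}$ in $L$, a function of $\mathbf z$ only. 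One computes $L$ from the first-order expansions $\tilde x_i-x_i=x_i(1-x_i)(\tilde a_{1,i}-\tilde a_{2,i})+O(\|a\|^2)$ and $\tilde{\tilde x}-x=\sum_i z_ix_i\tilde a_{1,i}-x\sum_j z_j\tilde a_j+O(\|a\|^2)$, which give $L=(1-m)\sum_i z_ix_i(1-x_i)(\tilde a_{1,i}-\tilde a_{2,i})+m\big(\sum_i z_ix_i\tilde a_{1,i}-x\sum_j z_j\tilde a_j\big)$; substituting $\tilde a_{k,i}=x_ia_{k,1}+(1-x_i)a_{k,2}$ then exhibits each $c_{k,l}$ as a combination, with weights $1-m$, $m$, $mx$, of the quantities $\sum_i z_ix_i^2(1-x_i)$, $\sum_i z_ix_i(1-x_i)^2$, $\sum_i z_ix_i^2$, $\sum_i z_ix_i(1-x_i)$ and $\sum_i z_i(1-x_i)^2$. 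In the style of Lemmas \ref{lemma1} and \ref{lemma2}, this is most cleanly organized as three auxiliary lemmas giving $Var_{\mathbf z}(\tilde x)$, $Var_{\mathbf z}(\tilde{\tilde x})$ and $Cov_{\mathbf z}(\tilde x,\tilde{\tilde x})$ to order $d^{-1}$, recombined with weights $(1-m)^2$, $m^2$, $2m(1-m)$.

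The last step, which I expect to be the main obstacle, is to recognize $\frac{1}{Nd}\sum_{k,l}c_{k,l}^2\sigma^2_{k,l}$ as the stated sum of covariance terms. Expanding each square $c_{k,l}^2$ into a double sum over deme types $\sum_{i,i'}z_iz_{i'}(\cdots)$ and reading it through the sampling notation of Eq.~(\ref{sec2-eq4}): sums over offspring drawn from a single randomly chosen deme generate the quantities $\sigma_{I_a,I_b}$, sums over offspring drawn from two independently chosen demes generate $\sigma_{I_a,J_b}$, and the weighting by $x_i$, which is the probability of drawing one further $S_1$-offspring from the same deme, turns products of $x_i$'s into probabilities of identity by state, i.e. into expectations $E_{\mathbf z}$ of products of the indicator frequencies $q$; the subtracted means appearing in the covariances of Eq.~(\ref{sec2-eq18}) come entirely from this deme-and-offspring sampling, the $\sigma^2_{k,l}$ being fixed constants there. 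Matching monomial by monomial, using the elementary relations among $E_{\mathbf z}[q_{1,I_1}q_{1,J_1}\sigma_{I_1,J_1}]$, $E_{\mathbf z}[q_{1,I_1}\sigma_{I_1,J_1}]$, $E_{\mathbf z}[q_{1,I_1}q_{1,J_1}\sigma_{I_2,J_1}]$, etc., and the corresponding covariances (with $E_{\mathbf z}[q_{1,I_1}]=x$ and $E_{\mathbf z}[q_{1,I_1}q_{1,J_1}]=x^2$), and regrouping the weights via $m(1-m)=m-m^2$ and $(1-m)^2=1-2m+m^2$, reproduces the coefficients $m^2$, $-2mx$, $1-m^2$, $-(1-m^2)$, $2m(1-m)x$, $(1-m)^2$, $-(1-m)^2$ in Eq.~(\ref{sec2-eq18}). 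This accounting is lengthy but mechanical and would be relegated to the appendices.
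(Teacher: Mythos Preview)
Your proposal is correct and follows essentially the same route as the paper: the law of total variance with conditioning on $(a_{k,l})$, the binomial-mixture computation of the conditional variance (your $d^{-1}$ bracket is exactly Eq.~(\ref{sec2-eq28}) after simplification, and the replacement of $\tilde x_i,\tilde{\tilde x}$ by $x_i,x$ is the paper's Eq.~(\ref{equation*})), and the decomposition $Var_{\mathbf z}\big((1-m)\tilde x+m\tilde{\tilde x}\big)=(1-m)^2Var_{\mathbf z}(\tilde x)+m^2Var_{\mathbf z}(\tilde{\tilde x})+2m(1-m)Cov_{\mathbf z}(\tilde x,\tilde{\tilde x})$ handled by three auxiliary lemmas---these are precisely Lemmas~\ref{lemma3}--\ref{lemma5}, proved in Appendices~C--E by expanding $E_{\mathbf z}[\tilde x^2]$, $E_{\mathbf z}[\tilde{\tilde x}^2]$, $E_{\mathbf z}[\tilde x\tilde{\tilde x}]$ via Taylor's formula and then passing to the $\sigma$-notation of Eq.~(\ref{sec2-eq4}). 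Your $L+R$ linearization is a minor organizational variant of the same computation (it shortcuts the Taylor remainder bookkeeping in those appendices), and the final ``matching monomial by monomial'' is indeed mechanical once the lemmas are in hand.
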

%%%%%%%%%
The proof of this proposition relies on three technical lemmas proved in Appendices C, D and E.

%%%%%%%%%
%%%%%%%%%
\begin{lemma}\label{lemma3}
The conditional variance of the uniformly weighted frequency of $S_1$ after viability selection within demes is given by
\begin{align}\label{sec2-eq19}
Var_{\mathbf{z}}\left[\tilde{x}\right]&=
\frac{1}{Nd}\Big[
Cov_{\mathbf{z}}(q_{1,I_1},q_{1,J_1}\sigma_{I_1,J_1})
+Cov_{\mathbf{z}}(q_{1,I_1},q_{1,J_1}\sigma_{I_2,J_2})
-Cov_{\mathbf{z}}(q_{1,I_1},q_{1,J_1}\sigma_{I_2,J_1})\nonumber\\
&\quad-Cov_{\mathbf{z}}(q_{1,I_1},q_{1,J_1}\sigma_{I_1,J_2})
\Big]+o(d^{-1}).
\end{align}
%where, $I_1$ and $I_2$ represent two individuals chosen at random from the same deme, and, $J_1$ and $J_2$ represent two individuals chosen at random from another deme, all before reproduction.
\end{lemma}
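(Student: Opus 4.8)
The plan is to imitate the proof of Lemma~\ref{lemma1}, but for the second moment instead of the first, with the simplification that only the part of $\tilde{x}$ that is linear in the payoffs contributes at order $d^{-1}$. Since $\tilde{x}$ is a deterministic function of $\mathbf{z}$ and of the random payoff matrix $(a_{k,l})$, we have $Var_{\mathbf{z}}[\tilde{x}]=Var_{\mathbf{z}}[\tilde{x}-x]=E_{\mathbf{z}}[(\tilde{x}-x)^2]-(E_{\mathbf{z}}[\tilde{x}-x])^2$, and Lemma~\ref{lemma1} already gives $E_{\mathbf{z}}[\tilde{x}-x]=O(d^{-1})$, so $(E_{\mathbf{z}}[\tilde{x}-x])^2=o(d^{-1})$; hence it suffices to compute $E_{\mathbf{z}}[(\tilde{x}-x)^2]$ up to $o(d^{-1})$. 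From $\tilde{x}_i=x_i\tilde{\omega}_{1,i}/\tilde{\omega}_i$ and the identity $\tilde{\omega}_{1,i}-\tilde{\omega}_i=\tilde{a}_{1,i}-\tilde{a}_i=(1-x_i)(\tilde{a}_{1,i}-\tilde{a}_{2,i})$, one gets $\tilde{x}_i-x_i=x_i(1-x_i)(\tilde{a}_{1,i}-\tilde{a}_{2,i})/(1+\tilde{a}_i)$, so that $\tilde{x}-x=Y+R$ where
\begin{equation*}
Y=\sum_{i=0}^{N}z_ix_i(1-x_i)(\tilde{a}_{1,i}-\tilde{a}_{2,i}),\qquad R=-\sum_{i=0}^{N}z_ix_i(1-x_i)(\tilde{a}_{1,i}-\tilde{a}_{2,i})\frac{\tilde{a}_i}{1+\tilde{a}_i}.
\end{equation*}

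Next I would discard $R$. Here $Y$ is linear in the $a_{k,l}$ while $R$ is, up to a bounded factor, quadratic in them; using the moment hypotheses~(\ref{sec1-eq2}) and the independence of the $a_{k,l}$ one checks that $E_{\mathbf{z}}[Y^2]=O(d^{-1})$ and $E_{\mathbf{z}}[R^2]=o(d^{-1})$ — for the latter, the quartic moments $E[a_{k,l}^4]=o(d^{-1})$ dominate the mixed ones $E[a^2_{k,l}a^2_{k',l'}]=E[a^2_{k,l}]E[a^2_{k',l'}]=O(d^{-2})$, and a short truncation over $\{\max_{k,l}|a_{k,l}|\le 1/2\}$ handles the geometric tail $(1+\tilde{a}_i)^{-1}$. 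By the Cauchy--Schwarz inequality $|E_{\mathbf{z}}[YR]|\le (E_{\mathbf{z}}[Y^2])^{1/2}(E_{\mathbf{z}}[R^2])^{1/2}=o(d^{-1})$, and together with $(E_{\mathbf{z}}[Y])^2=o(d^{-1})$ this reduces the problem to evaluating $E_{\mathbf{z}}[Y^2]$ up to $o(d^{-1})$.

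Then I would compute $E_{\mathbf{z}}[Y^2]$ directly. Writing $\tilde{a}_{1,i}-\tilde{a}_{2,i}=x_i(a_{1,1}-a_{2,1})+(1-x_i)(a_{1,2}-a_{2,2})$ and summing over demes,
\begin{equation*}
Y=\Big(\sum_{i=0}^{N}z_ix_i^2(1-x_i)\Big)(a_{1,1}-a_{2,1})+\Big(\sum_{i=0}^{N}z_ix_i(1-x_i)^2\Big)(a_{1,2}-a_{2,2}).
\end{equation*}
Squaring and taking expectations, using $E[a_{k,l}^2]=\sigma^2_{k,l}/(Nd)+o(d^{-1})$, $E[a_{k,l}]=O(d^{-1})$ and the independence of $a_{1,l}-a_{2,l}$ from $a_{1,l'}-a_{2,l'}$ for $l\ne l'$ (so the cross term is $O(d^{-2})$), one obtains
\begin{equation*}
E_{\mathbf{z}}[Y^2]=\frac{1}{Nd}\Big[(\sigma^2_{1,1}+\sigma^2_{2,1})\Big(\sum_{i=0}^{N}z_ix_i^2(1-x_i)\Big)^{2}+(\sigma^2_{1,2}+\sigma^2_{2,2})\Big(\sum_{i=0}^{N}z_ix_i(1-x_i)^2\Big)^{2}\Big]+o(d^{-1}).
\end{equation*}

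It then remains to recognize this expression as the right-hand side of~(\ref{sec2-eq19}), and this is where the main work lies. I would evaluate each of the four covariances in~(\ref{sec2-eq19}) separately: each factorizes over the two independent demes, and the collapses $q_{1,I_1}q_{k,I_1}=\delta_{k,1}q_{1,I_1}$, $q_{1,J_1}q_{k,J_1}=\delta_{k,1}q_{1,J_1}$ simplify precisely those $\sigma$-terms whose first index is $I_1$ or $J_1$, leaving products of two- and three-offspring identity moments $E_{\mathbf{z}}[q_{1,I_1}q_{l,I_2}]$ and $E_{\mathbf{z}}[q_{1,I_1}q_{1,I_2}q_{l,I_3}]$. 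Adding the first two covariances and subtracting the last two, the two-offspring contributions cancel in pairs, and using $E_{\mathbf{z}}[q_{1,I_1}q_{l,I_2}]-E_{\mathbf{z}}[q_{1,I_1}q_{1,I_2}q_{l,I_3}]=E_{\mathbf{z}}[q_{1,I_1}q_{2,I_2}q_{l,I_3}]$ (from $q_{1,I_2}+q_{2,I_2}=1$) together with $\sum_iz_ix_i^2(1-x_i)=E_{\mathbf{z}}[q_{1,I_1}q_{2,I_2}q_{1,I_3}]$ and $\sum_iz_ix_i(1-x_i)^2=E_{\mathbf{z}}[q_{1,I_1}q_{2,I_2}q_{2,I_3}]$, the remaining terms telescope into exactly the two squared three-offspring moments found above. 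The main obstacle is this last bookkeeping: keeping the auxiliary offspring used for the ``$l$-slot'' of each $\sigma$ distinct across the four covariances and tracking which indicators collapse, so that everything reassembles into $E_{\mathbf{z}}[Y^2]$; by contrast, the reduction and error estimates of the first two paragraphs are routine once~(\ref{sec1-eq2}) is invoked.
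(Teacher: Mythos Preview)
Your proposal is correct. Both your argument and the paper's arrive at $Var_{\mathbf{z}}[\tilde x]=E_{\mathbf z}[(\tilde x-x)^2]+o(d^{-1})$ and then evaluate the leading second moment, but the organization differs. The paper (Appendix~C) Taylor-expands $\tilde x^2=\sum_{i,j}z_iz_jx_ix_j(1+\tilde a_{1,i})(1+\tilde a_{1,j})(1-\tilde a_i+\tilde a_i^2)(1-\tilde a_j+\tilde a_j^2)+o(d^{-1})$ and reads off each cross term directly as a covariance of the form $Cov_{\mathbf z}(q_{1,I_1},q_{1,J_1}a_{I_\bullet}a_{J_\bullet})$, which after averaging over the payoffs becomes the $\sigma$-covariances of the statement; subtracting $(E_{\mathbf z}[\tilde x])^2$ via Lemma~\ref{lemma1} then cancels the first-moment terms. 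Your route instead isolates the part of $\tilde x-x$ that is \emph{linear} in the payoffs, discards the remainder by a moment/Cauchy--Schwarz estimate, and lands on the explicit polynomial
\[
E_{\mathbf z}[Y^2]=\frac{1}{Nd}\sum_{k,l}\sigma_{k,l}^2\Big(\sum_i z_i x_i(1-x_i)(x_i)_l\Big)^{\!2}+o(d^{-1}),
\]
which you then have to re-identify with the four covariances. The paper's version is more mechanical and keeps the covariance bookkeeping built in; yours is leaner analytically (only the linear term survives, so no quadratic Taylor terms need to be carried), at the price of the combinatorial re-identification at the end. Both are valid; your truncation on $\{\max_{k,l}|a_{k,l}|\le 1/2\}$ together with $E[|a_{k,l}|^3]=o(d^{-1})$ is exactly what is needed to control $(1+\tilde a_i)^{-1}$, and the final identification indeed collapses to $\sum_{k,l}\sigma_{k,l}^2\,p_{12l}^2$ with $p_{12l}=\sum_i z_i x_i(1-x_i)(x_i)_l$, matching your $E_{\mathbf z}[Y^2]$.
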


%%%%%%%%%

%%%%%%%%%%
%%%%%%%%%%

\begin{lemma}\label{lemma4}
The conditional variance of the frequency of $S_1$ in the migrant pool is given by
\begin{equation}\label{sec2-eq23}
Var_{\mathbf{z}}(\tilde{\tilde{x}})=\frac{1}{Nd}\Big[
Cov_{\mathbf{z}}(q_{1,I_1}q_{1,J_1},\sigma_{I_1,J_1})-2x\,Cov_{\mathbf{z}}(q_{1,I_1},\sigma_{I_1,J_1})
\Big]+o(d^{-1}).
\end{equation}
%where $I_1$ and $J_1$ designate two individuals selected at random in the population before reproduction.
\end{lemma}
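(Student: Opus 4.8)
The plan is to linearize $\tilde{\tilde{x}}$ in the payoff entries $a_{k,l}$ and to show that, up to $o(d^{-1})$, its variance equals that of its linear part. From Eqs. (\ref{sec1-eq15}) and (\ref{sec1-eq13}) we may write $\tilde{\tilde{x}}=(x+U)/(1+V)$ with $U=\sum_{i=0}^{N}z_{i}x_{i}\tilde{a}_{1,i}$ and $V=\tilde{a}=\sum_{i=0}^{N}z_{i}\tilde{a}_{i}$, both linear in the $a_{k,l}$. Since $\tilde{\tilde{x}}$ is a convex combination of the numbers $\tilde{x}_{i}\in[0,1]$, it lies in $[0,1]$, and since $1+V>0$ we have the exact identity
\begin{equation*}
\tilde{\tilde{x}}-x=\frac{U-xV}{1+V}=L+R,\qquad L:=U-xV,\qquad R:=-\frac{LV}{1+V},
\end{equation*}
where $L$ is linear and $R$ is quadratic-or-higher in the payoffs. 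I would first show $E_{\mathbf{z}}[R^{2}]=o(d^{-1})$ by splitting according to whether $\max_{k,l}|a_{k,l}|<1/2$: on that event $1/(1+V)$ is bounded, so $|R|\le C(\max_{k,l}|a_{k,l}|)^{2}$ and $R^{2}$ is controlled by fourth-order payoff moments, which are $o(d^{-1})$ by Eq. (\ref{sec1-eq2}) (its third line when all indices agree, independence giving $E[a_{k_{1},l_{1}}^{2}a_{k_{2},l_{2}}^{2}]=O(d^{-2})$ otherwise); on the complementary event $|R|\le 1+|L|\le 3$, and the event itself has probability $o(d^{-1})$ by Markov's inequality and the third-moment bound. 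Since also $Var_{\mathbf{z}}(L)=O(d^{-1})$, Cauchy--Schwarz gives $|Cov_{\mathbf{z}}(L,R)|=o(d^{-1})$, whence $Var_{\mathbf{z}}(\tilde{\tilde{x}})=Var_{\mathbf{z}}(L)+o(d^{-1})$ (the variance in $\tilde{\tilde{x}}$ being over the payoff realizations).

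Next I would compute $L$ explicitly. Expanding $\tilde{a}_{1,i}$ and $\tilde{a}_{i}$ from Eq. (\ref{sec1-eq9}) and collecting terms gives $U=\alpha\,a_{1,1}+\beta\,a_{1,2}$ and $V=\alpha\,a_{1,1}+\beta\,a_{1,2}+\beta\,a_{2,1}+\gamma\,a_{2,2}$, hence
\begin{equation*}
L=(1-x)\alpha\,a_{1,1}+(1-x)\beta\,a_{1,2}-x\beta\,a_{2,1}-x\gamma\,a_{2,2},
\end{equation*}
where $\alpha=\sum_{i}z_{i}x_{i}^{2}=E_{\mathbf{z}}[q_{1,I_{1}}q_{1,I_{2}}]$, $\beta=\sum_{i}z_{i}x_{i}(1-x_{i})=E_{\mathbf{z}}[q_{1,I_{1}}q_{2,I_{2}}]=x-\alpha$, and $\gamma=\sum_{i}z_{i}(1-x_{i})^{2}=E_{\mathbf{z}}[q_{2,I_{1}}q_{2,I_{2}}]=1-2x+\alpha$. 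By the mutual independence of the $a_{k,l}$ and $Var(a_{k,l})=E[a_{k,l}^{2}]-E[a_{k,l}]^{2}=\sigma^{2}_{k,l}/(Nd)+o(d^{-1})$ from Eq. (\ref{sec1-eq2}),
\begin{equation*}
Var_{\mathbf{z}}(L)=\frac{1}{Nd}\Big[(1-x)^{2}\alpha^{2}\sigma^{2}_{1,1}+(1-x)^{2}\beta^{2}\sigma^{2}_{1,2}+x^{2}\beta^{2}\sigma^{2}_{2,1}+x^{2}\gamma^{2}\sigma^{2}_{2,2}\Big]+o(d^{-1}).
\end{equation*}

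Finally I would match this bracket with $Cov_{\mathbf{z}}(q_{1,I_{1}}q_{1,J_{1}},\sigma_{I_{1},J_{1}})-2x\,Cov_{\mathbf{z}}(q_{1,I_{1}},\sigma_{I_{1},J_{1}})$. From Eq. (\ref{sec2-eq4}) and the indicator identities $q_{1,I}q_{k,I}=q_{1,I}\,\delta_{k,1}$ and $q_{1,I}^{2}=q_{1,I}$, one has $q_{1,I_{1}}\sigma_{I_{1},J_{1}}=q_{1,I_{1}}q_{1,J_{1}}\sigma_{I_{1},J_{1}}=q_{1,I_{1}}q_{1,J_{1}}(\sigma^{2}_{1,1}q_{1,I_{2}}q_{1,J_{2}}+\sigma^{2}_{1,2}q_{2,I_{2}}q_{2,J_{2}})$. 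Taking $E_{\mathbf{z}}$ and using that $(I_{1},I_{2})$ and $(J_{1},J_{2})$ lie in two independently drawn demes, so the expectations factor into products of $\alpha$, $\beta$ or $\gamma$, gives $E_{\mathbf{z}}[q_{1,I_{1}}\sigma_{I_{1},J_{1}}]=E_{\mathbf{z}}[q_{1,I_{1}}q_{1,J_{1}}\sigma_{I_{1},J_{1}}]=\alpha^{2}\sigma^{2}_{1,1}+\beta^{2}\sigma^{2}_{1,2}$ and, similarly, $E_{\mathbf{z}}[\sigma_{I_{1},J_{1}}]=\alpha^{2}\sigma^{2}_{1,1}+\beta^{2}\sigma^{2}_{1,2}+\beta^{2}\sigma^{2}_{2,1}+\gamma^{2}\sigma^{2}_{2,2}$. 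Combining these with $E_{\mathbf{z}}[q_{1,I_{1}}q_{1,J_{1}}]=x^{2}$ and $E_{\mathbf{z}}[q_{1,I_{1}}]=x$, expanding the two covariances and forming the difference simplifies, using $1-2x+x^{2}=(1-x)^{2}$, to exactly the bracket in $Var_{\mathbf{z}}(L)$ above, completing the proof. The main obstacle is the remainder estimate of the first paragraph: since the payoffs are only assumed to satisfy $|a_{k,l}|<1$, the factor $1/(1+V)$ is not uniformly bounded, and it is the boundedness $\tilde{\tilde{x}}\in[0,1]$ together with the third-moment condition in Eq. (\ref{sec1-eq2}) that rescues the argument; keeping track of which offspring belong to which deme in the last step is the other delicate point.
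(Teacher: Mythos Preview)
Your argument is correct and is a genuine alternative to the paper's route. The paper (Appendix~D) computes $E_{\mathbf z}[\tilde{\tilde x}^{\,2}]$ by expanding $(1+\tilde a)^{-1}$ to second order via Taylor's theorem with Lagrange remainder, writes the result in terms of covariances such as $Cov_{\mathbf z}(q_{1,I_1},a_{I_1}a_{J_1})$, and then subtracts $\big(E_{\mathbf z}[\tilde{\tilde x}]\big)^{2}$ using Lemma~\ref{lemma2}. By contrast, you isolate the linear part $L$ of $\tilde{\tilde x}-x$, bound the quadratic remainder $R$ in $L^{2}$ by a truncation on $\{\max_{k,l}|a_{k,l}|<1/2\}$ combined with the third-moment condition in Eq.~(\ref{sec1-eq2}), and then compute $Var_{\mathbf z}(L)$ directly from the independence and scaled variances of the $a_{k,l}$ before matching the explicit polynomial in $\alpha,\beta,\gamma$ with the covariance expression in the statement.

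The trade-offs: your approach is self-contained (it does not rely on Lemma~\ref{lemma2}) and makes transparent why only the second moments $\sigma_{k,l}^{2}$ survive, since $Var_{\mathbf z}(L)$ involves only $Var(a_{k,l})$; the price is the separate identification step at the end and a slightly more delicate remainder argument, since $1/(1+V)$ is not uniformly bounded and you correctly rescue this with $\tilde{\tilde x}\in[0,1]$ on the bad event. The paper's approach is more uniform with the proofs of Lemmas~\ref{lemma1}--\ref{lemma3} and \ref{lemma5} and arrives at the covariance form directly, at the cost of invoking Lemma~\ref{lemma2} and carrying several intermediate covariance identities.
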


%%%%%%%%%

%%%%%%%%%
%%%%%%%%%
\begin{lemma}\label{lemma5}
The conditional covariance between the uniformly weighted frequency of $S_1$ and the frequency of $S_1$ in the migrant pool is given by
\begin{align}\label{sec2-eq25}
Cov_{\mathbf{z}}\left[\tilde{x},\tilde{\tilde{x}}\right]&=\frac{1}{Nd}\Big[
Cov_{\mathbf{z}}(q_{1,I_1},q_{1,J_1}\sigma_{I_1,J_1})
+x\,Cov_{\mathbf{z}}(q_{1,I_1},\sigma_{I_2,J_1})
-x\,Cov_{\mathbf{z}}(q_{1,I_1},\sigma_{I_1,J_1})\nonumber\\
&\quad
-Cov_{\mathbf{z}}(q_{1,I_1},q_{1,J_1}\sigma_{I_2,J_1})
\Big]+o(d^{-1}).
\end{align}
%where, $I_1$ and $I_2$ denote two individuals randomly chosen from the same deme, and $J_1$ denotes an individual chosen at random from another deme, all before reproduction takes place.
\end{lemma}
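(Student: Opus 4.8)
The plan is to follow the same template as the proofs of Lemmas~\ref{lemma3} and~\ref{lemma4}. Because the payoffs have means and variances of order $(Nd)^{-1}$ and higher absolute moments of order $o(d^{-1})$, I would first expand $\tilde{x}$ and $\tilde{\tilde{x}}$ to first order in the $a_{k,l}$: from $\tilde{\omega}_{1,i}/\tilde{\omega}_i=1+\tilde{a}_{1,i}-\tilde{a}_i+O(a^2)$ and $\tilde{\omega}_{1,i}/\tilde{\omega}=1+\tilde{a}_{1,i}-\tilde{a}+O(a^2)$ one obtains $\tilde{x}=x+L_1+R_1$ and $\tilde{\tilde{x}}=x+L_2+R_2$, where $L_1=\sum_{i}z_ix_i(\tilde{a}_{1,i}-\tilde{a}_i)$ and $L_2=\sum_{i}z_ix_i\tilde{a}_{1,i}-x\tilde{a}$ are linear in the payoffs and $R_1,R_2$ collect the terms of degree at least two. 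These expansions are already produced in the proofs of Lemmas~\ref{lemma1}--\ref{lemma4}.

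The second step reduces the covariance to $E_{\mathbf{z}}[L_1L_2]$. Since $E_{\mathbf{z}}[L_1]$ and $E_{\mathbf{z}}[L_2]$ are $O(d^{-1})$ by (\ref{sec1-eq2}), their product is $o(d^{-1})$; and, writing each $\tilde{a}$ as a linear combination of the $a_{k,l}$ and using the independence of distinct entries together with (\ref{sec1-eq2}), every one of the remaining contributions $E_{\mathbf{z}}[L_1R_2]$, $E_{\mathbf{z}}[R_1L_2]$, $E_{\mathbf{z}}[R_1R_2]$ and the associated products of means is a finite sum of terms, each of which either carries a moment $E[|a_{k,l}|^{n}]$ with $n\ge 3$ or is a product of two factors each $O(d^{-1})$; in either case the term is $o(d^{-1})$. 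Hence $Cov_{\mathbf{z}}(\tilde{x},\tilde{\tilde{x}})=E_{\mathbf{z}}[L_1L_2]+o(d^{-1})$, and this degree-counting is the only genuinely analytic part of the argument.

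Third, I would compute $E_{\mathbf{z}}[L_1L_2]$. Expanding $\tilde{a}_{1,i}$ and $\tilde{a}_i$ linearly in the $a_{k,l}$ and using $E[a_{k_1,l_1}a_{k_2,l_2}]=\delta_{(k_1,l_1),(k_2,l_2)}\sigma^2_{k,l}(Nd)^{-1}+o(d^{-1})$, one sees that the coefficient of each $a_{k,l}^{2}$ factors into a part depending only on the summation index carried by $L_1$ and a part depending only on the one carried by $L_2$; therefore $E_{\mathbf{z}}[L_1L_2]=(Nd)^{-1}\sum_{k,l}\sigma^2_{k,l}\,c_{k,l}(\mathbf{z})+o(d^{-1})$, where each $c_{k,l}(\mathbf{z})$ is a product of two $z_i$-weighted polynomials in the $x_i$. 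This product structure is exactly why (\ref{sec2-eq25}) contains only covariances involving a $\sigma$ with one $I$-index and one $J$-index (namely $\sigma_{I_1,J_1}$ and $\sigma_{I_2,J_1}$) and no within-deme term $\sigma_{I_1,I_2}$.

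Finally --- the step that needs the most care --- I would identify $(Nd)^{-1}\sum_{k,l}\sigma^2_{k,l}c_{k,l}(\mathbf{z})$ with the right-hand side of (\ref{sec2-eq25}). The bookkeeping is the standard one used throughout: for offspring $I_1,I_2,\dots$ drawn from one random deme and $J_1,J_2,\dots$ from an independent random deme, $E_{\mathbf{z}}$ of a product of indicators $q_{k,\cdot}$ collapses to $\sum_i z_i$ times a product of factors $x_i$ and $1-x_i$, a repeated indicator $q_{k,\cdot}q_{k,\cdot}$ reduces to a single one, and the factors attached to the $I$-deme and to the $J$-deme separate. Expanding the four covariances $Cov_{\mathbf{z}}(q_{1,I_1},q_{1,J_1}\sigma_{I_1,J_1})$, $Cov_{\mathbf{z}}(q_{1,I_1},\sigma_{I_2,J_1})$, $Cov_{\mathbf{z}}(q_{1,I_1},\sigma_{I_1,J_1})$ and $Cov_{\mathbf{z}}(q_{1,I_1},q_{1,J_1}\sigma_{I_2,J_1})$ the same way and matching the coefficient of each $\sigma^2_{k,l}$ closes the proof. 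A convenient shortcut and cross-check: since $L_2-L_1=\sum_i z_ix_i(\tilde{a}_i-\tilde{a})=:D$ to first order, one has $Cov_{\mathbf{z}}(\tilde{x},\tilde{\tilde{x}})=Var_{\mathbf{z}}(\tilde{x})+Cov_{\mathbf{z}}(\tilde{x},\tilde{\tilde{x}}-\tilde{x})+o(d^{-1})$, so Lemma~\ref{lemma3} already supplies two of the four terms on the right of (\ref{sec2-eq25}) and only $E_{\mathbf{z}}[L_1D]$ remains to be evaluated and identified with the other two. The main obstacle is purely combinatorial: tracking signs, keeping straight which offspring share a deme, and verifying that the within-deme versus between-deme structure comes out exactly as in (\ref{sec2-eq25}).
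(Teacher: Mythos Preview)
Your proposal is correct and follows essentially the same approach as the paper (Appendix~E): Taylor-expand the viability ratios, discard higher-order contributions using the moment bounds in (\ref{sec1-eq2}), and identify the resulting $\sigma^2_{k,l}$-coefficients with the offspring covariances on the right of (\ref{sec2-eq25}). The only cosmetic difference is that the paper first computes $E_{\mathbf{z}}[\tilde{x}\tilde{\tilde{x}}]$ keeping both linear and quadratic payoff terms and then subtracts $E_{\mathbf{z}}[\tilde{x}]E_{\mathbf{z}}[\tilde{\tilde{x}}]$ via Lemmas~\ref{lemma1} and~\ref{lemma2} (so the $\mu$-terms and several $\sigma$-terms cancel explicitly), whereas you argue directly that $Cov_{\mathbf{z}}(\tilde{x},\tilde{\tilde{x}})=E_{\mathbf{z}}[L_1L_2]+o(d^{-1})$, which sidesteps that cancellation but is the same computation.
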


%%%%%%%%%

%%%%%%%%%
%%%%%%%%%

\begin{proof}[Proof of Proposition \ref{Proposition2}] 
First, we note that 
\begin{align}\label{sec2-eq27}
\sum_{j=0}^{N}j^2P_{ij}(\mathbf{z})&= (1-m)\sum_{j=0}^{N}j^2\binom{N}{j}\left(\tilde{x}_i\right)^j\left(1-\tilde{x}_i\right)^{N-j}+m\sum_{j=0}^{N}j^2\binom{N}{j}\tilde{\tilde{x}}^j\left(1-\tilde{\tilde{x}}\right)^{N-j}\nonumber\\
&=(1-m)N\tilde{x}_i\left((N-1)\tilde{x}_i+1\right)
+mN\tilde{\tilde{x}}\left((N-1)\tilde{\tilde{x}}+1\right).
\end{align}
%This identity is a consequence of the fact that $(P_{ij}(\mathbf{z}))_{j=0}^{N}$ is the sum of two binomial distributions. The first binomial distribution has parameters $N$ and $\tilde{x}_i$, weighted by $(1-m)$. The second binomial distribution has parameters $N$ and $\tilde{\tilde{x}}$, weighted by $m$.
Combining the above identity with Eqs. (\ref{sec2-eq3}) and (\ref{sec2-eq15}), we get
\begin{align}\label{sec2-eq28}
Var_{\mathbf{z}}\left[X(1)-X(0)\Big|(a_{k,l})\right]
&=Var_{\mathbf{z}}\left[\sum_{j=0}^{N}\frac{j}{N}Z_{j}(1)\Big|(a_{k,l})\right]\nonumber\\
&=\sum_{j=0}^{N}\frac{j^2}{N^2}Var_{\mathbf{z}}\left[Z_{j}(1)\Big|(a_{k,l})\right]+\sum_{j_1\not= j_2}\frac{j_1j_2}{N^2}Cov_{\mathbf{z}}\left(Z_{j_1}(1),Z_{j_2}(1)\Big|(a_{k,l})\right)\nonumber\\
&=\sum_{j=0}^{N}\frac{j^2}{dN^2}\sum_{i=0}^{N}z_iP_{ij}(\mathbf{z})\left(1-P_{ij}(\mathbf{z})\right)-\sum_{ j_1\not= j_2}\frac{j_1j_2}{dN^2}\sum_{i=0}^{N}z_iP_{ij_1}(\mathbf{z})P_{ij_2}(\mathbf{z})\nonumber\\
&=\frac{1}{d}\sum_{i=0}^{N}z_i\left[\sum_{j=0}^{N}\frac{j^2}{N^2}P_{ij}(\mathbf{z})-\left(\sum_{j=0}^{N}\frac{j}{N}P_{ij}(\mathbf{z})\right)^2\right]\nonumber\\
&=\frac{1}{Nd}\sum_{i=0}^{N}z_i\big[
(1-m)(Nm-1)\tilde{x}^2_i+m(N(1-m)-1)\tilde{\tilde{x}}^2\nonumber\\
&\quad\quad\quad\quad\quad\quad+(1-m)\tilde{x}_i+m\tilde{\tilde{x}}-2Nm(1-m)\tilde{x}_i\tilde{\tilde{x}}
\big]\nonumber\\
&=\frac{1}{Nd}\Big[(1-m)(Nm-1)\sum_{i=0}^{N}z_i\tilde{x}^2_i+m(N(1-m)-1)\tilde{\tilde{x}}^2\nonumber\\
&\quad\quad\quad+(1-m)\tilde{x}+m\tilde{\tilde{x}}-2Nm(1-m)\tilde{x}\tilde{\tilde{x}}
\Big].
\end{align}
Using this equation with Eq. (\ref{sec2-eq16}) and the facts that
\begin{subequations}\label{equation*}
\begin{align}
&E_{\mathbf{z}}\left(\tilde{x}_i^n\right)=x_i^n+o\left(1\right),\\
&E_{\mathbf{z}}\left(\tilde{x}\tilde{\tilde{x}}\right)=x^2+o\left(1\right),
\end{align}
\end{subequations}
for $n\geq 1$, we obtain
\begin{align}\label{sec2-eq29}
Var_{\mathbf{z}}\left[X(1)-X(0)\right]%\nonumber\\
&=Var_{\mathbf{z}}\left(E_{\mathbf{z}}\left[X(1)-X(0)\Big|(a_{k,l})\right]\right)
+E_{\mathbf{z}}\left(Var_{\mathbf{z}}\left[X(1)-X(0)\Big|(a_{k,l})\right]\right)\nonumber\\
&=Var_{\mathbf{z}}\Big((1-m)\tilde{x}+m\tilde{\tilde{x}}-x\Big)
+\frac{1}{Nd}\Big[(1-m)(Nm-1)\sum_{i=0}^{N}z_ix^2_i
\nonumber\\
&\quad+m(N(1-m)-1)x^2+(1-m)x+mx
-2Nm(1-m)x^2
\Big]+o\left(d^{-1}\right)\nonumber\\
%%%%
&=(1-m)^2Var_{\mathbf{z}}\left[\tilde{x}\right]+m^2Var_{\mathbf{z}}\left[\tilde{\tilde{x}}\right]+2m(1-m)Cov_{\mathbf{z}}\left[\tilde{x},\tilde{\tilde{x}}\right]+\frac{1}{Nd}\\
&\quad\times\Big[(1-m)(Nm-1)\sum_{i=0}^{N}z_ix^2_i
-m(N(1-m)+1)x^2+x\Big]+o\left(d^{-1}\right).\nonumber
\end{align}
%Here, we have used  the fact that
%\begin{subequations}\label{equation*}
%\begin{align}
%&E_{\mathbf{z}}\left(\tilde{x}_i^n\right)=x_i^n+o\left(1\right),\\
%&E_{\mathbf{z}}\left(\tilde{x}\tilde{\tilde{x}}\right)=x^2+o\left(1\right),
%\end{align}
%\end{subequations}
%for $n\geq 1$. 
%Additionally, the following identities are noted
%\begin{equation}
%\begin{split}
%&\sum_{i=0}^{N}z_ix^2_i=Cov_{\mathbf{z}}\left(q_{1,I_1},q_{1,I_2}\right)+x^2,\\
%&x=Var_{\mathbf{z}}(q_{1,I_1})+x^2,
%\end{split}
%\end{equation}
%where $I_1$ and $I_2$ are two individuals chosen at random in the same deme.
Using the identities in Eq. (\ref{sec1-eq8}) and Lemmas \ref{lemma3}-\ref{lemma5} leads to the statement of Proposition \ref{Proposition2}.
\end{proof}
%%%%%%%%%
%%%%%%%%%

The next result, proved in Appendix F, implies that the fourth conditional moment of the change in the frequency of $S_1$ is negligible compared to the conditional moments of order one and two when the number of demes is large.

%%%%%%%%%
%%%%%%%%%
\begin{Proposition}\label{Proposition3}
For the fourth conditional moment of the change in the frequency of $S_1$ in the whole population from the beginning of generation $0$ to the beginning of generation $1$, we have
\begin{equation}\label{sec2-eq30}
E_{\mathbf{z}}\left[\Big(X(1)-X(0)\Big)^4\right]=o(d^{-1}).
\end{equation}
\end{Proposition}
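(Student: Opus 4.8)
The plan is to estimate $E_{\mathbf{z}}[(X(1)-X(0))^4]$ by conditioning on the payoff matrix $(a_{k,l})$ and exploiting the explicit mixture-of-binomials structure of the transition probabilities $(P_{ij}(\mathbf{z}))$. First I would write $X(1)-X(0)=\sum_{j=0}^N x_j Z_j(1)-x = \sum_{i,j} z_i x_j (Z_{ij}(1)-P_{ij}(\mathbf{z})) + \big((1-m)\tilde{x}+m\tilde{\tilde{x}}-x\big)$, using Eq.~(\ref{sec2-eq16}). The fourth power then splits into a ``centered multinomial fluctuation'' part and a ``conditional-mean'' part, and $(u+v)^4 \le 8(u^4+v^4)$ reduces the task to bounding each. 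For the conditional-mean part, Proposition~\ref{Proposition1} (more precisely Eq.~(\ref{sec2-eq16}) together with Lemmas~\ref{lemma1} and \ref{lemma2}, or a direct moment bound on $\tilde x$ and $\tilde{\tilde x}$) shows $(1-m)\tilde x + m\tilde{\tilde x}-x$ is $O_{L^4}(d^{-1})$ — indeed each of $\tilde x_i - x_i$ and $\tilde{\tilde x}-x$ is a sum of terms each $O(1)$ in probability but with expectation $O(d^{-1})$ and higher moments controlled by \eqref{sec1-eq2}(c); so its fourth moment is $o(d^{-1})$, in fact $O(d^{-2})$.

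The main work is the centered multinomial part, $E_{\mathbf{z}}\big[\big(\sum_i z_i \sum_j x_j (Z_{ij}(1)-P_{ij}(\mathbf{z}))\big)^4 \,\big|\,(a_{k,l})\big]$. Here I would use that, given $\mathbf{Z}(0)=\mathbf{z}$ and the payoffs, the vectors $(Z_{ij}(1))_j$ for distinct $i$ are independent, each a (rescaled) multinomial with $dz_i$ trials. Set $W_i := \sum_j x_j(Z_{ij}(1)-P_{ij}(\mathbf{z}))$, so the $W_i$ are independent, mean zero, and our sum is $\sum_i z_i W_i$. Expanding the fourth power of a sum of independent mean-zero variables, only the pure fourth-moment terms $E[W_i^4]$ and the paired second-moment terms $E[W_i^2]E[W_{i'}^2]$ survive. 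Standard multinomial moment bounds give $E[W_i^2] = O((dz_i)^{-1})$ and $E[W_i^4] = O((dz_i)^{-2}) + O((dz_i)^{-3})$ (the usual $n^{-1}$ and $n^{-2}$ scalings for a centered bounded multinomial functional, $n = dz_i$). Hence $\sum_i z_i^4 E[W_i^4] = O(d^{-2})$ and $\sum_{i\ne i'} z_i^2 z_{i'}^2 E[W_i^2]E[W_{i'}^2] = O(d^{-2})$, so the whole centered part is $O(d^{-2}) = o(d^{-1})$, uniformly in the payoffs. Taking expectation over $(a_{k,l})$ preserves the bound since it is uniform (one can restrict to the event $|a_{k,l}|<1$, which holds by assumption).

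The step I expect to be the main obstacle is making the multinomial fourth-moment bound clean and honest: the $W_i$ are degenerate when $z_i=0$ (no such deme), and one must phrase everything so the factors $z_i$ in $\sum_i z_i W_i$ cancel the $(dz_i)^{-1}$-type denominators correctly — writing $\sum_i z_i W_i$ and noting $z_i \cdot (dz_i)^{-1} = (d)^{-1} z_i^{0}$... more carefully, $z_i^4 E[W_i^4] \le z_i^4 \cdot C(dz_i)^{-2} = C d^{-2} z_i^2 \le C d^{-2}$ after summing, and similarly for the cross terms, so the apparent singularity is harmless. A secondary point is confirming that the covariance contribution from the conditional mean times the centered part vanishes in the expansion: conditionally on $(a_{k,l})$, the centered part has mean zero, so the cross terms of odd degree in the $W_i$ drop out after conditioning, and the remaining mean–mean cross terms are absorbed into the $O(d^{-2})$ estimate. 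Assembling these pieces yields $E_{\mathbf{z}}[(X(1)-X(0))^4]=O(d^{-2})=o(d^{-1})$, as claimed.
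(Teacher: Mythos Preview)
Your approach is essentially the same as the paper's: both condition on the payoffs, write $X(1)$ as a sum of independent bounded deme contributions (you group by deme type via the $W_i$, the paper sums over individual demes $(i,d_i)$), center at the conditional mean, and use that only pure fourth moments and paired second moments survive to get an $O(d^{-2})$ bound on the fluctuation part; both then handle the drift term separately. One small overclaim: the conditional-mean piece $(1-m)\tilde x+m\tilde{\tilde x}-x$ has fourth moment $o(d^{-1})$ by assumption~\eqref{sec1-eq2}(c), but not ``in fact $O(d^{-2})$'' --- the hypotheses give only $E[|a_{k,l}|^4]\le E[|a_{k,l}|^3]=o(d^{-1})$, and terms like $E[a_{k,l}^4]$ appear in the expansion --- though $o(d^{-1})$ is all you need.
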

%%%%%%%%%
%%%%%%%%%

%%%%%%%%%
%%%%%%%%%

Let us now focus on the second timescale for the deviations of the deme type frequencies from their equilibrium values in a neutral population subdivided into an infinite number of demes.
We first note that, as a consequence of Eq. (\ref{sec1-eq17}), we have
\begin{equation}\label{sec2-eq37}
E_{\mathbf{z}}\Big[P_{ij}(\mathbf{z})\Big]=P^{*}_{ij}(x)+o(1),
\end{equation}
as $S_2$ approaches infinity, where
\begin{equation}\label{sec2-eq38}
P^{*}_{ij}(x)=(1-m)\binom{N}{j}x_i^j\left(1-x_i\right)^{N-j}+m\binom{N}{j}x^j\left(1-x\right)^{N-j},
\end{equation}
for $i, j =0, 1, \ldots,N$, with $x=\sum_{j=0}^N x_jz_j= \sum_{j=0}^N jz_j/N$. Here,
$P^{*}_{ij}(x)$ represents the transition probability for a deme of type $i$ at a the beginning of generation $0$ to become a deme of type $j$ at the beginning of generation $1$ in an infinite neutral population subdivided into an infinite number of demes each of size $N$, where the frequency of $S_1$ is constant and given by $x$.

Let  $\mathbf{v}(x)=(v_0(x), v_1(x), \ldots,v_N(x))$ be the frequency vector that is the solution to the linear system of equations
\begin{equation}\label{sec2-eq39}
v_j(x)=\sum_{i=0}^{N}v_i(x)P_{ij}^{*}(x)
\end{equation}
for $j=0,1,\ldots,N$.
In other words, $\mathbf{v}(x)$ is the stationary probability distribution of the process $\left\{\mathbf{Z}(t)\right\}_{t\geq 0}$ for the deme type frequencies in the absence of selection, when the population is subdivided into an infinite number of demes of the same finite size $N$, and the total frequency of type $S_1$ is constant and equal to $x$.

The focus of our analysis will now shift to the  process $\left\{\mathbf{Y}(t)=\left(Y_0(t),Y_1(t),\ldots,Y_N(t)\right)\right\}_{t\geq 0}$, where 
 \begin{equation}
 Y_j(t)=Z_j(t)-v_j(X(t))
 \end{equation}
 for $j=0,1,\ldots,N$ define the deviations of the deme type frequencies from their equilibrium values as the number of demes $S_2$ goes to infinity.

\begin{Proposition}\label{Proposition4}
The first conditional moment of the changes in the deviations of the deme type frequencies from their equilibrium values as $d\rightarrow \infty$ from the beginning of generation $0$ to the beginning of generation $1$ is given by
\begin{equation}\label{sec2-eq40}
E_{\mathbf{z}}\left[Y_i(1)-Y_i(0)\right]=c_i(x,\mathbf{y})+o\left(1\right),
\end{equation}
where
\begin{equation}\label{sec2-eq41}
c_i(x,\mathbf{y})=\sum_{j=0}^{N}y_jP_{ji}^*(x)-y_i
\end{equation}
with $x=\sum_{j=0}^N x_jz_j$ and $y_i= z_i - v_i(x)$, for $i=0,1,\ldots,N$.
\end{Proposition}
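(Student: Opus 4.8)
The plan is to express $Y_i(1) - Y_i(0) = Z_i(1) - v_i(X(1)) - z_i + v_i(x)$ and take the conditional expectation $E_{\mathbf{z}}[\cdot]$, treating the two contributions separately: the deme-type frequency change $Z_i(1)-z_i$ and the drift of the equilibrium term $v_i(X(1)) - v_i(x)$. For the first piece, I would start from Eq.~(\ref{sec2-eq3}a), which gives $E_{\mathbf{z}}[Z_i(1)\mid (a_{k,l})] = \sum_{j=0}^{N} z_j P_{ji}(\mathbf{z})$, and then take the expectation over the random payoff matrix. Using Eq.~(\ref{sec2-eq37})--(\ref{sec2-eq38}), namely $E_{\mathbf{z}}[P_{ji}(\mathbf{z})] = P^*_{ji}(x) + o(1)$ as $d\to\infty$, together with the fact that $P_{ji}(\mathbf{z})$ is uniformly bounded (it is a probability), I get $E_{\mathbf{z}}[Z_i(1)] = \sum_{j=0}^{N} z_j P^*_{ji}(x) + o(1)$. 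Writing $z_j = y_j + v_j(x)$ and invoking the stationarity relation Eq.~(\ref{sec2-eq39}), the $v_j(x)$ terms collapse to $v_i(x)$, leaving $E_{\mathbf{z}}[Z_i(1)] = v_i(x) + \sum_{j=0}^{N} y_j P^*_{ji}(x) + o(1)$.

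For the second piece, I would argue that $v_i(\cdot)$ is smooth (it is the solution of a linear system whose coefficients $P^*_{ji}(x)$ are polynomials in $x$, so $v_i$ is a rational, hence $C^1$, function of $x$ on the relevant domain, barring degeneracy of the system which one checks holds for $m\in(0,1)$), and that the frequency change $X(1)-X(0)$ is small: by Proposition~\ref{Proposition1} its conditional mean is $O(d^{-1})$ and by Proposition~\ref{Proposition2} its conditional variance is $O(d^{-1})$, so $E_{\mathbf{z}}[|X(1)-X(0)|] = o(1)$. A first-order Taylor expansion of $v_i$ around $x$ then gives $E_{\mathbf{z}}[v_i(X(1)) - v_i(x)] = v_i'(x)\, E_{\mathbf{z}}[X(1)-X(0)] + o(1) = o(1)$, since the derivative is bounded and the remaining moment is $o(1)$. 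Subtracting, $E_{\mathbf{z}}[Y_i(1)-Y_i(0)] = \sum_{j=0}^{N} y_j P^*_{ji}(x) - y_i + o(1) = c_i(x,\mathbf{y}) + o(1)$, as claimed.

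The main obstacle I anticipate is making the interchange of limit and expectation rigorous in the step $E_{\mathbf{z}}[Z_i(1)] = \sum_j z_j E_{\mathbf{z}}[P_{ji}(\mathbf{z})] + o(1)$: one needs the $o(1)$ in Eq.~(\ref{sec2-eq37}) to be \emph{uniform} enough (or dominated) so that summing over the finitely many indices $j$ and passing through the expectation does not destroy it. Since the sum over $j$ is finite ($N+1$ terms) and each $P_{ji}(\mathbf{z}) \in [0,1]$, dominated convergence applies termwise and the finiteness of the sum preserves the $o(1)$ — so this is a technical point rather than a genuine difficulty, but it is where the argument needs care. The secondary point requiring attention is confirming that the linear system~(\ref{sec2-eq39}) has a unique solution depending smoothly on $x$ (so that $v_i'(x)$ exists and is bounded on compact subsets, and the Taylor step is legitimate); this should follow from the ergodicity of the neutral deme-level Markov chain with transition matrix $P^*(x)$ for each fixed $x$, whose smooth dependence on the parameter $x$ is then standard perturbation theory for finite stochastic matrices.
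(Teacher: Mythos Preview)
Your proposal is correct and follows essentially the same route as the paper: the same decomposition $Y_i(1)-Y_i(0)=\big(Z_i(1)-z_i\big)-\big(v_i(X(1))-v_i(x)\big)$, the same computation for $E_{\mathbf{z}}[Z_i(1)]$ via Eqs.~(\ref{sec2-eq3}a), (\ref{sec2-eq37}), (\ref{sec2-eq39}), and the same use of Propositions~\ref{Proposition1}--\ref{Proposition2} to show $E_{\mathbf{z}}[v_i(X(1))]=v_i(x)+o(1)$.

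The one technical difference is in this last step. The paper does not argue via smoothness and a first-order Taylor expansion; instead it invokes a result from Lessard~\cite{L2009} (Eq.~(\ref{sec2-eq42})) stating that $v_i$ is in fact a \emph{polynomial} of degree at most $N$, so that $v_i(X(1))=v_i(x)+\sum_{j=1}^{N}r_j\big(X(1)-X(0)\big)^j$ exactly, and then bounds each moment $E_{\mathbf{z}}[(X(1)-X(0))^j]$ for $j\ge 1$ directly. Your approach via $C^1$-regularity of the stationary distribution of a smoothly parametrized irreducible stochastic matrix is more self-contained and does not need the polynomial structure, but you should be explicit that the Taylor remainder is controlled by $\|v_i''\|_\infty\,E_{\mathbf{z}}[(X(1)-X(0))^2]=O(d^{-1})$ (or use the mean-value form with a bounded derivative), rather than leaving the $o(1)$ implicit. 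Either route works; the paper's has the advantage of being exact at finite order, yours of not requiring an external reference.
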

%%%%%%%%%
%%%%%%%%%
\begin{proof}
It has been shown in Lessard \cite{L2009} that
\begin{equation}\label{sec2-eq42}
v_i(X(1))=v_i(x)+\sum_{j=1}^{N}r_j\left(X(1)-X(0)\right)^j,
\end{equation} 
for $i=0, 1, \ldots, N$, where $r_j$ is a constant that depends only on $N$, $j$, $m$ and $x$, for $j=1, \ldots, N$.
Furthermore, we have
\begin{equation}\label{sec2-eq43}
\left|E_{\mathbf{z}}\left[\left(X(1)-X(0)\right)^k\right]\right|\leq E_{\mathbf{z}}\left[\left|X(1)-X(0)\right|^k\right]\leq E_{\mathbf{z}}\left[\left(X(1)-X(0)\right)^2\right]
\end{equation}
for any integer $k\geq 2$. Hence, owing to Propositions 1 and 2, we have
%\begin{equation}\label{sec2-eq44}
%E_{\mathbf{z}}\left[\left(X(1)-X(0)\right)^k\right]=o(1)
%\end{equation}
%for any integer $k\geq 2$. 
%Using the last equation and Proposition \ref{Proposition1}, we obtain
\begin{equation}\label{sec2-eq45}
E_{\mathbf{z}}\left[v_i(X(1))\right]=v_i(x)+\sum_{j=1}^{N}r_j E_{\mathbf{z}}\left[\left(X(1)-X(0)\right)^j\right]=v_i(x)+o(1).
\end{equation} 
 On the other hand, we have
\begin{align}\label{sec2-eq46}
E_{\mathbf{z}}\left[Z_i(1)\right]&=\sum_{j=0}^{N}z_jE_{\mathbf{z}}\left[P_{ji}(\mathbf{z})\right]\nonumber\\
&=\sum_{j=0}^{N}z_jP_{ji}^*(x)+o(1)\nonumber\\
&=\sum_{j=0}^{N}(y_j+v_j(x))P_{ji}^*(x)+o(1)\nonumber\\
&=\sum_{j=0}^{N}y_jP_{ji}^*(x)+v_i(x)+o(1),
\end{align}
where $y_j= z_j - v_j(x)$ for $j=0,1,\ldots,N$.
By combining (\ref{sec2-eq45}) and (\ref{sec2-eq46}), the first conditional moment of the change in $Y_i$ from the beginning of generation 0 to the beginning of generation 1 can be expressed as
\begin{equation}\label{sec2-eq47}
E_{\mathbf{z}}\left[Y_i(1)-Y_i(0)\right]=E_{\mathbf{z}}\left[Z_i(1)\right]-E_{\mathbf{z}}\left[v_i(X(1))\right]-y_i
=c_i(x,\mathbf{y})+o\left(1\right),
\end{equation}
with $c_i(x,\mathbf{y})$ as given in the statement of Proposition 4, for $i=0, 1, \ldots, N$.
\end{proof}

%%%%%%%%%
%%%%%%%%%
On the other hand, it can be shown (see Appendix G for a proof) that the conditional variance of the change in $Y_i$ is negligible compared to the first conditional moment when the number of demes is large.
\begin{Proposition}\label{Proposition5}
For  the conditional variance of the change in $Y_i$ from the beginning of generation 0 to the beginning of generation 1, we have
\begin{equation}\label{sec2-eq48}
Var_{\mathbf{z}}\left[Y_i(1)-Y_i(0)\right]=o\left(1\right)
\end{equation}
for $i=0, 1, \ldots, N$.
\end{Proposition}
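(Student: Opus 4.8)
The plan is to bound the conditional variance of $Y_i(1)-Y_i(0)$ directly using the decomposition $Y_i(1)=Z_i(1)-v_i(X(1))$. Since $Y_i(0)=z_i-v_i(x)$ is deterministic given $\mathbf{Z}(0)=\mathbf{z}$, we have $Var_{\mathbf{z}}[Y_i(1)-Y_i(0)]=Var_{\mathbf{z}}[Y_i(1)]=Var_{\mathbf{z}}[Z_i(1)-v_i(X(1))]$. First I would apply the elementary inequality $Var(A-B)\le 2Var(A)+2Var(B)$ to reduce the problem to showing separately that $Var_{\mathbf{z}}[Z_i(1)]=o(1)$ and $Var_{\mathbf{z}}[v_i(X(1))]=o(1)$.

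For the second piece, I would use the polynomial representation $v_i(X(1))=v_i(x)+\sum_{j=1}^{N}r_j(X(1)-X(0))^j$ from Lessard (Eq. (\ref{sec2-eq42})), already invoked in the proof of Proposition \ref{Proposition4}. Hence $v_i(X(1))-v_i(x)$ is a polynomial of degree $N$ in $\Delta X:=X(1)-X(0)$ with no constant term, so $Var_{\mathbf{z}}[v_i(X(1))]$ is bounded by a finite combination of terms $E_{\mathbf{z}}[(\Delta X)^{j+k}]$ with $j,k\ge 1$. By the chain of inequalities in Eq. (\ref{sec2-eq43}), every such moment of order $\ge 2$ is bounded by $E_{\mathbf{z}}[(\Delta X)^2]$, which is $O(d^{-1})=o(1)$ by Proposition \ref{Proposition2}; the cross term with $j=k=1$ is exactly $E_{\mathbf{z}}[(\Delta X)^2]$, also $o(1)$. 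So $Var_{\mathbf{z}}[v_i(X(1))]=o(1)$.

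For the first piece, $Var_{\mathbf{z}}[Z_i(1)]$, I would use the law of total variance conditioning on the payoff matrix: $Var_{\mathbf{z}}[Z_i(1)]=E_{\mathbf{z}}[Var_{\mathbf{z}}(Z_i(1)\mid(a_{k,l}))]+Var_{\mathbf{z}}(E_{\mathbf{z}}[Z_i(1)\mid(a_{k,l})])$. The first term is handled by Eq. (\ref{sec2-eq3}b): $Var_{\mathbf{z}}(Z_i(1)\mid(a_{k,l}))=d^{-1}\sum_{j}z_jP_{ji}(\mathbf{z})(1-P_{ji}(\mathbf{z}))\le d^{-1}$, so its expectation is $O(d^{-1})=o(1)$. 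For the second term, $E_{\mathbf{z}}[Z_i(1)\mid(a_{k,l})]=\sum_{j}z_jP_{ji}(\mathbf{z})$, and since each $P_{ji}(\mathbf{z})$ is a fixed polynomial in $\tilde{x}_j$ and $\tilde{\tilde{x}}$ (from Eq. (\ref{sec1-eq17})), its fluctuation around $P^*_{ji}(x)$ is controlled by the fluctuations of $\tilde{x}_j$ and $\tilde{\tilde{x}}$ around $x$; these have variance $O(d^{-1})$ by Lemmas \ref{lemma3} and \ref{lemma4}, so by the delta method (or a direct Taylor expansion with bounded derivatives, using $|a_{k,l}|<1$ to control remainders) $Var_{\mathbf{z}}(\sum_j z_jP_{ji}(\mathbf{z}))=o(1)$.

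The main obstacle I anticipate is making the last step rigorous: one must verify that $P_{ji}(\mathbf{z})$ depends on the random payoffs only through the quantities $\tilde{x}_j$ and $\tilde{\tilde{x}}$ in a smooth way with derivatives bounded uniformly in $\mathbf{z}$ and in the (bounded) payoffs, so that a second-order expansion converts the $O(d^{-1})$ variances of $\tilde{x}_j$ and $\tilde{\tilde{x}}$ into an $o(1)$ variance of $P_{ji}(\mathbf{z})$, with the $o(1)$ remainder controlled using Eq. (\ref{equation*}) and the moment bounds in Eqs. (\ref{sec1-eq2}). This is routine but requires care; the rest of the argument is a straightforward assembly of already-established moment estimates. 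Collecting the two $o(1)$ bounds yields $Var_{\mathbf{z}}[Y_i(1)-Y_i(0)]=o(1)$ for each $i=0,1,\ldots,N$, as claimed.
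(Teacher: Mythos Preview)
Your approach is essentially the same as the paper's: apply $Var(A-B)\le 2\,Var(A)+2\,Var(B)$ to reduce to $Var_{\mathbf z}[Z_i(1)]=o(1)$ and $Var_{\mathbf z}[v_i(X(1))]=o(1)$, then handle the second term via the polynomial representation (\ref{sec2-eq42}) and the moment bounds from Propositions~\ref{Proposition1} and~\ref{Proposition2}. The paper's proof of the first term is terser than yours: it simply says this is ``a direct consequence of Eq.~(\ref{sec2-eq3})'', whereas you correctly unpack the law of total variance and note that one must also control $Var_{\mathbf z}\big(E_{\mathbf z}[Z_i(1)\mid(a_{k,l})]\big)$.

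One small correction: Lemmas~\ref{lemma3} and~\ref{lemma4} give the variances of the \emph{aggregated} quantities $\tilde x$ and $\tilde{\tilde x}$, not of the individual $\tilde x_j$. What you actually need is that each $\tilde x_j - x_j$ and $\tilde{\tilde x}-x$ has second moment $O(d^{-1})$, which follows directly from the moment assumptions (\ref{sec1-eq2}) on the $a_{k,l}$ (since $\tilde a_{1,j}$ and $\tilde a_j$ are linear in the payoffs) together with the Taylor expansion (\ref{sec2-eq8}). With that in hand, $P_{ji}(\mathbf z)$ is a fixed polynomial in $\tilde x_j,\tilde{\tilde x}$ with bounded coefficients, so $Var_{\mathbf z}\big(\sum_j z_j P_{ji}(\mathbf z)\big)=o(1)$ follows by expanding and using boundedness ($|a_{k,l}|<1$) to control cross terms. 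Your ``main obstacle'' is thus routine, and the proof goes through.
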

%%%%%%%%%
%%%%%%%%%

%%%%%%%%%
%%%%%%%%%
Finally, we state and prove a convergence result that holds for an infinite neutral population. 

\begin{Proposition}\label{Proposition6}
For the function $\mathbf{c}(x, \mathbf{y})=(c_0(x, \mathbf{y}), c_1(x, \mathbf{y}), \ldots, c_N(x, \mathbf{y}))$ whose components are defined in Eq. (\ref{sec2-eq41}), we have $\mathbf{c}(x,\mathbf{0})=\mathbf{0}$ and the zero solution of the recurrence equation
\begin{equation}\label{sec2-eq51}
\mathbf{Y}(k+1,x,\mathbf{y})-\mathbf{Y}(k,x,\mathbf{y})=\mathbf{c}(x,\mathbf{Y}(k,x,\mathbf{y})),
\end{equation}
for $k\geq 0$ with the initial condition
$\mathbf{Y}(0,x,\mathbf{y})=\mathbf{y}$ is globally asymptotically stable.%, uniformly in $(x, \mathbf{y})$.
\end{Proposition}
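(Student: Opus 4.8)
The plan is to recognize that the recurrence (\ref{sec2-eq51}) is in fact \emph{linear}, so that the whole statement reduces to a spectral fact about the stochastic matrix $P^*(x)=\big(P^*_{ij}(x)\big)_{i,j=0}^{N}$ that follows from Perron--Frobenius theory.

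First, $\mathbf{c}(x,\mathbf{0})=\mathbf{0}$ is immediate from (\ref{sec2-eq41}), since each component $c_i(x,\mathbf{y})=\sum_{j}y_jP^*_{ji}(x)-y_i$ is linear in $\mathbf{y}$ with no constant term. Substituting this expression into (\ref{sec2-eq51}) the terms $+Y_i(k)$ and $-Y_i(k)$ cancel, leaving $Y_i(k+1)=\sum_{j=0}^{N}Y_j(k)P^*_{ji}(x)$, that is $\mathbf{Y}(k+1,x,\mathbf{y})=\mathbf{Y}(k,x,\mathbf{y})\,P^*(x)$ in the row-vector convention, so that $\mathbf{Y}(k,x,\mathbf{y})=\mathbf{y}\,\big(P^*(x)\big)^{k}$ for every $k\ge 0$. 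Since $P^*(x)$ is stochastic, $P^*(x)\mathbf{1}=\mathbf{1}$, hence the hyperplane $H=\{\mathbf{y}\in\mathbb{R}^{N+1}:\sum_i y_i=0\}$ is invariant under $P^*(x)$; and as $\mathbf{Y}(k)$ is the difference $\mathbf{Z}(k)-\mathbf{v}(X(k))$ of two probability vectors it always lies in $H$. Thus ``globally asymptotically stable'' is to be read on the invariant domain $H$ (on all of $\mathbb{R}^{N+1}$ it is false, because $\mathbf{v}(x)P^*(x)=\mathbf{v}(x)$), and the task is to show $\mathbf{y}\,(P^*(x))^k\to\mathbf{0}$ for every $\mathbf{y}\in H$.

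Next I would examine the spectrum of $P^*(x)$. For $x\in(0,1)$, (\ref{sec2-eq38}) gives $P^*_{ji}(x)\ge m\binom{N}{i}x^i(1-x)^{N-i}>0$ for all $i,j$ because $m\in(0,1)$ and $x,1-x\in(0,1)$, so $P^*(x)$ is a positive, hence primitive, stochastic matrix. By Perron--Frobenius, $1$ is a simple eigenvalue of $P^*(x)$, the associated row eigenvector being the stationary distribution $\mathbf{v}(x)$ of (\ref{sec2-eq39}), and every other eigenvalue $\lambda$ satisfies $|\lambda|\le\rho(x)<1$. Since $\sum_i v_i(x)=1\ne 0$, one has $\mathbb{R}^{N+1}=\mathrm{span}(\mathbf{v}(x))\oplus H$ as a direct sum of $P^*(x)$-invariant subspaces, and simplicity of the eigenvalue $1$ forces $\rho\big(P^*(x)|_H\big)=\rho(x)<1$. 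Therefore $\|\mathbf{Y}(k,x,\mathbf{y})\|=\|\mathbf{y}\,(P^*(x)|_H)^k\|\le C(x)\,\rho(x)^k\,\|\mathbf{y}\|$ for $\mathbf{y}\in H$, giving geometric convergence to $\mathbf{0}$ and the asserted global asymptotic stability; equivalently, the norm in which $P^*(x)|_H$ is a strict contraction supplies a quadratic Lyapunov function, which is the form in which the Ethier--Nagylaki conditions ask for this hypothesis.

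For the boundary values, since $x_j=j/N$ is strictly increasing, the only probability vector $\mathbf{z}$ with $\sum_j x_j z_j=0$ (resp. $=1$) is $\mathbf{v}(0)=(1,0,\dots,0)$ (resp. $\mathbf{v}(1)=(0,\dots,0,1)$), so the admissible deviation is $\mathbf{y}=\mathbf{0}$ and there is nothing to prove; alternatively one works throughout on the smaller invariant subspace $\{\mathbf{y}:\sum_i y_i=0,\ \sum_i x_i y_i=0\}$ and uses irreducibility of $P^*(x)$ on the accessible deme types. The only real content of the argument is the collapse of (\ref{sec2-eq51}) to a linear map together with the observation that $P^*(x)$ is primitive; the one point that needs care — and the closest thing to an obstacle — is pinning down the invariant domain on which ``global'' is intended, since the spectral radius of $P^*(x)$ on all of $\mathbb{R}^{N+1}$ equals $1$.
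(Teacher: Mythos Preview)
Your proof is correct and follows essentially the same approach as the paper: both recognize that the recurrence is linear with matrix $P^*(x)$ and deduce convergence of $\mathbf{Y}(k)$ to $\mathbf{0}$ from the ergodicity of this irreducible stochastic matrix. The only cosmetic difference is that the paper shifts to the probability vector $\mathbf{Z}(k)=\mathbf{Y}(k)+\mathbf{v}(x)$ and invokes the ergodic theorem directly, rather than staying with $\mathbf{Y}(k)$ on the hyperplane $\{\sum_i y_i=0\}$ and arguing via Perron--Frobenius as you do; your version is a bit more explicit about the rate and the boundary cases, but the content is the same.
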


%%%%%%%%%
%%%%%%%%%

\begin{proof}
Note that the recurrence equation Eq. (\ref{sec2-eq51}) can be rewritten as 
\begin{equation}\label{sec2-eq53}
\mathbf{Z}(k+1,x,\mathbf{y})=\mathbf{Z}(k,x,\mathbf{y})P^{*}(x),
\end{equation}
where $\mathbf{Z}(k,x,\mathbf{y})=\mathbf{Y}(k,x,\mathbf{y})+\mathbf{v}(x)$ and $P^{*}(x)=(P_{i,j}^{*}(x))_{}$ is as defined in Eq. (\ref{sec2-eq38}).
Moreover, owing to Eq. (\ref{sec2-eq39}), the frequency vector $\mathbf{v}(x)$ represents the stationary distribution associated with $P^{*}(x)$ which is the transition matrix of an irreducible Markov chain on a finite state space. Then, the ergodic theorem ensures that
\begin{equation}\label{sec2-eq54}
\lim_{k\rightarrow\infty}\mathbf{Z}(k,x,\mathbf{y})=\mathbf{v}(x)
\end{equation}
for any $\mathbf{y}$, which corresponds to the statement of Proposition 6.
\end{proof}

%%%%%%%%%%%%%%%%%%%%%%%%%%%%%%%%%%%%%%%%%%%%%%%%%%%%%%%%%%%%%%%%%%%%%%%%%%%%%%%%%%%%%%%%%%%%%%%%
%%%%%%%%%%%%%%%%%%%%%%%%%%%%%%%%%%%%%%%%%%%%%%%%%%%%%%%%%%%%%%%%%%%%%%%%%%%%%%%%%%%%%%%%%%
Owing to Propositions (\ref{Proposition1})-(\ref{Proposition6}), which are actually valid uniformly with respect to the population state, and assuming the existence and uniqueness of a diffusion process 
with infinitesimal mean and variance $M(\mathbf{v}(x))$ and $Q(\mathbf{v}(x))$, respectively, whose closure determines a strongly continuous semigroup (see Ethier \cite{E1976}), Theorem 3.3 in Ethier and Nagylaki \cite{EN1980} ensures the following convergence theorem.

\begin{Theorem}\label{Theorem}
Let $X(\lfloor Nd\tau\rfloor)$ be the frequency of type $S_1$ at the beginning of generation $\lfloor Nd\tau\rfloor$ in the model of Section 2. Here, $\lfloor Nd\tau\rfloor$ denotes the integer part of  $Nd\tau$ for any real $\tau\geq0$. Then, the process $\{X(\lfloor Nd\tau\rfloor)\}_{\tau\geq0}$ converges
in distribution to a diffusion process in $[0,1]$ whose infinitesimal generator is 
\begin{equation}
\mathcal{L}=\frac{1}{2}Q(\mathbf{v}(x))\frac{d^2}{dx^2}+M(\mathbf{v}(x))\frac{d}{dx},
\end{equation}
where the functions $M$ and $Q$ are given in Propositions 1 and 2, respectively, while the frequency vector $\mathbf{v}(x)$ is defined by Eqs. (\ref{sec2-eq38}) and (\ref{sec2-eq39}).
\end{Theorem}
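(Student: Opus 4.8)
The plan is to obtain the statement as a direct application of Theorem 3.3 of Ethier and Nagylaki \cite{EN1980}, which gives weak convergence to a diffusion for a discrete-time Markov chain carrying two time scales. The identification is the natural one already set up in this section: the \emph{slow} variable is the overall frequency $X(t)\in[0,1]$, whose per-generation increments are $O((Nd)^{-1})$, and the \emph{fast} variable is the deviation vector $\mathbf{Y}(t)=\mathbf{Z}(t)-\mathbf{v}(X(t))$, which relaxes back toward $\mathbf{0}$ on the scale of a single generation. Accelerating time by the factor $Nd$, i.e. observing generation $\lfloor Nd\tau\rfloor$, turns the $O((Nd)^{-1})$ first and second conditional moments of the slow increments into $O(1)$ infinitesimal mean and variance, while the fast variable has already been averaged out. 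So the proof reduces to a checklist: verify the moment hypotheses of \cite{EN1980} on the slow variable, verify the relaxation-to-equilibrium hypotheses on the fast variable, and feed in the assumed well-posedness of the limiting martingale problem.

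First I would record the slow-variable conditions. Proposition \ref{Proposition1} gives $E_{\mathbf{z}}[X(1)-X(0)]=(Nd)^{-1}M(\mathbf{z})+o(d^{-1})$ and Proposition \ref{Proposition2} gives $Var_{\mathbf{z}}[X(1)-X(0)]=(Nd)^{-1}Q(\mathbf{z})+o(d^{-1})$; Proposition \ref{Proposition3} supplies $E_{\mathbf{z}}[(X(1)-X(0))^4]=o(d^{-1})$, which, combined with Proposition \ref{Proposition2} through Cauchy--Schwarz, also forces $E_{\mathbf{z}}[|X(1)-X(0)|^3]=o(d^{-1})$, the Lindeberg-type bound that rules out jumps in the limit. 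Multiplying each of these by the number $Nd$ of generations in one unit of $\tau$ yields infinitesimal mean $M(\mathbf{z})$ and infinitesimal variance $Q(\mathbf{z})$ for the time-changed process, with a vanishing jump term. The essential point, emphasized in the paper's remark preceding the theorem, is that all the $o(\cdot)$ remainders are uniform in the population state $\mathbf{z}$, as \cite{EN1980} requires.

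Next I would treat the fast variable. Proposition \ref{Proposition4} shows that the conditional mean increment of $\mathbf{Y}$ is $\mathbf{c}(x,\mathbf{y})+o(1)$ and Proposition \ref{Proposition5} shows its conditional variance is $o(1)$, so on the fast time scale $\mathbf{Y}$ is governed by the deterministic recursion (\ref{sec2-eq51}); Proposition \ref{Proposition6} identifies $\mathbf{0}$ as its globally asymptotically stable equilibrium, equivalently $\mathbf{Z}(k,x,\mathbf{y})\to\mathbf{v}(x)$ by ergodicity of $P^{*}(x)$. Hence, on the slow time scale, the conditional law of the deme-type configuration is driven to $\mathbf{v}(X(t))$, and the averaging step of \cite{EN1980} replaces $M(\mathbf{z})$ and $Q(\mathbf{z})$ by $M(\mathbf{v}(x))$ and $Q(\mathbf{v}(x))$, producing the candidate generator $\mathcal{L}=\tfrac12 Q(\mathbf{v}(x))\,d^2/dx^2+M(\mathbf{v}(x))\,d/dx$. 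Together with the hypothesis that the closure of $\mathcal{L}$ generates a strongly continuous semigroup (Ethier \cite{E1976}), Theorem 3.3 of \cite{EN1980} then yields convergence of $\{X(\lfloor Nd\tau\rfloor)\}_{\tau\ge0}$ in distribution to the stated diffusion on $[0,1]$.

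The main obstacle, beyond the bookkeeping of matching Propositions \ref{Proposition1}--\ref{Proposition6} to the exact list of hypotheses in \cite{EN1980}, is the regularity of the limiting coefficients that makes the assumed well-posedness legitimate. One should check that $x\mapsto\mathbf{v}(x)$ is continuous on $[0,1]$: the entries of $P^{*}(x)$ in (\ref{sec2-eq38}) are polynomials in $x$, and for each $x$ the matrix is the transition matrix of an irreducible finite chain with a unique stationary distribution, so $\mathbf{v}(x)$ depends continuously on $x$ (smoothly on $(0,1)$). Consequently $M(\mathbf{v}(x))$ and $Q(\mathbf{v}(x))$ are continuous, with $Q(\mathbf{v}(x))\ge 0$ and, one expects, strictly positive on $(0,1)$ so that $0$ and $1$ are the only absorbing points; this is precisely the input needed for the martingale problem associated with $\mathcal{L}$ to be well-posed and its solution Feller, closing the loop with \cite{E1976}.
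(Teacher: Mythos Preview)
Your proposal is correct and follows essentially the same approach as the paper: invoke Propositions \ref{Proposition1}--\ref{Proposition6} (with the uniformity remark) to verify the two-time-scale hypotheses of Ethier and Nagylaki \cite{EN1980}, Theorem 3.3, and appeal to Ethier \cite{E1976} for well-posedness of the limiting martingale problem. Your write-up is in fact more detailed than the paper's one-sentence justification, adding the Cauchy--Schwarz step for the third-moment bound and the continuity argument for $\mathbf{v}(x)$, both of which are implicit in the paper.
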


\section{Infinitesimal mean and variance}

The infinitesimal mean and variance $M(\mathbf{v}(x))$ and $Q(\mathbf{v}(x))$ in the statement of Theorem \ref{Theorem} can be expressed in terms of identity-by-descent measures in a neutral population.
%In this section, we consider  $N\rightarrow\infty$ and $m\rightarrow0$ such that $Nm\rightarrow\nu$. Here, $\nu$ corresponds to a deme-scaled extinction rate.

Let $I_1^{(1)},\ldots,I_{n_1}^{(1)},\ldots,I_1^{(k)},\ldots,I_{n_k}^{(k)}$ be an ordered sample of $n=n_1+\cdots+n_k$ offspring chosen at random in the same deme at the beginning of a given generation in a neutral population of an infinite number of demes that is in the stationary state. Denote by 
\begin{align}\label{identitymeasures}
f_{I_1^{(1)}\ldots \,I_{n_1}^{(1)}| \, \ldots\, |I_1^{(k)}\ldots \,I_{n_k}^{(k)}}=f_{n_1n_2\ldots n_k}
\end{align}
the probability that $I_1^{(i)},\ldots,I_{n_i}^{(i)}$ are identical by descent for $i=1,\ldots,k$, but  $I_1^{(i)},\ldots,I_{n_i}^{(i)}$ and $I_1^{(j)},\ldots,I_{n_j}^{(j)}$ %$I_{l_1}^{(i_1)}$ and $I_{l_2}^{(i_2)}$ 
are not identical by descent %when $i_1\not= i_2$, $l_1=1,\ldots,n_{i_1}$ and $l_2=1,\ldots,n_{i_2}$. 
for $i, j=1,\ldots,k$ with $i\ne j$. The exact expressions of these identity-by-descent measures for up to five offspring  are given in Appendix H. 

The above identity-by-descent measures lead to exact expressions for $M(\mathbf{v}(x))$ and $Q(\mathbf{v}(x))$ deduced in Appendix I. Simplifications are possible in the case where the deme size is large and the probability of extinction is small.

\begin{Proposition}\label{Proposition7}
If $N$ is large enough and $m$ small enough, the infinitesimal mean and variance of the diffusion process in Theorem \ref{Theorem} can be approximated as
\begin{subequations}\label{sec5-eq1}
\begin{align}
M(\mathbf{v}(x))\approx &\;x(1-x)\Big[(\mu_{1,1}-\mu_{2,1})\psi(x)+(\mu_{1,2}-\mu_{2,2})\psi(1-x)-\sigma_{1,1}^2g(x)-\sigma_{1,2}^2h(x)\nonumber\\
&\quad\quad\quad\quad+\sigma_{2,1}^2h(1-x)+\sigma_{2,2}^2g(1-x)\Big],\\
%%%
%%%%
Q(\mathbf{v}(x))\approx &\;x(1-x)\Big[f_{11}+\nu f_2+x(1-x)\psi^2(x)\left(\sigma^2_{1,1}+\sigma^2_{2,1}\right)\nonumber\\
&\quad\quad\quad\quad+x(1-x)\psi^2(1-x)\left(\sigma^2_{1,2}+\sigma^2_{2,2}\right)\Big],
\end{align}
\end{subequations}
where $\nu=Nm$ is a deme-scaled extinction rate and
\begin{subequations}\label{sec5-eq2}
\begin{align}
%%%%
\psi(x)&=f_{21}+xf_{111},\\
%%%
%%%
g(x)&=f_{41}+(4f_{311}+3f_{221})x+6f_{2111}x^2+f_{11111}x^3,\\
%%%
%%%
h(x)&=f_{32}+\left(3f_{221}+f_{2111}\right)(1-x)+xf_{311}+2x(1-x)f_{2111}+x(1-x)^2f_{11111}.
\end{align}
\end{subequations}
%with the identity measures given in Eq. (\ref{identity}).
\end{Proposition}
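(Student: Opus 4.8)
The plan is to start from the \emph{exact} expressions for $M(\mathbf{v}(x))$ and $Q(\mathbf{v}(x))$ obtained in Appendix I. These come from specializing Propositions \ref{Proposition1} and \ref{Proposition2} to the neutral stationary deme-type distribution $\mathbf{z}=\mathbf{v}(x)$ and rewriting every covariance that appears there — $Cov_{\mathbf{z}}(q_{1,I_1},\mu_{I_1})$, $Cov_{\mathbf{z}}(q_{1,I_1},\sigma_{I_1,I_2})$, $Cov_{\mathbf{z}}(q_{1,I_1},q_{1,J_1}\sigma_{I_2,J_2})$, and so on — in terms of joint moments $E_{\mathbf{v}(x)}[q_{k_1,I_1}q_{k_2,I_2}\cdots]$ of type-indicator frequencies of offspring sampled in the same deme (and, for the $J$'s, a second independent deme). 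The key identity is the standard one: conditioning on the partition of the sampled offspring into identity-by-descent classes, offspring in the same class carry the same type while offspring in distinct classes carry independent types, each equal to $S_1$ with probability $x$; hence every such moment is a polynomial in $x$ and $1-x$ whose coefficients are exactly the measures $f_{n_1\ldots n_k}$ of \eqref{identitymeasures}. Substituting these turns $M(\mathbf{v}(x))$ and $Q(\mathbf{v}(x))$ into explicit polynomials in $x$ with coefficients that are linear in the $\mu_{k,l}$, the $\sigma^2_{k,l}$, and the $f$'s.

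The second step is asymptotic. One invokes the expansions of the identity-by-descent measures from Appendix H in the regime $N\to\infty$, $m\to 0$ with $\nu=Nm$ held fixed, recording the leading order of each measure: those involving only a bounded number of distinct classes among a bounded number of offspring have finite limits, while the higher-order measures contribute only at lower order and are dropped. The two-offspring identity $f_2$ enters $Q$ only through the combination $f_{11}+\nu f_2$ once the $m$ and $1-m$ weights of Proposition \ref{Proposition2} are expanded and $f_{11}=1-f_2$ is used. Collecting the surviving terms and grouping them by the payoff parameter they multiply yields the claimed forms, with $\psi$ assembled from $f_{21}$ and $f_{111}$, and $g$, $h$ assembled from $f_{41}$, $f_{311}$, $f_{221}$, $f_{2111}$, $f_{11111}$, $f_{32}$. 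A final cosmetic step uses the symmetry $x\leftrightarrow 1-x$ together with $S_1\leftrightarrow S_2$ (which swaps $\mu_{1,1}\leftrightarrow\mu_{2,2}$, $\mu_{1,2}\leftrightarrow\mu_{2,1}$, $\sigma^2_{1,1}\leftrightarrow\sigma^2_{2,2}$, $\sigma^2_{1,2}\leftrightarrow\sigma^2_{2,1}$, $g(x)\leftrightarrow g(1-x)$, $h(x)\leftrightarrow h(1-x)$) to present the $\sigma^2_{2,1}$ and $\sigma^2_{2,2}$ contributions with argument $1-x$.

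The main obstacle is the bookkeeping in the $Q$ term: Proposition \ref{Proposition2} already carries eight covariance terms, several cubic in the indicator frequencies and spanning two demes, so each expands into many identity-by-descent–weighted monomials. One must verify all the cancellations that collapse the $\sigma^2$-dependent part down to the two pieces $x(1-x)\psi^2(x)(\sigma^2_{1,1}+\sigma^2_{2,1})$ and $x(1-x)\psi^2(1-x)(\sigma^2_{1,2}+\sigma^2_{2,2})$ — in particular that $\psi^2$ genuinely factors out — and that everything else reduces to $x(1-x)(f_{11}+\nu f_2)$ up to $o(1)$. Establishing the asymptotic orders of the $f$'s carefully enough to justify the truncation, and making sure no $O(1)$ term is inadvertently discarded, is the delicate part; the treatment of $M(\mathbf{v}(x))$ is entirely analogous but lighter, since Proposition \ref{Proposition1} involves only covariances of the form $Cov_{\mathbf{z}}(q_{1,I_1},\cdot)$.
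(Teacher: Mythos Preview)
Your proposal is correct and follows essentially the same route as the paper's Appendix~I: specialize Propositions~\ref{Proposition1} and~\ref{Proposition2} to $\mathbf{z}=\mathbf{v}(x)$, expand every covariance by conditioning on the identity-by-descent partition of the sampled offspring (which is precisely what yields the polynomial expressions in $x$ with coefficients $f_{n_1\ldots n_k}$), and then pass to the limit $N\to\infty$, $m\to 0$, $Nm\to\nu$. Your identification of the delicate point --- the algebraic collapse of the $\sigma^2$-part of $Q$ to $x(1-x)\psi^2(\cdot)$ and of the drift-free part to $x(1-x)(f_{11}+\nu f_2)$ via $1+(1-m)(Nm-1)f_2\to 1+(\nu-1)f_2=f_{11}+\nu f_2$ --- matches the paper's computation exactly.
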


Moreover, we have the following ancillary result for the identity-by-descent measures, proved in Appendix J.

\begin{Proposition}\label{Proposition8}
%We have 
For a deme size $N$ large enough and a probability of extinction $m$ small enough, we have the approximation
 \begin{equation}\label{identity}
f_{n_1n_2\ldots n_k}\approx \binom{n_1+\cdots+n_k-k}{n_1-1,\ldots,n_k-1}\frac{\prod_{i=1}^{k}\prod_{j=2}^{n_i}\binom{j}{2}}
{\prod_{j=k+1}^{n_1+\cdots+n_k}\left(\binom{j}{2}+\nu\right)}\times\frac{\nu}{\binom{k}{2}+\nu}
\end{equation}
for $n_1, \ldots, n_k \geq 1$ for $k\geq 1$, where $\nu=Nm$.
\end{Proposition}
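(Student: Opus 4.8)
The plan is to set up a recursion for the identity-by-descent measures $f_{n_1\ldots n_k}$ in the neutral island model and then take the limit $N$ large, $m$ small with $\nu = Nm$ fixed. The key observation is that in a neutral population subdivided into infinitely many demes, a sample of $n = n_1 + \cdots + n_k$ offspring taken in one deme can be traced back one generation: with probability $1-m$ each of the $N$ parents in the deme is sampled from the same deme (so the offspring's lineages coalesce in pairs at rate governed by the Cannings/Wright--Fisher sampling within a deme of size $N$), and with probability $m$ the deme was recolonized from the migrant pool, in which case all $n$ lineages are effectively scattered into distinct demes (for $N$ large) and become non-identical. Looking one generation back, the first event among the $n$ lineages is either a coalescence of a pair --- which happens with a probability proportional to $\binom{j}{2}$ when $j$ lineages remain --- or an ``extinction/scattering'' event that breaks identity, contributing the factor $\nu$. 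This gives, in the diffusion-style limit, a pure-death chain on the number of distinct ancestral lineages in which from state $j$ one moves to state $j-1$ (coalescence) at rate $\binom{j}{2}$ and to the ``broken'' absorbing configuration at rate $\nu$; normalizing, the probability of a coalescence occurring before a break when $j$ lineages are present is $\binom{j}{2}/(\binom{j}{2}+\nu)$.

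With this picture in place, I would argue as follows. For all $n = n_1 + \cdots + n_k$ lineages to organize into exactly the $k$ identity classes of sizes $n_1, \ldots, n_k$, the ancestral process must first perform $n - k$ coalescences (reducing $n$ lineages to $k$), each occurring before any break, and then at the level of $k$ remaining lineages a break must occur before any further coalescence --- this last step forces the $k$ classes to stay distinct and produces the factor $\nu/(\binom{k}{2}+\nu)$. The product of the ``coalescence-before-break'' probabilities over the $n-k$ coalescence events, taken over $j$ running from $n$ down to $k+1$, yields $\prod_{j=k+1}^{n}\binom{j}{2}/(\binom{j}{2}+\nu)$; separating the denominator gives exactly $\prod_{j=k+1}^{n}(\binom{j}{2}+\nu)$ in the denominator of the claimed formula and a factor $\prod_{j=k+1}^{n}\binom{j}{2}$ in the numerator, which I would then re-bracket combinatorially. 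The remaining combinatorial factor $\binom{n-k}{n_1-1,\ldots,n_k-1}\prod_{i=1}^{k}\prod_{j=2}^{n_i}\binom{j}{2}$ counts, among all ways the $n-k$ coalescences could have partitioned the lineages, the proportion that build precisely the prescribed block sizes $n_1,\ldots,n_k$: the multinomial coefficient distributes the $n-k$ coalescence events among the $k$ blocks (with $n_i - 1$ events needed to assemble block $i$), and $\prod_{j=2}^{n_i}\binom{j}{2}$ is the number of labelled histories assembling a block of size $n_i$ weighted by the coalescence rates, after the $\binom{j}{2}$ normalizations already extracted have cancelled the ordering ambiguity. Matching these bookkeeping factors against the known recursion for $f_{n_1\ldots n_k}$ from Appendix H (which I may invoke) pins down the constant.

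Concretely, the cleanest route is induction on $n$ with the recursion from Appendix H: assume the formula for all samples of size $< n$, condition on the most recent event back in time among the $n$ lineages, and verify that the $1-m$ (coalesce a given pair) and $m$ (scatter) branches reproduce the claimed expression after substituting $m = \nu/N$ and letting $N \to \infty$, discarding the $O(1/N)$ terms (multiple simultaneous coalescences, or a lineage from a recolonized deme landing back in the sampled deme). The base cases $k=1$, $n_1 = 1$ (giving $f_1 = 1$, consistent since the empty products are $1$ and $\binom{1}{2}=0$ makes the leading factor $\nu/\nu = 1$) and $n = 2$ (giving $f_2 \approx \nu/(1+\nu)$ and $f_{11} \approx 1/(1+\nu)$, matching the standard island-model identity probabilities) anchor the induction.

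The main obstacle I anticipate is not the probabilistic heuristic but making the limit rigorous: one must show that the exact finite-$N$ recursion from Appendix H, which involves within-deme Wright--Fisher sampling (hence corrections of relative order $1/N$ at every coalescence, and cross terms between distinct pairs coalescing in the same generation), collapses under $m = \nu/N \to 0$ to the clean pure-birth/death-with-killing structure above, uniformly in the fixed sample size $n$. I would handle this by expanding each transition probability in the recursion to first order in $1/N$, checking that the $O(1)$ part gives ``exactly one coalescence OR exactly one break per generation'' with the stated relative rates $\binom{j}{2}:\nu$, and that all remaining contributions are $O(1/N)$ and therefore vanish --- the only subtlety being to confirm that the accumulated error over the (bounded) number of generations until absorption stays $o(1)$, which follows because the number of relevant events is at most $n-1$.
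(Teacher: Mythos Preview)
Your approach is essentially the same as the paper's: pass to the continuous-time limit in which lineages within a deme coalesce pairwise at rate $1$ and the deme goes extinct (scattering all remaining lineages into distinct demes) at rate $\nu$, then compute $f_{n_1\ldots n_k}$ as the probability that the $n-k$ coalescences all occur within the prescribed blocks before extinction, and that extinction then occurs before any of the $k$ remaining lineages merge. The paper presents exactly this argument (Appendix~J), without your added inductive verification against the finite-$N$ recursions of Appendix~H.

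Two small slips to fix. First, your base cases are swapped: in the limit $f_2\approx 1/(1+\nu)$ and $f_{11}\approx \nu/(1+\nu)$, not the other way around (check directly from $f_2=1/(mN+1-m)$). Second, in your second paragraph the factor $\binom{n-k}{n_1-1,\ldots,n_k-1}\prod_i\prod_{j=2}^{n_i}\binom{j}{2}$ is not itself a ``proportion'' but the \emph{numerator} of one: the conditional probability that $n-k$ Kingman coalescences produce exactly the partition $(n_1,\ldots,n_k)$ is this quantity divided by $\prod_{j=k+1}^{n}\binom{j}{2}$, and it is this denominator that cancels against the $\prod_{j=k+1}^{n}\binom{j}{2}$ you already extracted from the coalescence-before-break product. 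Once that cancellation is written out, the formula follows.
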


From now on, the conditions for Propositions \ref{Proposition7} and \ref{Proposition8} will be assumed to hold.

%%%%%%%%%%%%%%%%%%%%%%%%%%%%%%%%%%%%%%%%%%%%%%%%%%%%%%%%%%%%%%%%%%%%%%%%%%%%%%%%%%%%%%%%%%
%%%%%%%%%%%%%%%%%%%%%%%%%%%%%%%%%%%%%%%%%%%%%%%%%%%%%%%%%%%%%%%%%%%%%%%%%%%%%%%%%%%%%%%%%%
%%%%%%%%%%%%%%%%%%%%%%%%%%%%%%%%%%%%%%%%%%%%%%%%%%%%%%%%%%%%%%%%%%%%%%%%%%%%%%%%%%%%%%%%%%
\section{Fixation probability}
In the absence of mutation, the frequency of $S_1$ in the model of Section 2 is a Markov chain that fixes into either $x=1$, all individuals are of type $S_1$, or $x=0$, all individuals are of type $S_2$. 
 On the other hand, the probability of absorption in state $x=1$ at or before time $t$ starting in state $x=p$ at time $0$ in the diffusion process of Theorem 1,  denoted by $P_1(p, t)$, is known to satisfy the backward Kolmogorv equation
\begin{equation}\label{sec4-eq1}
-\frac{\partial P_1(p,t)}{\partial t}=M(\mathbf{v}(x))\frac{\partial P_1(p,t)}{\partial p}+
\frac{Q(\mathbf{v}(x))}{2}\frac{\partial^2P_1(p,t)}{\partial p^2},
\end{equation}
with the boundary conditions $P_1(0,t)=0$ and $P_1(1,t)=1$. See Karlin and Taylor \cite{KT1981} or Ewens \cite{E2004} for more details. Moreover, 
\begin{align}
u_1(p)=\lim_{t\rightarrow\infty}P_1(1,t),
\end{align}
which is the fixation probability of $S_1$ given that $p$ is  the initial frequency of $S_1$, 
satisfies
\begin{equation}\label{sec4-eq2}
0=M(\mathbf{v}(x))\frac{d u_1(p)}{d p}+
\frac{Q(\mathbf{v}(x))}{2}\frac{d^2u_1(p)}{d p^2},
\end{equation}
with the boundary conditions $u_1(0)=0$ and $u_1(1)=1$. This equation has the exact solution
\begin{equation}\label{sec4-eq3}
u_1(p)=\frac{\int_{0}^p\Psi(y)dy}{\int_{0}^1\Psi(y)dy},
\end{equation}
where 
\begin{align}\label{sec4-eq4}
\Psi(y)=\exp\left\{-2\int_{0}^{y}\frac{M(\mathbf{v}(x))}{Q(\mathbf{v}(x))} dx\right\}.
\end{align}
%with
%\begin{align}\label{}
%\frac{M(\mathbf{v}(x))}{Q(\mathbf{v}(x))}=\frac{M(\mathbf{v}(x))}{Q(\mathbf{v}(x))}.
%\end{align}
Similarly, 
\begin{equation}\label{sec4-eq5}
u_2(p)=1-u_1(1-p)=\frac{\int_{1-p}^1\Psi(y)dy}{\int_{0}^1\Psi(y)dy}
\end{equation}
is the fixation probability of $S_2$ given that the initial frequency of $S_2$ is $p$.

Let $F_1$ (respectively, $F_2$) denote the fixation probability of $S_1$ (respectively, $S_2$) introduced as a single mutant in a population that was initially fixed for $S_2$ (respectively, $S_1$). Assuming a sufficiently large number of demes and using the diffusion approximation established in the previous section, we obtain
\begin{subequations}\label{sec4-eq6}
\begin{align}\label{sec4-eq7}
&F_1\approx u_1\left(\frac{1}{Nd}\right)=\frac{\int_{0}^{(Nd)^{-1}}\Psi(y)dy}{\int_{0}^1\Psi(y)dy}\approx\frac{\Psi(0)}{Nd\int_{0}^1\Psi(y)dy},\\
&F_2\approx u_2\left(\frac{1}{Nd}\right)=\frac{\int_{1-(Nd)^{-1}}^{1}\Psi(y)dy}{\int_{0}^1\Psi(y)dy}\approx\frac{\Psi(1)}{Nd\int_{0}^1\Psi(y)dy},
\end{align}
\end{subequations}
since
\begin{subequations}
\begin{align}
&\int_{0}^{(Nd)^{-1}}\Psi(y)dy\approx\frac{\Psi(0)}{Nd},\\
&\int_{1-(Nd)^{-1}}^{1}\Psi(y)dy\approx\frac{\Psi(1)}{Nd}.
\end{align}
\end{subequations}
In the absence of selection, it can be noted that $\Psi(y)=1$ for any $y\in [0,1]$. Then, the above fixation probabilities are equal and given by the initial frequency, that is,
\begin{equation}\label{sec4-eq8}
F_1=F_2=\frac{1}{Nd}.
\end{equation}
In the presence of selection, the evolution of $S_i$ is said to be favoured if its fixation probability when introduced as a single mutant is greater than what it would be in the absence of selection, that is, $F_{i}>(Nd)^{-1}$, for $i=1, 2$ (Nowak \textit{et al.} \cite{NSTF2004}). If the inequality is reversed, that is, $F_{i} < (Nd)^{-1}$, then the evolution of $S_i$ is said to be disfavoured, for $i=1,2$.
It should be noted that the evolution of both types can be favoured or disfavoured by selection. Thus, a third criterion is introduced to determine which type has a higher probability of invading a population consisting entirely of the other type. Selection is said to favour the evolution of $S_1$ more than the evolution of $S_2$ if $F_{1}>F_2$.

Using Eq. (\ref{sec4-eq6}) with the fact that $\Psi(0)=1$ and introducing the function
\begin{equation}\label{sec4-eq4bis}
\Phi(y)=\frac{\Psi(y)}{\Psi(1)}=\exp\left(2\int_{y}^{1}\frac{M(\mathbf{v}(x))}{Q(\mathbf{v}(x))}dx\right),
\end{equation}
where $\Psi$ is defined in Eq. (\ref{sec4-eq4}), we can conclude the following.

\begin{Proposition}\label{Proposition9}

Selection favours the evolution of $S_1$ in the sense that  $F_{1}>(Nd)^{-1}$ as long as 
\begin{equation}\label{sec4-eq9}
\int_{0}^{1}\Psi(y)dy<1.
\end{equation} 
On the other hand, the condition for selection to disfavour the evolution of $S_2$ in the sense that $F_{2}<(Nd)^{-1}$ is
\begin{equation}\label{sec4-eq10}
\int_{0}^{1}\Phi(y)dy>1.
\end{equation}  
%while the reversed inequality implies that selection favours the evolution of $S_2$.
Finally, the condition for selection to favour the evolution of $S_1$ more than the evolution of $S_2$ in the sense that $F_1>F_2$ is
\begin{equation}\label{sec4-eq11}
\Psi(1)<1.
\end{equation}  
If the above three conditions are satisfied, then $F_1>(Nd)^{-1}>F_2$ and we say that selection fully favours the evolution of $S_1$.
\end{Proposition}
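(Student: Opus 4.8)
The plan is to read off all three criteria directly from the asymptotic expressions for $F_1$ and $F_2$ in Eq.~(\ref{sec4-eq6}), so that no further properties of the diffusion coefficients $M$ and $Q$ are needed. First I would note that $\Psi(0)=\exp\{0\}=1$ by the very definition in Eq.~(\ref{sec4-eq4}), which turns Eq.~(\ref{sec4-eq6}) into
\[
F_1\approx\frac{1}{Nd\int_0^1\Psi(y)\,dy},\qquad F_2\approx\frac{\Psi(1)}{Nd\int_0^1\Psi(y)\,dy},
\]
and I would record that $\Psi(y)=\exp\{-2\int_0^y (M/Q)\,dx\}>0$ on $[0,1]$, so that the common denominator $Nd\int_0^1\Psi(y)\,dy$ is strictly positive; this positivity is what makes every inequality below reversible.

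For the first condition, dividing $F_1>(Nd)^{-1}$ by the positive number $(Nd)^{-1}$ and inserting the expression for $F_1$ shows that it is equivalent to $\bigl(\int_0^1\Psi(y)\,dy\bigr)^{-1}>1$, i.e. to $\int_0^1\Psi(y)\,dy<1$, which is Eq.~(\ref{sec4-eq9}). For the third condition, since $F_1$ and $F_2$ carry the same positive denominator, $F_1>F_2$ reduces at once to $1>\Psi(1)$, i.e. Eq.~(\ref{sec4-eq11}). For the second condition, clearing denominators turns $F_2<(Nd)^{-1}$ into $\Psi(1)<\int_0^1\Psi(y)\,dy$; dividing through by $\Psi(1)>0$ and using $\Psi(y)/\Psi(1)=\Phi(y)$ from Eq.~(\ref{sec4-eq4bis}) gives $\int_0^1\Phi(y)\,dy>1$, which is Eq.~(\ref{sec4-eq10}).

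Finally, if all three conditions hold, the first gives $F_1>(Nd)^{-1}$ and the second gives $F_2<(Nd)^{-1}$, hence $F_1>(Nd)^{-1}>F_2$ (so the third condition is in fact automatic in that case). There is no genuine obstacle: the whole argument is a chain of equivalences between strict inequalities of manifestly positive quantities. The only point deserving a word of care is the status of the approximation signs in Eq.~(\ref{sec4-eq6}): the statement is to be understood in the regime of a large number of demes, where $\int_0^{(Nd)^{-1}}\Psi\sim\Psi(0)/(Nd)$ and $\int_{1-(Nd)^{-1}}^1\Psi\sim\Psi(1)/(Nd)$, so that each ``$\approx$'' may be replaced by the asymptotic equivalence that legitimizes passing to the stated integral criteria; I would state this convention once at the outset and then manipulate exact equalities.
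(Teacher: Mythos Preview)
Your proposal is correct and follows essentially the same approach as the paper: the paper simply states that Proposition~\ref{Proposition9} follows from Eq.~(\ref{sec4-eq6}) together with the observation $\Psi(0)=1$ and the definition of $\Phi(y)=\Psi(y)/\Psi(1)$ in Eq.~(\ref{sec4-eq4bis}), and you have spelled out those manipulations explicitly.
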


\section{Small population-scaled means and variances  }

Let us  assume that the population-scaled means $\mu_{1,1}$, $\mu_{1,2}$, $\mu_{2,1}$ and $\mu_{2,2}$ as well as population-scaled variances $\sigma_{1,1}^2$, $\sigma_{1,2}^2$, $\sigma_{2,1}^2$ and $\sigma_{2,2}^2$ are all of sufficiently small order. 
Then, using Proposition \ref{Proposition7} and the fact that $\nu f_2=\nu/(1+\nu)=f_{11}$, we get the approximation
\begin{align}\label{sec1-case5-eq1}
\frac{M(\mathbf{v}(x))}{Q(\mathbf{v}(x))}\approx\frac{M(\mathbf{v}(x))}{2f_{11}x(1-x)}&=\frac{\psi(x)}{2f_{11}}(\mu_{1,1}-\mu_{2,1})+\frac{\psi(1-x)}{2f_{11}}(\mu_{1,2}-\mu_{2,2})
\nonumber\\
&\quad-\frac{g(x)}{2f_{11}}\sigma_{1,1}^2-\frac{h(x)}{2f_{11}}\sigma_{1,2}^2+\frac{h(1-x)}{2f_{11}}\sigma_{2,1}^2+\frac{g(1-x)}{2f_{11}}\sigma_{2,2}^2.
\end{align}
By using this approximation, along with $e^{u}\approx 1+u$  when $u$ is sufficiently small, we obtain
\begin{subequations}\label{sec1-case5-eq2}
\begin{align}
\Psi(y)&=\exp\left(-2\int_{0}^{y}\frac{M(\mathbf{v}(x))}{Q(\mathbf{v}(x))}\right)\approx 1-\int_{0}^{y}\frac{M(\mathbf{v}(x))}{f_{11}x(1-x)}dx,\\
\Phi(y)&=\exp\left(2\int_{y}^{1}\frac{M(\mathbf{v}(x))}{Q(\mathbf{v}(x))}\right)\approx 1+\int_{y}^{1}\frac{M(\mathbf{v}(x))}{f_{11}x(1-x)}dx.
\end{align}
\end{subequations}
Then, applying the conditions in Proposition \ref{Proposition9} and using the approximations in Proposition \ref{Proposition8} for the identity-by-descent measures yield our first result.

\paragraph{Result 1.}
\textit{
Assuming that the population-scaled means and variances of the payoffs are of the same small order, selection will favour the evolution of $S_1$ if
\begin{align}\label{sec1-case5-eq5}
&\frac{4+\nu}{3(3+\nu)}(\mu_{1,1}-\mu_{2,1})
+\frac{5+2\nu}{3(3+\nu)}(\mu_{1,2}-\mu_{2,2})
-\frac{288+127\nu+ 20\nu^2 + \nu^3}{10(3+\nu)(6+\nu)(10+\nu)}\sigma^2_{1,1}\nonumber\\
&-\frac{188+117\nu+20\nu^2+\nu^3}{10(3+\nu)(6+\nu)(10+\nu)}\sigma^2_{1,2}
+\frac{243+127\nu+20\nu^2+\nu^3}{15(3+\nu)(6+\nu)(10+\nu)}\sigma^2_{2,1}\nonumber\\
&+\frac{231+174\nu+35\nu^2+2\nu^3}{5(3+\nu)(6+\nu)(10+\nu)}\sigma^2_{2,2}>0,
\end{align}
disfavour the evolution of $S_2$ if
\begin{align}\label{sec1-case5-eq6}
&\frac{5+2\nu}{3(3+\nu)}(\mu_{1,1}-\mu_{2,1})
+\frac{4+\nu}{3(3+\nu)}(\mu_{1,2}-\mu_{2,2})
-\frac{231+174\nu+35\nu^2+2\nu^3}{5(3+\nu)(6+\nu)(10+\nu)}\sigma^2_{1,1}\nonumber\\
&-\frac{243+127\nu+20\nu^2+\nu^3}{15(3+\nu)(6+\nu)(10+\nu)}\sigma^2_{1,2}
+\frac{188+117\nu+20\nu^2+\nu^3}{10(3+\nu)(6+\nu)(10+\nu)}\sigma^2_{2,1}\nonumber\\
&+\frac{288+127\nu+ 20\nu^2 + \nu^3}{10(3+\nu)(6+\nu)(10+\nu)}\sigma^2_{2,2}>0,
\end{align}
and favour the evolution of $S_1$ more than the evolution of $S_2$ if
\begin{equation}\label{sec1-case5-eq7}
\mu_{1,1}+\mu_{1,2}-\mu_{2,1}-\mu_{2,2}
+\frac{\nu+5}{2\left(\nu+6\right)}(\sigma^2_{2,2}-\sigma^2_{1,1})+\frac{\nu+7}{6\left(\nu+6\right)}(\sigma^2_{2,1}-\sigma^2_{1,2})
>0.
\end{equation}
}

%Substituting these expressions into Eq. (\ref{sec4-eq6}), we find 
%\begin{align}\label{sec1-case5-eq3}
%F_1&\approx \frac{1}{Nd}+\frac{2}{Nd}\int_{0}^1\int_{0}^y\frac{M(\mathbf{v}(x))}{Q(\mathbf{v}(x))}dxdy\nonumber\\
%&\approx \frac{1}{Nd}+\frac{1}{2Nd}\Bigg[
%\frac{4+\nu}{3(3+\nu)}(\mu_{1,1}-\mu_{2,1})
%+\frac{5+2\nu}{3(3+\nu)}(\mu_{1,2}-\mu_{2,2})\nonumber\\
%&\quad-\frac{288+127\nu+ 20\nu^2 + \nu^3}{10(3+\nu)(6+\nu)(10+\nu)}\sigma^2_{1,1}
%-\frac{188+117\nu+20\nu^2+\nu^3}{10(3+\nu)(6+\nu)(10+\nu)}\sigma^2_{1,2}\nonumber\\
%&\quad+\frac{243+127\nu+20\nu^2+\nu^3}{15(3+\nu)(6+\nu)(10+\nu)}\sigma^2_{2,1}
%+\frac{231+174\nu+35\nu^2+2\nu^3}{5(3+\nu)(6+\nu)(10+\nu)}\sigma^2_{2,2}
%\Bigg]
%\end{align}
%and
%\begin{align}\label{sec1-case5-eq4}
%F_2&\approx \frac{1}{Nd}-\frac{2}{Nd}\int_{0}^1\int_{y}^1\frac{M(\mathbf{v}(x))}{Q(\mathbf{v}(x))}dxdy\nonumber\\
%&\approx \frac{1}{Nd}-\frac{1}{2Nd}\Bigg[
%\frac{5+2\nu}{3(3+\nu)}(\mu_{1,1}-\mu_{2,1})
%+\frac{4+\nu}{3(3+\nu)}(\mu_{1,2}-\mu_{2,2})\nonumber\\
%&\quad-\frac{231+174\nu+35\nu^2+2\nu^3}{5(3+\nu)(6+\nu)(10+\nu)}\sigma^2_{1,1}
%-\frac{243+127\nu+20\nu^2+\nu^3}{15(3+\nu)(6+\nu)(10+\nu)}\sigma^2_{1,2}\nonumber\\
%&\quad+\frac{188+117\nu+20\nu^2+\nu^3}{10(3+\nu)(6+\nu)(10+\nu)}\sigma^2_{2,1}
%+\frac{288+127\nu+ 20\nu^2 + \nu^3}{10(3+\nu)(6+\nu)(10+\nu)}\sigma^2_{2,2}
%\Bigg].
%\end{align}
%Here, we have used Eq. (\ref{sec5-eq2}) and Proposition  \ref{Proposition8}.
It should be noted that the coefficients of $\sigma^2_{1,1}$ and $\sigma^2_{1,2}$ in the above conditions are all negative, while the coefficients of $\sigma^2_{2,1}$ and $\sigma^2_{2,2}$ are all positive. As a consequence, reducing the population-scaled variance of any payoff to $S_1$ or increasing the population-scaled variance of any payoff to $S_2$ makes all the above conditions for the evolution of $S_1$ easier to be satisfied. This is even the only way to make the conditions possible to hold when the population-scaled means of the payoffs verify $\mu_{1,1}<\mu_{2,1}$ and $\mu_{1,2}<\mu_{2,2}$ as in a deterministic Prisoner's dilemma.

%will lead to an increase of $F_1$. Conversely, in the approximation of $F_2$, the coefficients of $\sigma^2_{1,1}$ and $\sigma^2_{1,2}$ are positive, while the coefficients of $\sigma^2_{2,1}$ and $\sigma^2_{2,2}$ are negative. Then, reducing the population-scaled variance of any payoff to $S_1$ or increasing the population-scaled variance of any payoff to $S_2$ will result in a decrease of $F_2$.
%As a result, this scenario will make it more favorable for selection to promote the evolution of cooperation in any sense.

As for the effect of the deme-scaled extinction rate, we note that the coefficient of  $\sigma^2_{2,2}-\sigma^2_{1,1}$ in the condition for selection to favour the evolution of $S_1$ more than the evolution of $S_2$ is an increasing positive function with respect to $\nu$, while the coefficient of $\sigma^2_{2,1}-\sigma^2_{1,2}$ is a decreasing positive function. This means that the effect of variability in payoffs to interacting offspring of the same type on this condition decreases as $\nu$ increases, while the effect of variability in payoffs to interacting offspring of different types decreases. This is surprising since, as $\nu$ increases, interacting offspring are less likely to be of the same type. %(TO BE EXPLAINED) 

%The reason for this is that an increase in $\nu$ leads to less homogeneous demes. Consequently, there will be an increase in the number of interactions between individuals playing different strategies and a decrease in the number of interactions between individuals playing the same strategy.

Finally, as $\nu\rightarrow\infty$, the above conditions for $F_1>(Nd)^{-1}$, $F_2<(Nd)^{-1}$ and $F_1>F_2$ become 
\begin{subequations}
\begin{align}
&\frac{1}{3}(\mu_{1,1}-\mu_{2,1})+\frac{2}{3}(\mu_{1,2}-\mu_{2,2})-\frac{1}{10}\sigma^2_{1,1}-\frac{1}{10}\sigma^2_{1,2}+\frac{1}{15}\sigma^2_{2,1}+\frac{2}{5}\sigma^2_{2,2}>0,\\
&\frac{2}{3}(\mu_{1,1}-\mu_{2,1})+\frac{1}{3}(\mu_{1,2}-\mu_{2,2})-\frac{2}{5}\sigma^2_{1,1}-\frac{1}{15}\sigma^2_{1,2}+\frac{1}{10}\sigma^2_{2,1}+\frac{1}{10}\sigma^2_{2,2}>0,\\
&\mu_{1,1}+\mu_{1,2}-\mu_{2,1}-\mu_{2,2}
+\frac{1}{2}(\sigma^2_{2,2}-\sigma^2_{1,1})+\frac{1}{6}(\sigma^2_{2,1}-\sigma^2_{1,2})
>0,
\end{align}
\end{subequations}
respectively.
Note that similar conditions were derived for well-mixed populations (see Eqs. ($38$, $39$, $40$) in Kroumi \textit{et al.} \cite{KMLL2021} for a Moran model, and Eqs. ($95$, $100$, $104$) in Li and Lessard \cite{LL2020} for a Wright-Fisher model). Note also that, if the population-scaled variances are all equal to $\sigma^2$, then the condition for selection to favour the evolution of $S_1$ more than the evolution of $S_2$ ($F_1>F_2$) does not depend on $\sigma^2$, which is not surprising by symmetry, but it becomes easier for selection to favour of the evolution of $S_1$ ($F_1>(Nd)^{-1}$) as well as the evolution of $S_2$  ($F_2>(Nd)^{-1}$) as $\sigma^2$ increases.%Thus, when the island-scaled extinction rate is high, the evolutionary dynamic in the island model is exactly to this one obtained in a population without structure.

%%%%%%%%%%%%%%%%%%%%%%%%%%%%%%%%%%%%%%%%%%%%%%%%%%%%%%%%%%%%%%%%%%%%%%%%%%%%%%%%%%%%%%%%%%%%%%%%%%%%%%%%%%%%%%%%

\section{Case 1: $\sigma^2_{1,1}=\sigma^2_{1,2}=\sigma^2_{2,1}=0$ and $\sigma^2_{2,2}=\sigma^2>0$}
Here, we suppose that the population-scaled variances of the payoffs different from $\sigma^2_{2,2}=\sigma^2>0$ are insignificant. In this case, we have 
\begin{equation}\label{sec5-case1-eq0}
\frac{M(\mathbf{v}(x))}{Q(\mathbf{v}(x))}=\frac{(\mu_{1,1}-\mu_{2,1})\psi(x)+(\mu_{1,2}-\mu_{2,2})\psi(1-x)+\sigma^2g(1-x)}{2f_{11}+x(1-x)\psi^2(1-x)\sigma^2},
\end{equation}
whose first derivative with respect to $\sigma^2$ is given by
\begin{equation}
%\frac{\partial \frac{M(\mathbf{v}(x))}{Q(\mathbf{v}(x))}}{\partial \sigma^2}=
\frac{2f_{11}g(1-x)+x(1-x)\psi^2(1-x)\Big[(\mu_{2,1}-\mu_{1,1})\psi(x)+(\mu_{2,2}-\mu_{1,2})\psi(1-x)\Big]}
{\left(2f_{11}+x(1-x)\sigma^2\psi^2(1-x)\right)^2}>0,
\end{equation}
for $x\in[0,1]$. 
Using Leibniz integral rule, it can be shown that for every $y\in(0,1]$, the functions 
\begin{align}
\sigma^2\longmapsto\int_{0}^{y}\frac{M(\mathbf{v}(x))}{Q(\mathbf{v}(x))}dx=-\frac{\ln(\Psi(y))}{2}
\end{align}
and 
\begin{align}
\sigma^2\longmapsto\int_{y}^{1}\frac{M(\mathbf{v}(x))}{Q(\mathbf{v}(x))}dx=\frac{\ln(\Phi(y))}{2}
\end{align}
 are strictly increasing on $[0,\infty)$, so that the functions $\sigma^2\longmapsto\Psi(y)$  and $\sigma^2\longmapsto -\Phi(y)$ are strictly 
 decreasing on $[0,\infty)$. %$\frac{\partial\Psi(y)}{\partial\sigma^2}<0$ and $\frac{\partial\Phi(y)}{\partial\sigma^2}>0$ for any $y\in(0,1)$. 
 Hence, an increase in $\sigma^2$ will increase the fixation probability $F_1$ and decrease the fixation probability $F_2$. In addition, increasing $\sigma^2$ will make it possible for selection to favour the evolution of $S_1$.

Using Fatou's lemma, we get
\begin{equation}
\liminf_{\sigma^2\rightarrow\infty}\int_{0}^{y}\frac{M(\mathbf{v}(x))}{Q(\mathbf{v}(x))}dx\geq\int_{0}^{y}\lim_{\sigma^2\rightarrow\infty}\frac{M(\mathbf{v}(x))}{Q(\mathbf{v}(x))}dx=\int_{0}^{y}\frac{g(1-x)}{x(1-x)\psi^2(1-x)}dx=\infty
\end{equation}
for every $y\in(0,1]$,
since 
\begin{equation}
\frac{g(1-x)}{x(1-x)\psi^2(1-x)}\sim \frac{g(1)}{x\psi^2(1)}
\end{equation} 
as $x\rightarrow0^{+}$.
Therefore, we have
\begin{equation}
\lim_{\sigma^2\rightarrow\infty}\int_{0}^{y}\frac{M(\mathbf{v}(x))}{Q(\mathbf{v}(x))}dx=\infty,
\end{equation}
from which 
\begin{equation}\label{sec5-case1-eq1}
\lim_{\sigma^2\rightarrow\infty}\Psi(y)=\lim_{\sigma^2\rightarrow\infty}\exp\left(-2\int_{0}^{y}\frac{M(\mathbf{v}(x))}{Q(\mathbf{v}(x))}dx\right)=0,
\end{equation}
for every $y\in(0,1]$.
On the other hand, when $\sigma^2=0$, we have $M(\mathbf{v}(x))/Q(\mathbf{v}(x))<0$ for $x\in[0,1]$, which implies that 
\begin{equation}\label{sec5-case1-eq2}
\Psi(y)|_{\sigma^2=0}>1\;\mbox{ for every   } y\in(0,1].
\end{equation} 
Combining  (\ref{sec5-case1-eq1}) and (\ref{sec5-case1-eq2}) for $y=1$, and using the fact that the function $\sigma^2\longmapsto\Psi(1)$ is continuous and strictly decreasing on $[0,\infty)$, we deduce that  the equation
\begin{equation}
\Psi(1)=1,
\end{equation}
with respect to $\sigma^2>0$, has a unique solution $\sigma^2_{**}(\nu)>0$,
so that $F_1>F_2$ for $\sigma^2>\sigma^2_{**}(\nu)$ and $F_1<F_2$ for $\sigma^2<\sigma^2_{**}(\nu)$.

Similarly, Fatou's lemma yields
\begin{equation}
\liminf_{\sigma^2\rightarrow\infty}\int_{y}^{1}\frac{M(\mathbf{v}(x))}{Q(\mathbf{v}(x))}dx\geq\int_{y}^{1}\lim_{\sigma^2\rightarrow\infty}\frac{M(\mathbf{v}(x))}{Q(\mathbf{v}(x))}dx=\int_{y}^{1}\frac{g(1-x)}{x(1-x)\psi^2(1-x)}dx=\infty
\end{equation}
for every $y\in[0,1)$,
since 
\begin{equation}
\frac{g(1-x)}{x(1-x)\psi^2(1-x)}\sim \frac{g(0)}{(1-x)\psi^2(0)}
\end{equation}
as $x\rightarrow1^{-}$,
which implies
\begin{equation}
\lim_{\sigma^2\rightarrow\infty}\int_{y}^{1}\frac{M(\mathbf{v}(x))}{Q(\mathbf{v}(x))}dx=\infty
\end{equation}
and
  \begin{equation}
\lim_{\sigma^2\rightarrow\infty}\Phi(y)=\lim_{\sigma^2\rightarrow\infty}\exp\left(2\int_{y}^{1}\frac{M(\mathbf{v}(x))}{Q(\mathbf{v}(x))}dx\right)=\infty.
\end{equation}
Using again Fatou's lemma, we have
\begin{equation}
\liminf_{\sigma^2\rightarrow\infty}\int_{0}^{1}\Phi(y)dy\geq\int_{0}^{1}\lim_{\sigma^2\rightarrow\infty}\Phi(x)dx=\infty,
\end{equation}
from which 
\begin{equation}\label{sec5-case1-eq3}
\lim_{\sigma^2\rightarrow\infty}\int_{0}^{1}\Phi(y)dy=\infty.
\end{equation}
On the other hand,  the fact that  $M(\mathbf{v}(x))/Q(\mathbf{v}(x))<0$ for $x\in[0,1]$ when $\sigma^2=0$ implies that
\begin{equation}\label{sec5-case1-eq4}
\int_{0}^{1}\Phi(y)dy|_{\sigma^2=0}<1.
\end{equation}
Combining  (\ref{sec5-case1-eq3}) and (\ref{sec5-case1-eq4}), and using the fact that the function $\sigma^2\longmapsto\int_{0}^{1}\Phi(y)dy$ is continuous and strictly increasing on $[0,\infty)$, the equation
\begin{equation}
\int_{0}^{1}\Phi(y)dy=1
\end{equation}
with respect to $\sigma^2>0$,
has a unique solution $\sigma^2_{***}(\nu)>0$,
so that $F_2>(Nd)^{-1}$ for $\sigma^2<\sigma^2_{***}(\nu)$ and $F_2<(Nd)^{-1}$ for $\sigma^2>\sigma^2_{***}(\nu)$.

Note that, owing to Eq. (\ref{sec5-case1-eq0}), there exists a $\sigma^2_0>0$ such that if 
$\sigma^2>\sigma^2_0$, then $M(\mathbf{v}(x))>0$ for $x\in[0,1]$, from which $|\Psi(y)|<1$ for every $y\in (0,1]$. Then, adding this conclusion to Eq. (\ref{sec5-case1-eq1}), the Lebesgue dominated convergence theorem ensures that
\begin{equation}\label{sec5-case1-eq5}
\lim_{\sigma^2\rightarrow\infty}\int_{0}^{1}\Psi(y)dy=0.
\end{equation}
On the other hand, a consequence of Eq. (\ref{sec5-case1-eq2}) is that
\begin{equation}\label{sec5-case1-eq6}
\int_{0}^{1}\Psi(y)dy|_{\sigma^2=0}>1.
\end{equation}
Combining  (\ref{sec5-case1-eq5}) and (\ref{sec5-case1-eq6}), and using the fact that the function $\sigma^2\longmapsto\int_{0}^{1}\Psi(y)dy$ is continuous and strictly decreasing on $[0,\infty)$, the equation
\begin{equation}
\int_{0}^{1}\Psi(y)dy=1
\end{equation}
with respect to $\sigma^2>0$,
has a unique solution $\sigma^2_{*}(\nu)>0$,
so that  $F_1<(Nd)^{-1}$ for $\sigma^2<\sigma^2_{*}(\nu)$ and $F_1>(Nd)^{-1}$ for $\sigma^2>\sigma^2_{*}(\nu)$.

Next, we study the relationships between the threshold values $\sigma^2_{*}(\nu)$, $\sigma^2_{**}(\nu)$ and $\sigma^2_{***}(\nu)$ for $\sigma^2=\sigma^2_{2,2}$ and the effect of the deme-scaled extinction rate $\nu$ on these values when $\sigma^2$ and the population-scaled means of all the payoffs are of the same small order, while $\sigma^2_{1,1}=\sigma^2_{1,2}=\sigma^2_{2,1}=0$.
By using Eqs. (\ref{sec1-case5-eq5})-(\ref{sec1-case5-eq7}), the threshold values are found to be
\begin{subequations}\label{sec5-case1-eq7}
\begin{align}
\sigma^2_{*}(\nu)&=\frac{5(6+\nu)(10+\nu)}{3(231+174\nu+35\nu^2+2\nu^3)}\Big[(4+\nu)(\mu_{2,1}-\mu_{1,1})+(5+2\nu)
(\mu_{2,2}-\mu_{1,2})\Big],\\
\sigma^2_{**}(\nu)&=\frac{2(6+\nu)}{5+\nu}\Big(\mu_{2,1}+\mu_{2,2}-\mu_{1,1}-\mu_{1,2}\Big),\\
\sigma^2_{***}(\nu)&=\frac{10(6+\nu)(10+\nu)}{3(288+127\nu+20\nu^2+\nu^3)}\Big[(5+2\nu)(\mu_{2,1}-\mu_{1,1})+(4+\nu)
(\mu_{2,2}-\mu_{1,2})\Big].
\end{align}
\end{subequations}
It a worth noting, which can be easily checked, that the coefficients of $\mu_{2,1}-\mu_{1,1}$ and $\mu_{2,2}-\mu_{1,2}$ in the first two expressions above %$\nu\longmapsto\sigma^2_{*}(\nu)$ and $\nu\longmapsto\sigma^2_{**}(\nu)$ 
are decreasing functions with respect to $\nu$ in $(0,\infty)$, while those on the third %$\nu\longmapsto\sigma^2_{***}(\nu)$ is an
are increasing functions. %This can be deduced by analyzing the coefficients of $\mu_{2,1}-\mu_{1,1}$ and $\mu_{2,2}-\mu_{1,2}$ in Eq. (\ref{sec5-case1-eq7}).
Therefore, in the case where $\mu_{2,1}-\mu_{1,1}>0$ and $\mu_{2,2}-\mu_{1,2}>0$ as in a Prisoner's dilemma, the conditions for $F_1>(Nd)^{-1}$ and $F_1 > F_2$, given by $\sigma^2> \sigma^2_{*}(\nu)>0$ and $\sigma^2> \sigma^2_{**}(\nu)>0$, respectively, become less stringent as $\nu$ increases, while the condition for $F_2<(Nd)^{-1}$,  given by $\sigma^2> \sigma^2_{***}(\nu)>0$, becomes more stringent. 
%This implies that increasing the deme-scaled extinction rate will make it more difficult for $F_2$ to exceed $(Nd)^{-1}$ and easier for both $F_1$ to exceed $(Nd)^{-1}$ and $F_1$ to be greater than $F_2$.
Furthermore, in this case, we have that
\begin{align}
&\sigma^2_{***}(\nu)-\sigma^2_{**}(\nu)\nonumber\\
&=\frac{2(1+\nu)(6+\nu)\Big[(136+33\nu+2\nu^2)(\mu_{2,1}-\mu_{1,1})  + (386+108\nu+7\nu^2)(\mu_{2,2}-\mu_{1,2})\Big]}{3(5+\nu)(288+127\nu+20\nu^2+\nu^3)}
\end{align}
and
\begin{align}
&\sigma^2_{**}(\nu)-\sigma^2_{*}(\nu)\nonumber\\
&=\frac{
(1+\nu) (6+\nu)\Big[(386 + 108\nu + 7\nu^2)(\mu_{2,1}-\mu_{1,1})
+(136 + 33 \nu + 2 \nu^2)(\mu_{2,2}-\mu_{1,2})\Big]
}{3(5+\nu) (231+174\nu+35\nu^2+2\nu^3)}
\end{align}
are positive,
from which
\begin{equation}
\sigma^2_{*}(\nu)<\sigma^2_{**}(\nu)<\sigma^2_{***}(\nu),
\end{equation}
for $\nu>0$. These inequalities imply that the conditions for $F_1<(Nd)^{-1}$ and $F_2<(Nd)^{-1}$ cannot occur simultaneously for any $\sigma_{2}^2>0$ and any $\nu>0$.

The above properties are illustrated in Figure \ref{figure1} in the special case  where 
the population-scaled means of the payoffs are given by $\mu_{1,1}=\mu_b-\mu_c$, $\mu_{1,2}=-\mu_c$, $\mu_{2,1}=\mu_b$ and $\mu_{2,2}=0$, where $\mu_b>\mu_c>0$, which corresponds to an additive Prisoner's dilemma. 

%Upon substituting these values in Eq. (\ref{sec5-case1-eq7}), we can derive the expressions
%\begin{subequations}\label{sec5-case1-eq8}
%\begin{align}
%\sigma^2_{*}(\nu)&=\frac{5(3+\nu)(6+\nu)(10+\nu)}{231+174\nu+35\nu^2+2\nu^3}\mu_{c},\\
%\sigma^2_{**}(\nu)&=\frac{4(6+\nu)}{5+\nu}\mu_{c},\\
%\sigma^2_{***}(\nu)&=\frac{10(3+\nu)(6+\nu)(10+\nu)}{288+127\nu+20\nu^2+\nu^3}\mu_{c}.
%\end{align}
%\end{subequations}
%It is easy to check that $\nu\longmapsto\sigma^2_{*}(\nu)$ and $\nu\longmapsto\sigma^2_{**}(\nu)$ are decreasing functions on $(0,\infty)$, while $\nu\longmapsto\sigma^2_{***}(\nu)$ is an increasing function on the same interval (see Figure \ref{figure1}). Thus, increasing the deme-scaled extinction rate will make it more difficult for $F_2$ to exceed $(Nd)^{-1}$ and easier for both $F_1$ to exceed $(Nd)^{-1}$ and $F_1$ to be greater than $F_2$. 

%%%%%%%%%%%%%%%%%%%%%%%%%%%%%%%%%%%%%%%%%%%%%%%%%%%%%%%%%%%%%%%%%%%%%%%%%%%%%%%%%%%%%%%%%%%%%%%%%%%%%%%%%%%%%%%%%%%%%%%%%%%%%%%%%%%%%%%%%%%%%%%%%%%%%%%%%%%%%%%%%%%%%%%%%%%%%%%%%%%%%%%%%%%%%%%%%%%%%%%%%%%%%%%%%%%%%%%%%%%%%%%%%%%%%%%%%%%%%%%%%%%%%%%%%%%%%%%%%%%%%%%%%%%%%%%%%%%%%%%%%%%%%%%%%%%%%%%%%%%%%%%
\begin{figure}[hbt!]
\includegraphics[height=6cm, width=10cm]{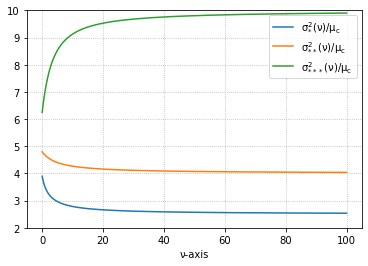}
\caption{\textit{The curves represent the threshold values $\sigma^2_{*}(\nu)/\mu_c$, $\sigma^2_{**}(\nu)/\mu_c$ and $\sigma^2_{***}(\nu)/\mu_c$ with respect to the deme-scaled extinction rate $\nu$ in Case 1 where $\sigma^2_{1,1}=\sigma^2_{1,2}=\sigma^2_{2,1}=0$, $\sigma^2_{2,2}=\sigma^2>0$ with $\mu_{1,1}=\mu_b-\mu_c$, $\mu_{1,2}=-\mu_c$, $\mu_{2,1}=\mu_b$, $\mu_{2,2}=0$ and $\mu_b>\mu_c>0$. These curves indicate the minimum values that $\sigma_{}^2/\mu_c$ must exceed in order to have $F_1>(Nd)^{-1}$, $F_1 > F_2$ and  $F_2<(Nd)^{-1}$, respectively.
Since the inequalities 
$\sigma^2_{*}(\nu)<\sigma^2_{**}(\nu)<\sigma^2_{***}(\nu)$ hold for every $\nu>0$, the scenario where $F_1<(Nd)^{-1}$ and $F_2<(Nd)^{-1}$ cannot occur.
}}
\label{figure1}
\end{figure}

Let us summarize our findings in Case 1.

%%%%%%%%%
%%%%%%%%%
\paragraph{Result 2.}
\textit{
If the population-scaled variance of the payoff to $S_2$ against $S_2$, represented by $\sigma^2_{}$, is large enough, while the population-scaled variances  of the other payoffs are negligible, then the inequalities $F_1>(Nd)^{-1}$, $F_1 > F_2$ and $F_2<(Nd)^{-1}$ will hold. Moreover, when the population-scaled means of the payoffs to $S_1$ against $S_1$ and $S_2$ are less than the corresponding ones to $S_2$ and all the population-scaled parameters are of the same small order, these inequalities will occur in the listed order as $\sigma^2$ increases, earlier for the first two and later in the third one as the deme-scaled extinction rate $\nu$ increases.
%increase the fixation probability of $S_1$ introduced as a single mutant ($F_1$) and decrease the corresponding fixation probability for $S_2$ ($F_2$). Moreover, there exists $\sigma_0^2$ such that, if $\sigma^2_{}>\sigma_0^2$, then selection fully favours the evolution of $S_1$ in the sense that $F_1>(Nd)^{-1}>F_2$.
}
 %%%%%%%%%
 %%%%%%%%%

%%%%%%%%%%%%%%%%%%%%%%%%%%%%%%%%%%%%%%%%%%%%%%%%%%%%%%%%%%%%%%%%%%%%%%%%%%%%%%%%%%%%%%%%%%%%%%%%%%%%%%%%%%%%%%%%%%%%%%%%%%%%%%%%%%%%%%%%%%%%%%%%%%%%%%%%%%%%%%%%%%%%%%%%%%%%%%%%%%%%%%%%%%%%%%%%%%%%%%%%%%%%%%%%%%%%%%%%%%%%%%%%%%%%%%%%%%%%%%%%%%%%%%%%%%%%%%%%%%%%%%%%%%%%%%%%%%%%%%%%%%%%%%%%%%%%%%%%%%%%%%%%

\section{Case 2: $\sigma^2_{1,1}=\sigma^2_{1,2}=\sigma^2_{2,2}=0$ and $\sigma^2_{2,1}=\sigma^2>0$}

Proceeding as previously in the case where the population-scaled variances of the payoffs different from $\sigma^2_{2,1}=\sigma^2>0$ are insignificant, it can be shown that an increase in $\sigma^2$ increases $F_1$ and decreases $F_2$. Moreover, there exist three threshold values such that
%Now, suppose that all other population-scaled variances of the different payoffs are insignificant compared to $\sigma^2_{2,1}$. In such a case, we have 
%\begin{equation}\label{sec5-case2-eq1}
%\Psi(x)=\frac{(\mu_{1,1}-\mu_{2,1})\psi(x)+(\mu_{1,2}-\mu_{2,2})\psi(1-x)+\sigma^2h(1-x)}
%{2f_{11}+x(1-x)\psi^2(x)\sigma^2},
%\end{equation}
%where its first derivative satisfies
%\begin{equation}\label{sec5-case2-eq2}
%\begin{split}
%\frac{\partial \Psi(x)}{\partial \sigma^2}&=\frac{2f_{11}h(1-x)+x(1-x)\psi^2(x)\Big[(\mu_{2,1}-\mu_{1,1})\psi(x)+(\mu_{2,2}-\mu_{1,2})\psi(1-x)\Big]}
%{\left(2f_{11}+x(1-x)\sigma^2\psi^2(1-x)\right)^2}>0,
%\end{split}
%\end{equation}
%for any $x\in[0,1]$. 
%
%Using a similar approach to the previous case, the function $\sigma^2\longmapsto F_1$ (resp. $\sigma^2\longmapsto F_2$) is an increasing (resp. decreasing) on $(0,\infty)$. Moreover, there exist three positive threshold values of $\sigma^2$, denoted as $\sigma^2_{*}(\nu)$, $\sigma^2_{**}(\nu)$, and $\sigma^2_{***}(\nu)$, such that:
\begin{subequations}\label{sec5-case2-eq3}
\begin{align}
&F_1>(Nd)^{-1}\mbox{   as soon as  }\sigma^2> \sigma^2_{*}(\nu),\\
&F_1>F_2\mbox{  as soon as  }\sigma^2> \sigma^2_{**}(\nu),\\
&F_2<(Nd)^{-1}\mbox{   as soon as  }\sigma^2> \sigma^2_{***}(\nu).
\end{align}
\end{subequations} 
If $\sigma^2$ and the population-scaled means of all the payoffs are of the same small order, then the threshold values are given by
%We will now examine how $\sigma^2_{*}(\nu)$, $\sigma^2_{**}(\nu)$, and $\sigma^2_{***}(\nu)$ are related and how they are affected by $\nu$, particularly when the scaled means and variances are of the same small order. By using Eqs. (\ref{sec1-case5-eq5})-(\ref{sec1-case5-eq7}) with the new expressions, we have
\begin{subequations}\label{sec5-case2-eq4}
\begin{align}
\sigma^2_{*}(\nu)&=\frac{5(6+\nu)(10+\nu)}{243+127\nu+20\nu^2+\nu^3}\Big[(4+\nu)(\mu_{2,1}-\mu_{1,1})+(5+2\nu)
(\mu_{2,2}-\mu_{1,2})\Big],\\
\sigma^2_{**}(\nu)&=\frac{6(6+\nu)}{7+\nu}\Big(\mu_{2,1}+\mu_{2,2}-\mu_{1,1}-\mu_{1,2}\Big),\\
\sigma^2_{***}(\nu)&=\frac{10(6+\nu)(10+\nu)}{3(188+117\nu+20\nu^2+\nu^3)}\Big[(5+2\nu)(\mu_{2,1}-\mu_{1,1})+(4+\nu)
(\mu_{2,2}-\mu_{1,2})\Big].
\end{align}
\end{subequations}
Even in the  case where $\mu_{2,1}-\mu_{1,1}>0$ and $\mu_{2,2}-\mu_{1,2}>0$, the order of these threshold values is not clear. In the particular case where
$\mu_{1,1}=\mu_b-\mu_c$, $\mu_{1,2}=-\mu_c$, $\mu_{2,1}=\mu_b$ and $\mu_{2,2}=0$ with $\mu_b>\mu_c>0$, however, we have
\begin{subequations}
\begin{align}
\sigma^2_{*}(\nu)-\sigma^2_{**}(\nu)&=\frac{3(1+\nu)(6+\nu)^2(13+\nu)}{(7+\nu)(243+127\nu+20\nu^2+\nu^3)}\mu_c>0,\\
 \sigma^2_{**}(\nu)-\sigma^2_{***}(\nu)&=\frac{2(1+\nu)(6+\nu)^2(13+\nu)}{(7+\nu)(188+117\nu+20\nu^2+\nu^3)}\mu_c>0,
\end{align}
\end{subequations}
so that
\begin{equation}
\sigma^2_{***}(\nu)<\sigma^2_{**}(\nu)<\sigma^2_{*}(\nu),
\end{equation}
for $\nu>0$. These inequalities imply that the conditions for $F_1>(Nd)^{-1}$ and $F_2>(Nd)^{-1}$ cannot simultaneously occur. Moreover, while $\sigma^2_{*}(\nu)$ and $\sigma^2_{**}(\nu)$ are monotonic functions, actually monotonically increasing functions, with respect to $\nu$ in the whole interval $(0, \infty)$, 
this is not necessarily the case for $\sigma^2_{***}(\nu)$ which can decrease first and then increase to a global maximum as $\nu$ increases. This is illustrated in Figure \ref{figure2}. 

We are now ready to state our next result.

\paragraph{Result 3.}
\textit{
If the population-scaled variance of the payoff to $S_2$ against $S_1$, represented by $\sigma^2_{}$, is large enough, while the population-scaled variances  of the other payoffs are negligible, then the inequalities $F_1>(Nd)^{-1}$, $F_1 > F_2$ and $F_2<(Nd)^{-1}$ will hold. Moreover, when all the population-scaled parameters are of the same small order, these inequalities may occur in the reversed order as $\sigma^2$ increases, and all later  as the deme-scaled extinction rate $\nu$ tends to infinity.
%Increasing $\sigma^2_{2,1}$, the scaled variance of a defector's payoff when interacting with a cooperator, in the absence of fluctuations in other payoffs, will result in an increase in the fixation probability of cooperators ($F_1$) and a decrease in the fixation probability of defectors ($F_2$). This shift towards higher $F_1$ and lower $F_2$ may make it possible for selection to fully favor the evolution of cooperation. In particular, there exists a constant value $\sigma_0^2$ such that for any $\sigma^2_{2,1}>\sigma_0^2$, the evolution of cooperation is fully favored by selection, with $F_1>(Nd)^{-1}>F_2$.
}

%%%%%%%%%%%%%%%%%%%%%%%%%%%%%%%%%%%%%%%%%%%%%%%%%%%%%%%%%%%%%%%%%%%%%%%%%%%%%%%%%%%%%%%%%%%
%%%%%%%%%%%%%%%%%%%%%%%%%%%%%%%%%%%%%%%%%%%%%%%%%%%%%%%%%%%%%%%%%%%%%%%%%%%%%%%%%%%%%%%%%%%%%%%%%%%%%%%%%%%%%%%%%%%%%%%%%%%%%%%%%%%%%%%%%%%%%%%%%%%%%%%%%%%%%%%%%%%%%%%%%%%%%%%%%%%%%%%%%%%%%%%%%%
%%%%%%%%%%%%%%%%%%%%%%%%%%%%%%%%%%%%%%%%%%%%%%%%%%%%%%%%%%%%%%%%%%%%%%%%%%%%%%%%%%%%%%%%%%%%%%%%%
\begin{figure}[hbt!]
\includegraphics[height=6cm, width=10cm]{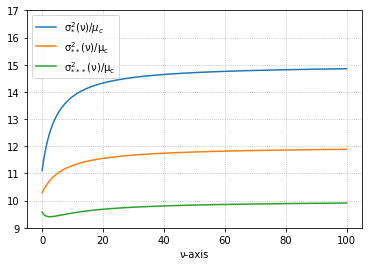}
\caption{\textit{The curves represent the minimum threshold values $\sigma^2_{*}(\nu)/\mu_c$, $\sigma^2_{**}(\nu)/\mu_c$ and $\sigma^2_{***}(\nu)/\mu_c$ for $\sigma_{}^2/\mu_c$ to have $F_1>(Nd)^{-1}$, $F_1 > F_2$ and  $F_2<(Nd)^{-1}$, respectively, in Case 2 where $\sigma^2_{1,1}=\sigma^2_{1,2}=\sigma^2_{2,2}=0$, $\sigma^2_{2,1}=\sigma^2>0$ with $\mu_{1,1}=\mu_b-\mu_c$, $\mu_{1,2}=-\mu_c$, $\mu_{2,1}=\mu_b$, $\mu_{2,2}=0$ and $\mu_b>\mu_c>0$.  Note that the worst scenario for the three conditions to hold occurs when the deme-scaled extinction rate $\nu\rightarrow\infty$.
%the best scenario for the first two conditions to hold occurs when $\nu\rightarrow0$, representing a population structured into isolated demes. On the other hand, the worst scenario for $F_1>(Nd)^{-1}$, $F_1>F_2$, and $F_2<(Nd)^{-1}$ arises when $\nu\rightarrow\infty$, representing a population without structure.
}}
\label{figure2}
\end{figure}
%%%%%%%%%%%%%%%%%%%%%%%%%%%%%%%%%%%%%%%%%%%%%%%%%%%%%%%%%%%%%%%%%%%%%%%%%%%%%%%%%%%%%%%%%%%
%%%%%%%%%%%%%%%%%%%%%%%%%%%%%%%%%%%%%%%%%%%%%%%%%%%%%%%%%%%%%%%%%%%%%%%%%%%%%%%%%%%%%%%%%%%%%%%%%%%%%%%%%%%%%%%%%%%%%%%%%%%%%%%%%%%%%%%%%%%%%%%%%%%%%%%%%%%%%%%%%%%%%%%%%%%%%%%%%%%%%%%%%%%%%%%%%%
%%%%%%%%%%%%%%%%%%%%%%%%%%%%%%%%%%%%%%%%%%%%%%%%%%%%%%%%%%%%%%%%%%%%%%%%%%%%%%%%%%%%%%%%%%%%%%%%%

\section{Case 3: $\sigma^2_{1,1}=\sigma^2_{1,2}=0$ and $\sigma^2_{2,2}=\sigma^2_{2,1}=\sigma^2>0$}

Finally, in the case where the population-scaled variances of the payoffs to $S_1$ against $S_1$ and $S_2$ are negligible compared to those to $S_2$ given by $\sigma^2>0$, it can be shown as previously that $F_1$ increases and $F_2$ decreases as $\sigma^2$ increases such that $F_1>(Nd)^{-1}$, $F_1 > F_2$ and  $F_2<(Nd)^{-1}$ occur as soon as $\sigma^2 > \sigma^2_{*}(\nu),$ $\sigma^2_{**}(\nu)$ and $\sigma^2_{***}(\nu)$, respectively. Moreover, if the population-scaled parameters are of the same small order, then the threshold values are given by
\begin{subequations}\label{sec5-case3-eq4}
\begin{align}
\sigma^2_{*}(\nu)&=\frac{5(6+\nu)(10+\nu)}{936+ 649\nu+125\nu^2+7\nu^3}\Big[(4+\nu)(\mu_{2,1}-\mu_{1,1})+(5+2\nu)
(\mu_{2,2}-\mu_{1,2})\Big],\\
\sigma^2_{**}(\nu)&=\frac{3(6+\nu)}{11+2\nu}\Big(\mu_{2,1}+\mu_{2,2}-\mu_{1,1}-\mu_{1,2}\Big),\\
\sigma^2_{***}(\nu)&=\frac{5(6+\nu)(10+\nu)}{3(238+122\nu+20\nu^2+\nu^3)}\Big[(5+2\nu)(\mu_{2,1}-\mu_{1,1})+(4+\nu)
(\mu_{2,2}-\mu_{1,2})\Big].
\end{align}
\end{subequations}
%coefficients of $\mu_{2,1}-\mu_{1,1}$ and $\mu_{2,2}-\mu_{1,2}$ in the first two expressions above %$\nu\longmapsto\sigma^2_{*}(\nu)$ and $\nu\longmapsto\sigma^2_{**}(\nu)$ 
%are decreasing functions with respect to $\nu$ in $(0,\infty)$, while those on the third %$\nu\longmapsto\sigma^2_{***}(\nu)$ is an
%are increasing functions.
Here, the coefficients of $\mu_{2,1}-\mu_{1,1}$ and $\mu_{2,2}-\mu_{1,2}$ in the expressions of
$\sigma^2_{*}(\nu)$ and $\sigma^2_{**}(\nu)$ are decreasing functions on $(0,\infty)$, while the first coefficient is increasing and the second decreasing in the expression of
$\sigma^2_{***}(\nu)$. 
Moreover, when $\mu_{2,1}>\mu_{1,1}$ and $\mu_{2,2}>\mu_{1,2}$, the differences
%\begin{subequations}\label{sec5-case3-eq5}
\begin{align}\label{sec5-case3-eq5}
&\sigma^2_{**}(\nu)-\sigma^2_{*}(\nu)\nonumber\\
&\quad=\frac{(1+\nu)(6+\nu)\Big[(608+169\nu+11\nu^2)(\mu_{2,1}-\mu_{1,1})+(58+14\nu+\nu^2)(\mu_{2,2}-\mu_{1,2})\Big]}
{(11+2\nu)(936+649\nu+125\nu^2+7\nu^3)}
%%%%
%%%%
%&\sigma^2_{***}(\nu)-\sigma^2_{**}(\nu)\nonumber\\
%&\quad=\frac{
%(1+\nu) (6+\nu)\Big[(608+169\nu+11\nu^2)(\mu_{2,1}-\mu_{1,1})
%+(58+14\nu+\nu^2)(\mu_{2,2}-\mu_{1,2})\Big]
%}{3(11+2\nu)(238+122\nu+20\nu^2+\nu^3)},
\end{align}
and
\begin{align}\label{sec5-case3-eq5bis}
%&\sigma^2_{**}(\nu)-\sigma^2_{*}(\nu)\nonumber\\
%&\quad=\frac{(1+\nu)(6+\nu)\Big[(608+169\nu+11\nu^2)(\mu_{2,1}-\mu_{1,1})+(58+14\nu+\nu^2)(\mu_{2,2}-\mu_{1,2})\Big]}
%{(11+2\nu)(936+649\nu+125\nu^2+7\nu^3)},\\
%%%%
%%%%
&\sigma^2_{***}(\nu)-\sigma^2_{**}(\nu)\nonumber\\
&\quad=\frac{
(1+\nu) (6+\nu)\Big[(608+169\nu+11\nu^2)(\mu_{2,1}-\mu_{1,1})
+(58+14\nu+\nu^2)(\mu_{2,2}-\mu_{1,2})\Big]
}{3(11+2\nu)(238+122\nu+20\nu^2+\nu^3)}
\end{align}
%\end{subequations}
are both positive, from which  we have
\begin{equation}\label{sec5-case3-eq6}
\sigma^2_{*}(\nu)<\sigma^2_{**}(\nu)<\sigma^2_{***}(\nu),
\end{equation}
for $\nu>0$. This is similar to Case 1.

Figure \ref{figure3} shows the curves of $\sigma^2_{*}(\nu)/\mu_c$, $\sigma^2_{**}(\nu)/\mu_c$ and $\sigma^2_{***}(\nu)\mu_c$
in the particular case where
$\mu_{1,1}=\mu_b-\mu_c$, $\mu_{1,2}=-\mu_c$, $\mu_{2,1}=\mu_b$ and $\mu_{2,2}=0$ with $\mu_b>\mu_c>0$. There is little difference with Figure \ref{figure1} in Case 1, which suggests that the variability in the payoff to $S_2$ against $S_2$ is more important than the variability in the payoff to $S_2$ against $S_1$ in conditions for the evolution of $S_1$. 
%Consider the scenario in which the scaled means are described by the additive prisoner's dilemma. The quantities in 
%Eq. (\ref{sec5-case3-eq4}) become
%\begin{subequations}\label{sec5-case3-eq7}
%\begin{align}
%\sigma^2_{*}(\nu)&=\frac{15(3+\nu)(6+\nu)(10+\nu)}{936+ 649\nu+125\nu^2+7\nu^3}\mu_{c},\\
%\sigma^2_{**}(\nu)&=\frac{6(6+\nu)}{11+2\nu}\mu_{c},\\
%\sigma^2_{***}(\nu)&=\frac{5(3+\nu)(6+\nu)(10+\nu)}{(238+122\nu+20\nu^2+\nu^3)}\mu_{c}.
%\end{align}
%\end{subequations}
%It is clear that the functions $\nu\longmapsto\sigma^2_{*}(\nu)$ and $\nu\longmapsto\sigma^2_{**}(\nu)$ are decreasing, while $\nu\longmapsto\sigma^2_{}(\nu)$ is increasing. Consequently, as $\nu$ increases, the conditions for $F_1>(Nd)^{-1}$ and $F_1>F_2$ become less stringent, while the condition for $F_2<(Nd)^{-1}$ becomes stronger.
%See Figure \ref{figure3}, which shows the curves of $\sigma^2_{*}(\nu)/\mu_c$, $\sigma^2_{**}(\nu)/\mu_c$, and $\sigma^2_{***}(\nu)\mu_c$. 

\paragraph{Result 4.}
\textit{
The conditions for $F_1>(Nd)^{-1}$, $F_1 > F_2$ and $F_2<(Nd)^{-1}$  in Case 3, where
 the population-scaled variances  of the payoffs to $S_2$ against $S_1$ and $S_2$ are equal to $\sigma^2>0$, while the population-scaled variances  of the payoffs to $S_1$ against $S_1$ and $S_2$ are negligible in comparison, are similar to the conditions obtained in Case 1.
%Increasing 
%$\sigma^2_{2,2}$, the scaled variance of the payoff to a defector when it interacts with another defector, or
%$\sigma^2_{2,1}$, the scaled variance of a defector's payoff when interacting with a cooperator, in the absence of fluctuations in other payoffs, will increase the fixation probability of cooperators ($F_1$) and decrease in the fixation probability of defectors ($F_2$).
}

%%%%%%%%%%%%%%%%%%%%%%%%%%%%%%%%%%%%%%%%%%%%%%%%%%%%%%%%%%%%%%%%%%%%%%%%%%%%%%%%%%%%%%%%%%%
%%%%%%%%%%%%%%%%%%%%%%%%%%%%%%%%%%%%%%%%%%%%%%%%%%%%%%%%%%%%%%%%%%%%%%%%%%%%%%%%%%%%%%%%%%%%%%%%%%%%%%%%%%%%%%%%%%%%%%%%%%%%%%%%%%%%%%%%%%%%%%%%%%%%%%%%%%%%%%%%%%%%%%%%%%%%%%%%%%%%%%%%%%%%%%%%%%
%%%%%%%%%%%%%%%%%%%%%%%%%%%%%%%%%%%%%%%%%%%%%%%%%%%%%%%%%%%%%%%%%%%%%%%%%%%%%%%%%%%%%%%%%%%%%%%%%
\begin{figure}[hbt!]
\includegraphics[height=6cm, width=10cm]{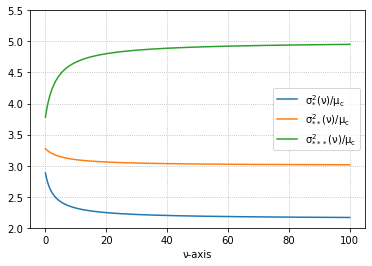}
\caption{\textit{The plotted curves represent the minimum threshold values $\sigma^2_{*}(\nu)/\mu_c$, $\sigma^2_{**}(\nu)/\mu_c$ and $\sigma^2_{***}(\nu)/\mu_c$ for $\sigma_{}^2/\mu_c$ to have $F_1>(Nd)^{-1}$, $F_1 > F_2$ and  $F_2<(Nd)^{-1}$, respectively, in  Case 3 where $\sigma^2_{1,1}=\sigma^2_{1,2}=0$, $\sigma^2_{2,2}=\sigma^2_{2,1}=\sigma^2>0$ with $\mu_{1,1}=\mu_b-\mu_c$, $\mu_{1,2}=-\mu_c$, $\mu_{2,1}=\mu_b$, $\mu_{2,2}=0$ and $\mu_b>\mu_c>0$. 
%These thresholds indicate the values that $\sigma_{2,1}^2/\mu_c=\sigma_{2,2}^2/\mu_c$ must exceed in order for selection to favor the evolution of $S_1$, to favor the evolution of $S_1$ more than the evolution of $S_2$, or to disfavor the evolution of $S_2$. 
Like in Case 1, the best scenario for the first two conditions to hold occurs when $\nu\rightarrow\infty$, while the best scenario for the third condition occurs when $\nu\rightarrow0$.}
}
\label{figure3}
\end{figure}
%%%%%%%%%%%%%%%%%%%%%%%%%%%%%%%%%%%%%%%%%%%%%%%%%%%%%%%%%%%%%%%%%%%%%%%%%%%%%%%%%%%%%%%%%%%%%%%%%%%%%%%%%%%%%%%%%%%%%%%%%%%%%%%%%%%%%%%%%%%%%%%%%%%%%%%%%%%%%%%%%%%%%%%%%%%%%%%%%%%%%%%%%%%%%%%%%%
%%%%%%%%%%%%%%%%%%%%%%%%%%%%%%%%%%%%%%%%%%%%%%%%%%%%%%%%%%%%%%%%%%%%%%%%%%%%%%%%%%%%%%%%%%%%%%%%%%%%%%%%%%%%%%%%%%%%%%%%%%%%%%%%%%%%%%%%%%%%%%%%%%%%%%%%%%%%%%%%%%%%%%%%%%%%%%%%%%%%%%%%%%%%%%%%%%

\section{Discussion}

Previous studies examining the island model primarily focused on constant payoffs. This paper examines a haploid population that is subdivided into $d$ demes, each containing $N\geq2$ individuals. The payoffs received by offspring using the strategies $S_1$ and $S_2$ in random pairwise interactions within demes are subject to stochastic fluctuations from one generation to the next such that their means and variances are proportional to the inverse of the population size $Nd$. Following viability selection within demes and allowing for local extinction and uniform recolonization with probability $m\in (0, 1)$,  the next generation starts with $N$ offspring chosen at random in each deme according to a Wright-Fisher model.

Having examined the conditions outlined in Ethier and Nagylaki (\cite{EN1980}), we have shown that the exact discrete-time Markov chain, which is determined by the deme type frequencies, involves two time scales so that it converges to a diffusion process for the strategy frequencies as $d$ goes to infinity. This convergence allows us to compute the probability for strategy $S_1$ to become fixed in the population following its introduction as a single mutant, represented by $F_1$, and the corresponding fixation probability for $S_2$, represented by $F_2$.  As a result, we have examined how stochastic fluctuations in the payoffs  impact the evolutionary dynamics. It is important to highlight that the infinitesimal mean and variance of the limiting diffusion process are expressed in terms of the population-scaled means and variances of the payoffs, represented by $\mu_{k,l}$ and $\sigma^2_{k,l}$, respectively, for $k,l=1,2$, with coefficients that depend on identity-by-descent measures between offspring under neutrality besides. These have been deduced exactly and shown to depend only on the deme-scaled extinction rate $\nu=Nm$ for $N$ large enough and $m$ small enough (see Appendices H and J). 
It is worth noting that if $\nu\rightarrow\infty$, the limiting diffusion process is exactly the one obtained in a large well-mixed population  (Li and Lessard \cite{LL2020}).

In the particular context of the deterministic Prisoner's dilemma with $S_1$ for cooperation and $S_2$ for defection, which means that $\sigma^2_{1,1}=\sigma^2_{1,2}=\sigma^2_{2,1}=\sigma^2_{2,2}=0$ with $\mu_{1,1}<\mu_{2,1}$ and $\mu_{1,2}<\mu_{2,2}$, our findings indicate that selection fully disfavours the evolution of cooperation in the sense that $F_1<(Nd)^{-1}<F_2$ for any $\nu>0$. Here $(Nd)^{-1}$ is the probability of ultimate fixation of a new mutant under neutrality.

We have investigated the effects of fluctuations in the payoffs for defection, represented by $S_2$, across various scenarios. It is reasonable to expect that there would be greater uncertainty in the payoffs associated with defection compared to cooperation. Several factors contribute to this difference, one being that defectors often find themselves more isolated than cooperators. Consequently, they may not benefit or suffer as much from fluctuations in the surrounding resources caused by environmental variations.
Furthermore, defectors may face punishments or a lack of rewards from other individuals in the population, adding to the uncertainty about their payoffs. While defection might offer higher expected payoffs compared to cooperation, this increased isolation and exposure to external factors can also result in higher variances in their payoffs. As a consequence, this elevated variance in the payoffs for defection can significantly influence the evolutionary dynamics and the overall prevalence of cooperative strategies in the population.

In the first scenario, we assume that all the population-scaled variances of the payoffs are negligible compared to $\sigma_{2,2}^2$, which represents the population-scaled variance of the payoff to  $S_2$ against $S_2$. As we increase $\sigma_{2,2}^2$, the fixation probability $F_1$ increases, while the fixation probability $F_2$ decreases for any deme-scaled extinction rate $\nu>0$. Additionally, we have identified three thresholds, namely   $\sigma^2_{*}(\nu)$, $\sigma^2_{**}(\nu)$ and $\sigma^2_{***}(\nu)$, such that  the conditions  $F_1 > (Nd)^{-1}$, $F_1>F_2$, and $F_2<(Nd)^{-1}$  hold when $\sigma_{2,2}^2>\sigma^2_{*}(\nu)$, $\sigma_{2,2}^2>\sigma^2_{**}(\nu)$ and $\sigma_{2,2}^2>\sigma^2_{***}(\nu)$, respectively.  Furthermore, when all the population-scaled means of the payoffs, represented by  $\mu_{1,1}, \mu_{1,2}, \mu_{2,1}, \mu_{2,2}$, are of the same small order, these thresholds satisfy the inequalities $\sigma^2_{*}(\nu)<\sigma^2_{**}(\nu)<\sigma^2_{***}(\nu)$.
In the case where the population-scaled means determine an additive prisoner's dilemma ($\mu_{1,1}=\mu_b-\mu_c$, $\mu_{1,2}=-\mu_c$, $\mu_{2,1}=\mu_b$, and $\mu_{2,2}=0$), we have shown that  $\sigma^2_{*}(\nu)$ and $\sigma^2_{**}(\nu)$ are decreasing with respect to $\nu>0$ while  $\sigma^2_{***}(\nu)$ is increasing. This means that increasing the deme-scaled extinction rate will lessen the conditions for selection to favour the evolution of $S_1$ and to favour the evolution of $S_1$ more than the evolution of $S_2$. Conversely, this will reinforce the condition for selection to disfavour the evolution of $S_2$ (see Figure \ref{figure1}). Under such a scenario, it is impossible for selection to disfavour the evolution of both cooperation and defection for any $\sigma_{2,2}^2>0$ and $\nu>0$.

In the second scenario, we consider that all the population-scaled variances of the payoffs other than $\sigma_{2,1}^2$, for the payoff to $S_2$ against $S_1$, are insignificant.  Similarly to the first scenario, we have shown that an increase in $\sigma_{2,1}^2$ will increase the fixation probability $F_1$ and decrease the fixation probability $F_2$. Moreover, this scenario will make it possible for selection to favour the evolution of $S_1$ in any sense. More precisely, there exist three thresholds $\sigma^2_{*}(\nu)$, $\sigma^2_{**}(\nu)$ and $\sigma^2_{***}(\nu)$, where
$F_1 > (Nd)^{-1}$, $F_1>F_2$ and $F_2<(Nd)^{-1}$ as long as $\sigma_{2,1}^2>\sigma^2_{*}(\nu)$, $\sigma_{2,1}^2>\sigma^2_{**}(\nu)$ and $\sigma_{2,1}^2>\sigma^2_{***}(\nu)$, respectively. Here, there is no clear relationship between these thresholds. However, when the population-scaled means of the payoffs are of the same small order and determine an additive Prisoner's dilemma, we find that $\sigma^2_{***}(\nu)<\sigma^2_{**}(\nu)<\sigma^2_{*}(\nu)$ for any  $\nu>0$. Furthermore, the functions
$\sigma^2_{*}(\nu)$ and $\sigma^2_{**}(\nu)$ are increasing, while $\sigma^2_{***}(\nu)$ is decreasing. This means that increasing the deme-scaled extinction rate will strengthen the conditions for $F_1>(Nd)^{-1}$ and $F_1>F_2$ and lessen the condition for $F_2<(Nd)^{-1}$ (see Figure \ref{figure2}). Under such a scenario, it is impossible for selection to favour the evolution of both cooperation and defection for any $\sigma_{2,2}^2>0$ and $\nu>0$.

In the third scenario, we assume that the population-scaled variances of the payoffs to $S_1$ are insignificant compared to the population-scaled variances of the payoffs to $S_2$ against $S_1$ and to $S_2$ against $S_2$, given by $\sigma^2$. It is important to highlight that increasing $\sigma^2$ will lead to an increase in $F_1$ and a decrease in $F_2$. In addition, it is crucial to note that all the conclusions from the first scenario remain valid (refer to Figure \ref{figure3}). By comparing Figures \ref{figure1}, \ref{figure2} and \ref{figure3}, it becomes evident that for increasing $F_1$, it is more effective to increase the population-scaled variances of the payoffs to both $S_2$ against $S_1$ and $S_2$ against $S_1$ simultaneously, rather than concentrating the increase on just one of these variances.

In general, assuming that all population-scaled means and variances of the payoffs are of the same small order, we have shown that decreasing the population-scaled variance of any payoff to $S_1$  or increasing the population-scaled variance of any payoff to $S_2$ increases the fixation probability $F_1$ and decreases the fixation probability $F_2$. Therefore, either increases the difference $F_1-F_2$. 
As the deme-scaled extinction rate $\nu$ decreases, which rises the level of identity by descent among offspring within demes, the weight of $\sigma_{2,1}^2-\sigma_{1,2}^2$ in this difference increases, while the weight of $\sigma_{2,2}^2-\sigma_{1,1}^2$ decreases. This may be counterintuitive, since then offspring of the same type are more likely to interact than offspring of different types. But we have to remind ourselves that then offspring are also more likely to be in competition with offspring of the same type in the same deme. Note also that $F_1-F_2$ does not depend on $\nu$ in the case of all deterministic payoffs.

 Our results extend previous studies that have explored the stochastic Prisoner's dilemma in a large but finite well-mixed population which corresponds to the case where $\nu \rightarrow \infty$ (see Li and Lessard \cite{LL2020} for a Wright-Fisher model and by Kroumi \textit{et al.} \cite{KMLL2021} for a Moran model). In these studies, it was shown that more variability in the payoffs for defection than in the payoffs for cooperation is required for selection to favour the evolution of cooperation in any sense. Note that
a similar conclusion was drawn in the context of an infinite well-mixed population  (Zheng \textit{et al.} \cite{ZLLT2018,ZLLT2017}). In this setting, the same condition was found to be necessary  for the fixation of cooperation  to be stochastically locally stable, and for the fixation of defection  to be stochastically locally unstable. In an island model, the weights of the variances of the different payoffs may differ, but the general conclusion still holds.

Finally, the island model with the possibility of extinction and recolonization can be extended to study groups distributional dynamics stemming from the possibility of opting out in repeated games  (Cressman and Krivan \cite{CK2022}, Krivan and Cressman \cite{KC2020}, Li and Lessard \cite{LL2021}, Zhang \emph{et al.} \cite{Zetal2016}). In the general case, however, the extinction probability or rate depends on the group composition, which makes the analysis more challenging.

%These consistent results across different population models and settings emphasize the crucial role that payoff variability plays in shaping the evolutionary dynamics of cooperation and defection. The benefit of larger variances in defection payoffs for the promotion of cooperation highlights the sensitivity of cooperative strategies to uncertainty in the environment.

%%%%%%%%%%%%%%%%%%%%%%%%%%%%%%%%%%%%%%%%%%%%%

%%%%%%%%%%%%%%%%%%%%%%%%%%%%%%%%%%%%%%%%%%%%%%%%%%%%%%%%%%%%%%%%%%%%%%%%%%%%%%%%%%%%%%%%%%%%%%%
%%%%%%%%%%%%%%%%%%%%%%%%%%%%%%%%%%%%%%%%%%%%%%%%%%%%%%%%%%%%%%%%%%%%%%%%%%%%%%%%%%%%%%%%%%%%%%%

\section*{Appendices  }

\subsection*{Appendix A: Proof of Lemma 1  }

%\begin{proof}
By using Taylor's theorem with the Lagrange form of the remainder, there exists a real random variable $\xi_i$ such that $|\xi_i|<|\tilde{a}_{i}|< 1$ and
\begin{equation}\label{sec2-eq8}
\frac{1}{1+\tilde{a}_{i}}=1-\tilde{a}_{i}+\tilde{a}_{i}^2-\frac{\tilde{a}_{i}^3}{(1+\xi_i)^{3}},
\end{equation}
from which we deduce 
\begin{align}\label{sec2-eq9}
E_{\mathbf{z}}\left[\tilde{x}\right]&=\sum_{i=0}^{N}z_ix_iE_{\mathbf{z}}\left[\frac{1+\tilde{a}_{1,i}}{1+\tilde{a}_{i}}\right]\nonumber\\
&=\sum_{i=0}^{N}z_ix_iE_{\mathbf{z}}\left[\left(1+\tilde{a}_{1,i}\right)\left(1-\tilde{a}_{i}+\tilde{a}_{i}^2\right)\right]+o(d^{-1})\nonumber\\
&=x+\sum_{i=0}^{N}z_ix_iE_{\mathbf{z}}\left[\tilde{a}_{1,i}-\tilde{a}_{i}\right]-\sum_{i=0}^{N}z_ix_iE_{\mathbf{z}}\left[\left(\tilde{a}_{1,i}-\tilde{a}_{i}\right)\tilde{a}_{i}\right]+o(d^{-1}).
\end{align}
On the other hand, we have 
\begin{align}\label{sec2-eq10}
&\sum_{i=0}^{N}z_ix_iE_{\mathbf{z}}\left[\tilde{a}_{1,i}-\tilde{a}_{i}\right]\nonumber\\
&=\sum_{i=0}^{N}z_ix_iE_{\mathbf{z}}\left[\tilde{a}_{1,i}\right]
-\sum_{i=0}^{N}z_ix_i\sum_{j=0}^{N}z_jE_{\mathbf{z}}\left[\tilde{a}_{j}\right]-
\left(
\sum_{i=0}^{N}z_ix_iE_{\mathbf{z}}\left[\tilde{a}_{i}\right]
-\sum_{i=0}^{N}z_ix_i\sum_{j=0}^{N}z_jE_{\mathbf{z}}\left[\tilde{a}_{j}\right]
\right)\nonumber\\
&=Cov_{\mathbf{z}}(q_{1,I_1},a_{I_1})-Cov_{\mathbf{z}}(q_{1,I_1},a_{I_2})
\end{align}
and
\begin{align}\label{sec2-eq11}
&\sum_{i=0}^{N}z_ix_iE_{\mathbf{z}}\left[\left(\tilde{a}_{1,i}-\tilde{a}_{i}\right)\tilde{a}_{i}\right]\nonumber\\
&=\sum_{i=0}^{N}z_ix_iE_{\mathbf{z}}\left[\tilde{a}_{1,i}\tilde{a}_{i}\right]-\sum_{i=0}^{N}z_ix_i\sum_{j=0}^{N}z_jE_{\mathbf{z}}\left[\tilde{a}_{j}^2\right]-
\left(
\sum_{i=0}^{N}z_ix_iE_{\mathbf{z}}\left[\tilde{a}_{i}^2\right]
-
\sum_{i=0}^{N}z_ix_i\sum_{j=0}^{N}z_jE_{\mathbf{z}}\left[\tilde{a}_{j}^2\right]
\right)\nonumber\\
&=Cov_{\mathbf{z}}(q_{1,I_1},a_{I_1}a_{I_2})-Cov_{\mathbf{z}}(q_{1,I_1},a_{I_2}a_{I_3}),
\end{align}
where $I_1$, $I_2$, and $I_3$ are randomly chosen offspring produced in the same deme chosen at random before viability selection and recolonization.
Inserting Eqs. (\ref{sec2-eq10}) and (\ref{sec2-eq11}) in  Eq. (\ref{sec2-eq9}) and then using the expression for the payoff of an offspring in Eq. (\ref{sec1-eq4}) and the assumptions in Eq. (\ref{sec1-eq2}) lead to the stated result.
%\end{proof}

\subsection*{Appendix B: Proof of Lemma 2  }
%\begin{proof}
By a similar argument as in the previous proof, we have
\begin{equation}\label{sec2-eq13}
\begin{split}
E_{\mathbf{z}}[\tilde{\tilde{x}}]&=\sum_{i=0}^{N}z_ix_iE_{\mathbf{z}}\left[\frac{1+\tilde{a}_{1,i}}{1+\tilde{a}}\right]\\
&=\sum_{i=0}^{N}z_ix_iE_{\mathbf{z}}\left[\left(1+\tilde{a}_{1,i}\right)\left(1-\tilde{a}+\tilde{a}^2\right)\right]+o(d^{-1})\\
&=x+\sum_{i=0}^{N}z_ix_iE_{\mathbf{z}}\left[\tilde{a}_{1,i}-\tilde{a}-(\tilde{a}_{1,i}-\tilde{a})\tilde{a}\right]+o(d^{-1}).
\end{split}
\end{equation}
On the other hand, we have the identities
\begin{subequations}\label{sec2-eq14}
\begin{align}
Cov_{\mathbf{z}}(q_{1,I_1},a_{I_1})&=\sum_{i=0}^{N}z_ix_iE_{\mathbf{z}}\left[\tilde{a}_{1,i}-\tilde{a}\right],\\
Cov_{\mathbf{z}}(q_{1,I_1},a_{I_1}a_{J_1})&=\sum_{i=0}^{N}z_ix_iE_{\mathbf{z}}\left[(\tilde{a}_{1,i}-\tilde{a})\tilde{a}\right],
\end{align}
\end{subequations}
where $I_1$ and $J_1$ are randomly chosen offspring produced in demes chosen at random and independently before viability selection and recolonization. The stated result  is obtained by substituting these identities into Eq. (\ref{sec2-eq13}) and then using Eqs. (\ref{sec1-eq4} ) and (\ref{sec1-eq2}).
%\end{proof}

\subsection*{Appendix C: Proof of Lemma 3  }
%\begin{proof}
Proceeding as in the proof of Lemma 1, we have
\begin{align}\label{sec2-eq20}
E_{\mathbf{z}}\left[\tilde{x}^2\right]&=\sum_{i,j=0}^{N}z_iz_jx_ix_jE_{\mathbf{z}}\left[\frac{1+\tilde{a}_{1,i}}{1+\tilde{a}_{i}}\frac{1+\tilde{a}_{1,j}}{1+\tilde{a}_{j}}\right]\nonumber\\
&=\sum_{i,j=0}^{N}z_iz_jx_ix_jE_{\mathbf{z}}\Big[\left(1+\tilde{a}_{1,i}\right)\left(1+\tilde{a}_{1,j}\right)\left(1-\tilde{a}_{i}+\tilde{a}_{i}^2\right)\left(1-\tilde{a}_{j}+\tilde{a}_{j}^2\right)\Big]+o(d^{-1})\nonumber\\
&=x^2+\sum_{i,j=0}^{N}z_iz_jx_ix_j
E_{\mathbf{z}}\Big[\tilde{a}_{1,i}-\tilde{a}_{i}+\tilde{a}_{1,j}-\tilde{a}_{j}-\left(\tilde{a}_{1,i}-\tilde{a}_{i}\right)\tilde{a}_{i}-\left(\tilde{a}_{1,j}-\tilde{a}_{j}\right)\tilde{a}_{j}\nonumber\\
&\quad\quad\quad\quad\quad\quad\quad\quad\quad\quad+\left(\tilde{a}_{1,i}-\tilde{a}_{i}\right)\tilde{a}_{1,j}-\left(\tilde{a}_{1,i}-\tilde{a}_{i}\right)\tilde{a}_{j}\Big]
+o(d^{-1}).
\end{align}
Moreover, we have the identities 
\begin{subequations}\label{sec2-eq21}
\begin{align}
\sum_{i,j=0}^{N}z_iz_jx_ix_jE_{\mathbf{z}}\left[\tilde{a}_{1,i}-\tilde{a}_{i}\right]
&=x\Big[Cov_{\mathbf{z}}\left(q_{1,I_1},a_{I_1}\right)-Cov_{\mathbf{z}}\left(q_{1,I_1},a_{I_2}\right)\Big],\\
%%%%
%%%%%
\sum_{i,j=0}^{N}z_iz_jx_ix_jE_{\mathbf{z}}\left[(\tilde{a}_{1,i}-\tilde{a}_{i})\tilde{a}_{i}\right]
&=x\Big[Cov_{\mathbf{z}}\left(q_{1,I_1},a_{I_1}a_{I_2}\right)-Cov_{\mathbf{z}}\left(q_{1,I_1},a_{I_2}a_{I_3}\right)\Big],\\
%%%%
%%%%%
\sum_{i,j=0}^{N}z_iz_jx_ix_jE_{\mathbf{z}}\left[(\tilde{a}_{1,i}-\tilde{a}_{i})\tilde{a}_{1,j}\right]
&=Cov_{\mathbf{z}}\left(q_{1,I_1},q_{1,J_1}a_{I_1}a_{J_1}\right)-Cov_{\mathbf{z}}\left(q_{1,I_1},q_{1,J_1}a_{I_2}a_{J_1}\right),\\
%%%%
%%%%%
\sum_{i,j=0}^{N}z_iz_jx_ix_jE_{\mathbf{z}}\left[(\tilde{a}_{1,i}-\tilde{a}_{i})\tilde{a}_{j}\right]
&=Cov_{\mathbf{z}}\left(q_{1,I_1},q_{1,J_1}a_{I_1}a_{J_2}\right)-Cov_{\mathbf{z}}\left(q_{1,I_1},q_{1,J_1}a_{I_2}a_{J_2}\right),
\end{align}
\end{subequations}
where $I_1$, $I_2$ are two randomly chosen offspring produced in the same deme chosen at random, and similarly for $J_1$, $J_2$ in an independent way.
Inserting these identities into Eq. (\ref{sec2-eq20}) yields
\begin{align}\label{sec2-eq22}
E_{\mathbf{z}}\left[\tilde{x}^2\right]
%%%
%%%
&=x^2+2x\Big[Cov_{\mathbf{z}}\left(q_{1,I_1},a_{I_1}\right)-Cov_{\mathbf{z}}\left(q_{1,I_1},a_{I_2}\right)\Big]
-2x\Big[Cov_{\mathbf{z}}\left(q_{1,I_1},a_{I_1}a_{I_2}\right)\nonumber\\
&\quad\quad\quad-Cov_{\mathbf{z}}\left(q_{1,I_1},a_{I_2}a_{I_3}\right)\Big]
+Cov_{\mathbf{z}}\left(q_{1,I_1},q_{1,J_1}a_{I_1}a_{J_1}\right)+
Cov_{\mathbf{z}}\left(q_{1,I_1},q_{1,J_1}a_{I_2}a_{J_2}\right)\nonumber\\
&\quad\quad\quad-Cov_{\mathbf{z}}\left(q_{1,I_1},q_{1,J_1}a_{I_2}a_{J_1}\right)
-Cov_{\mathbf{z}}\left(q_{1,I_1},q_{1,J_1}a_{I_1}a_{J_2}\right)+o\left(d^{-1}\right)\nonumber\\
%%%%
%%%%
&=x^2+\frac{1}{Nd}\Big[
2x\Big(Cov_{\mathbf{z}}\left(q_{1,I_1},\mu_{I_1}\right)+Cov_{\mathbf{z}}\left(q_{1,I_1},\sigma_{I_2,I_3}\right)
-Cov_{\mathbf{z}}\left(q_{1,I_1},\mu_{I_2}\right)\nonumber\\
&\quad\quad\quad-Cov_{\mathbf{z}}\left(q_{1,I_1},\sigma_{I_1,I_2}\right)
\Big)+Cov_{\mathbf{z}}\left(q_{1,I_1},q_{1,J_1}\sigma_{I_1,J_1}\right)+Cov_{\mathbf{z}}\left(q_{1,I_1},q_{1,J_1}\sigma_{I_2,J_2}\right)
\nonumber\\
&\quad\quad\quad-Cov_{\mathbf{z}}\left(q_{1,I_1},q_{1,J_1}\sigma_{I_2,J_1}\right)-Cov_{\mathbf{z}}\left(q_{1,I_1},q_{1,J_1}\sigma_{I_1,J_2}\right)\Big]+o\left(d^{-1}\right).
\end{align}
Combining the last equation and Lemma \ref{lemma1} completes the proof of Lemma 3.
%\end{proof}

\subsection*{Appendix D: Proof of Lemma 4  }

%\begin{proof}
Proceeding as in the proof of Lemma 2, we obtain 
\begin{align}\label{sec2-eq24}
E_{\mathbf{z}}\left[\tilde{\tilde{x}}^2\right]
&=\sum_{i,j=0}^{N}z_iz_jx_ix_jE\Big[\left(1+\tilde{a}_{1,i}\right)\left(1+\tilde{a}_{1,j}\right)\left(1-\tilde{a}+\tilde{a}^2\right)^2\Big]+o(d^{-1})\nonumber\\
%%%%
%%%%
&=x^2+\sum_{i,j=0}^{N}z_iz_jx_ix_j
E_{\mathbf{z}}\Big[\tilde{a}_{1,i}-\tilde{a}+\tilde{a}_{1,j}-\tilde{a}-2\left(\tilde{a}_{1,i}-\tilde{a}\right)\tilde{a}\nonumber\\
&\quad\quad\quad\quad\quad\quad\quad\quad\quad\quad-2\left(\tilde{a}_{1,j}-\tilde{a}\right)\tilde{a}+\tilde{a}_{1,i}\tilde{a}_{1,j}-\tilde{a}\tilde{a}\Big]+o(d^{-1})\nonumber\\%%%%
%%%%
&=x^2+\sum_{i,j=0}^{N}z_iz_jx_ix_jE_{\mathbf{z}}\Big[2(\tilde{a}_{1,i}-\tilde{a})-
4(\tilde{a}_{1,j}-\tilde{a})\tilde{a}
+\tilde{a}_{1,i}\tilde{a}_{1,j}-\tilde{a}^2
\Big]+o(d^{-1})\nonumber\\
%%%%
%%%%
&=x^2+ 2x\,Cov_{\mathbf{z}}\left(q_{1,I_1},a_{I_1}\right)+Cov_{\mathbf{z}}\left(q_{1,I_1}q_{1,J_1},a_{I_1}a_{J_1}\right)-4x\,Cov_{\mathbf{z}}\left(q_{1,I_1},a_{I_1}a_{J_1}\right)+o(d^{-1})\nonumber\\
&=x^2+ \frac{1}{Nd}\Big[
2x\,Cov_{\mathbf{z}}\left(q_{1,I_1},\mu_{I_1}\right)+Cov_{\mathbf{z}}\left(q_{1,I_1}q_{1,J_1},\sigma_{I_1,J_1}\right)\nonumber\\
&\quad\quad\quad\quad\quad-4x\,Cov_{\mathbf{z}}\left(q_{1,I_1},\sigma_{I_1,J_1}\right)\Big]+o\left(d^{-1}\right),
\end{align}
where $I_1$ and $J_1$ are randomly chosen offspring produced in demes independently chosen at random.
The proof of Lemma 4 can be completed by using Lemma \ref{lemma2} in combination with the above equation.
%\end{proof}

\subsection*{Appendix E: Proof of Lemma 5  }

%\begin{proof}
Proceeding as in the proofs of Lemmas 3 and 4, we get
\begin{align}\label{sec2-eq26}
E_{\mathbf{z}}\left[\tilde{x}\tilde{\tilde{x}}\right]&=\sum_{i,j=0}^{N}z_iz_jx_ix_jE_{\mathbf{z}}\Big[\left(1+\tilde{a}_{1,i}\right)\left(1+\tilde{a}_{1,j}\right)\left(1-\tilde{a}_{i}+\tilde{a}_{i}^2\right)\left(1-\tilde{a}+\tilde{a}^2\right)\Big]+o(d^{-1})\nonumber\\
&=x^2+\sum_{i,j=0}^{N}z_iz_jx_ix_j
E_{\mathbf{z}}\Big[\tilde{a}_{1,i}-\tilde{a}_{i}+\tilde{a}_{1,j}-\tilde{a}-\left(\tilde{a}_{1,i}-\tilde{a}_{i}\right)\tilde{a}_{i}-\left(\tilde{a}_{1,j}-\tilde{a}\right)\tilde{a}\nonumber\\
&\quad\quad\quad\quad\quad\quad\quad\quad\quad\quad+\left(\tilde{a}_{1,i}-\tilde{a}_{i}\right)\tilde{a}_{1,j}-\left(\tilde{a}_{1,i}-\tilde{a}_{i}\right)\tilde{a}\Big]
+o(d^{-1}).\nonumber\\
%%%%
%%%%
&=x^2+
x\,Cov_{\mathbf{z}}(q_{1,I_1},a_{I_1})-x\,Cov_{\mathbf{z}}(q_{1,I_1},a_{I_2})
+x\,Cov_{\mathbf{z}}(q_{1,I_1},a_{I_2}a_{I_3})\nonumber\\
&\quad-x\,Cov_{\mathbf{z}}(q_{1,I_1},a_{I_1}a_{I_2})
+Cov_{\mathbf{z}}(q_{1,I_1},q_{1,J_1}a_{I_1}a_{J_1})
+x\,Cov_{\mathbf{z}}(q_{1,I_1},a_{I_1})\nonumber\\
&
\quad+x\,Cov_{\mathbf{z}}(q_{1,I_1},a_{I_2}a_{J_1})
-x\,Cov_{\mathbf{z}}(q_{1,I_1},a_{I_1}a_{J_1})
-x\,Cov_{\mathbf{z}}(q_{1,I_1},a_{I_1}a_{J_1})\nonumber\\
&\quad-Cov_{\mathbf{z}}(q_{1,I_1},q_{1,J_1}a_{I_2}a_{J_1})
+o(d^{-1})\nonumber\\
%%%
%%%
&=x^2+\frac{1}{Nd}\Big[
2x\,Cov_{\mathbf{z}}(q_{I_1},\mu_{I_1})
-2x\,Cov_{\mathbf{z}}(q_{1,I_1},\sigma_{I_1,J_1})
+x\,Cov_{\mathbf{z}}(q_{1,I_1},\sigma_{I_2,I_3})\nonumber\\
&\quad\quad\quad\quad\quad-x\,Cov_{\mathbf{z}}(q_{1,I_1},\sigma_{I_1,I_2})
-Cov_{\mathbf{z}}(q_{1,I_1},\mu_{I_2})
+x\,Cov_{\mathbf{z}}(q_{1,I_1},\sigma_{I_2,J_1})\nonumber\\
&\quad\quad\quad\quad\quad+Cov_{\mathbf{z}}(q_{1,I_1},q_{1,J_1}\sigma_{I_1,J_1})
-Cov_{\mathbf{z}}(q_{1,I_1},q_{1,J_1}\sigma_{I_2,J_1})
\Big]+o(d^{-1}).
\end{align}
Here, $I_1$ and $I_2$ are randomly chosen offspring produced in the same deme chosen at random, while $J_2$ is an independently chosen offspring produced in the whole population. 
The proof of Lemma 5 is completed by combining Lemmas \ref{lemma1} and \ref{lemma2} with the above equation.
%\end{proof}

\subsection*{Appendix F: Proof of Proposition 3 }

%\begin{proof}
Let each deme at the beginning of generation $0$ be labeled by $(i, d_i)$ where $i$ is the type of the deme and $d_i$ its rank when the demes of the same type are placed in an arbitrary order, for $i=0, 1, \ldots, N$ and $d_i=1, \ldots, dz_i$. Thus, the frequency of $S_1$ at the beginning of generation $1$ can be expressed as
\begin{equation}\label{sec2-eq31}
X(1)=\frac{1}{d}\sum_{(i, d_i)}^{}X_{i,d_i}(1),
\end{equation}
where $X_{i,d_i}(1)$ represents the frequency of type $S_1$ in deme $(i,d_i)$ at the beginning of generation $1$. Note that this random variable is independent of  $X_{j,d_j}(1)$ for all $(j, d_j) \ne (i, d_i)$ and satisfies
\begin{equation}\label{sec2-eq32}
E_{\mathbf{z}}\left[X_{i,d_i}(1)\Big|(a_{k,l})_{}\right]=\tilde{\tilde{x}}_{i}.
\end{equation}
Defining
\begin{equation}\label{sec2-eq33}
U_{i,d_i}=X_{i,d_i}(1)-\tilde{\tilde{x}}_{i},
\end{equation}
 we have $\left|U_{i,d_i}\right|\leq1$ and 
$E_{\mathbf{z}}\left[U_{i,d_i}\right]=0$.
Moreover, for the third central moment of the  frequency of type $S_1$ at the beginning of generation $1$, we have
\begin{align}\label{sec2-eq34}
E_{\mathbf{z}}\left[\Big(X(1)-E_{\mathbf{z}}\left[X(1)\right]\Big)^3\right]&=E_{\mathbf{z}}\Bigg[\Bigg(\frac{1}{d}\sum_{(i,d_i)}U_{i,d_i}\Bigg)^3\Bigg]\nonumber\\
&=\frac{1}{d^3}\Bigg(\sum_{(i,d_i)}E_{\mathbf{z}}\left[U_{i,d_i}^3\right]
+3\sum_{(i,d_i), (j,d_j)}E_{\mathbf{z}}\left[U_{i,d_i}^2\right]E_{\mathbf{z}}\left[U_{j,d_{j}}\right]\nonumber\\
&\quad\quad\quad+\sum_{\textrm{different }(i,d_i), (j,d_j), (k,d_k)}E_{\mathbf{z}}\left[U_{i,d_i}\right]E_{\mathbf{z}}\left[U_{j,d_{j}}\right]E_{\mathbf{z}}\left[U_{k,d_{k}}\right]
\Bigg)\nonumber\\
&=\frac{1}{d^3}\sum_{(i,d_i)}E_{\mathbf{z}}\left[U_{i,d_i}^3\right],%\frac{1}{d^3}\sum_{(i,d_i)}1=
%\leq \frac{1}{d^2}=o(d^{-1}).
\end{align}
which is bounded in absolute value by $1/d^2$.
Similarly, for the fourth central moment of the frequency of $S_1$ at the beginning of generation $1$, we have
\begin{align}\label{sec2-eq35}
&E_{\mathbf{z}}\left[\left(X(1)-E_{\mathbf{z}}\left[X(1)\right]\right)^4\right]=E_{\mathbf{z}}\Bigg[\Bigg(\frac{1}{d}\sum_{(i,d_i)}U_{i,d_i}\Bigg)^4\Bigg]\nonumber\\
&=\frac{1}{d^4}\Bigg[\sum_{(i,d_i)}E_{\mathbf{z}}\left[U_{i,d_i}^4\right]
+
4\sum_{(i,d_i)\neq (j,d_j)}E_{\mathbf{z}}\left[U_{i,d_i}^3\right]E_{\mathbf{z}}\left[U_{j,d_{j}}\right]
+6\sum_{(i,d_i)\neq (j,d_j)}E_{\mathbf{z}}\left[U_{i,d_i}^2\right]E_{\mathbf{z}}\left[U_{j,d_{j}}^2\right]\nonumber\\
&\quad\quad+12\sum_{\textrm{different }(i,d_i), (j,d_j), (k,d_k)}E_{\mathbf{z}}\left[U_{i,d_i}^2\right]E_{\mathbf{z}}\left[U_{j,d_{j}}\right]E_{\mathbf{z}}\left[U_{k,d_{k}}
\right]\nonumber\\
&\quad\quad+\sum_{\textrm{different }(i,d_i), (j,d_j), (k,d_k), (l,d_l)}E_{\mathbf{z}}\left[U_{i,d_i}\right]E_{\mathbf{z}}\left[U_{j,d_{j}}\right]E_{\mathbf{z}}\left[U_{k,d_{k}}\right]E_{\mathbf{z}}\left[U_{l,d_{l}}\right]
\Bigg]\nonumber\\
&=\frac{1}{d^4}\Bigg[\sum_{(i,d_i)}E_{\mathbf{z}}\left[U_{i,d_i}^4\right]+
6\sum_{(i,d_i)\neq (j,d_j)}E_{\mathbf{z}}\left[U_{i,d_i}^2\right]E_{\mathbf{z}}\left[U_{j,d_{j}}^2\right]\Bigg]
\nonumber\\
&\leq\frac{1}{d^4}\Big[d+
6d(d-1)\Bigg].%=o(d^{-1}).
\end{align}
On the other hand, we have
\begin{align}\label{sec2-eq36}
E_{\mathbf{z}}\left[\left(X(1) - X(0)\right)^4\right]&=E_{\mathbf{z}}\left[
\left(X(1)-E_{\mathbf{z}}[X(1)]+E_{\mathbf{z}}[X(1)-X(0)]\right)^4
\right]\nonumber\\
&=E_{\mathbf{z}}\left[
\left(X(1)-E_{\mathbf{z}}[X(1)]\right)^4
\right]
+
4E_{\mathbf{z}}\left[X(1)-X(0)\right]E_{\mathbf{z}}\left[
\left(X(1)-E_{\mathbf{z}}[X(1)]\right)^3
\right]\nonumber\\
&\quad+6\left(E_{\mathbf{z}}\left[X(1)-X(0)\right]\right)^2E_{\mathbf{z}}\left[
\left(X(1)-E_{\mathbf{z}}[X(1)]\right)^2
\right] \nonumber\\
&\quad+ \left(E_{\mathbf{z}}\left[X(1)-X(0)\right]\right)^4.
\end{align}
This is a function $o(d^{-1})$ owing to Eqs. (\ref{sec2-eq5}), (\ref{sec2-eq34}), (\ref{sec2-eq35}) and the fact that $X(1)$ is bounded in absolute value by $1$.
%\end{proof}

\subsection*{Appendix G: Proof of Proposition 5 }
%\begin{proof}
Note that we have
\begin{equation}\label{sec2-eq49}
\begin{split}
Var_{\mathbf{z}}\left[Y_i(1)-Y_i(0)\right]=Var_{\mathbf{z}}\left[Z_i(1)-v_i(X(1))\right]\leq 
2Var_{\mathbf{z}}\left[Z_i(1)\right]+2Var_{\mathbf{z}}\left[v_i(X(1))\right].
\end{split}
\end{equation}
Therefore, in order to establish (\ref{sec2-eq48}), it suffices to show  $Var_{\mathbf{z}}\left[Z_i(1)\right]=o(1)$, which is a direct consequence of Eq. (\ref{sec2-eq3}), and that $Var_{\mathbf{z}}\left[v_i(X(1))\right]=o(1)$. For this latter variance, we have
\begin{align}\label{sec2-eq50}
Var_{\mathbf{z}}\left[v_i(X(1))\right]&=Var_{\mathbf{z}}\left[v_i(x)+\sum_{j=1}^{N}r_j\left(X(1)-X(0)\right)^j\right]\nonumber\\
%&=Var_{\mathbf{z}}\left[\sum_{j=1}^{N}r_j\left(X(1)-X(0)\right)^j\right]\nonumber\\
&=\sum_{j_1,j_2=1}^{N}r_{j_1}r_{j_2}Cov_{\mathbf{z}}\left[\left(X(1)-X(0)\right)^{j_1},\left(X(1)-X(0)\right)^{j_2}\right]\nonumber\\
&=\sum_{j_1,j_2=1}^{N}r_{j_1}r_{j_2}\Bigg(
E_{\mathbf{z}}\left[\left(X(1)-X(0)\right)^{j_1+j_2}\right]\nonumber\\
&\quad\quad\quad\quad\quad\quad\quad-E_{\mathbf{z}}\left[\left(X(1)-X(0)\right)^{j_1}\right]E_{\mathbf{z}}\left[\left(X(1)-X(0)\right)^{j_2}\right]
\Bigg).
\end{align}
This is a function $o(1)$ owing to Propositions 1 and 2 besides the fact that $X(1)-X(0)$ is bounded in absolute value by $1$.
%\end{proof}

\subsection*{Appendix H: Exact identity-by-descent measures}

In this Appendix, we deduce the exact identity-by-descent measures defined in Section 4 for up to five offspring $I_1, \ldots, I_5$ chosen at random in the same deme at the beginning of a given generation in a neutral population of an infinite number of demes of fixed size $N$ that is in the stationary state.

\subsubsection*{H1: identity-by-descent measures for two offspring}

%Let $I_1$ and $I_2$ be two offspring chosen at random in the same deme at the beginning of a given generation in a neutral population of an infinite number of demes of size $N$ that is in the stationary state.
%randomly selected offspring from the neutral population residing in the same deme after reproduction before potential extinction and subsequent recolonization events. 
Considering the parents of two offspring
$I_1$ and $I_2$ one generation back, three potential scenarios are possible:
\begin{itemize}
\item The offspring have different parents that come from the same deme. This scenario occurs with probability $\left(1-\frac{1}{N}\right)(1-m)$.
\item The offspring have different parents that come from different demes. This situation has probability $\left(1-\frac{1}{N}\right)m$.
\item The offspring have the same parent, a situation that occurs with probability $\frac{1}{N}$.
\end{itemize}

Then, the probabilities $f_{2}$ and $f_{11}$ that $I_1$ and $I_2$ are identical by descent and not identical by descent, respectively, must satisfy
\begin{subequations}
\begin{align}
&f_{2}=\left(1-\frac{1}{N}\right)(1-m)\times f_{2}+\left(1-\frac{1}{N}\right)m\times0+\frac{1}{N}\times1,\\
&f_{11}=\left(1-\frac{1}{N}\right)(1-m)\times f_{11}+\left(1-\frac{1}{N}\right)m\times1+\frac{1}{N}\times0,
\end{align}
\end{subequations}
from which 
\begin{subequations}
\begin{align}
&f_{2}=\frac{1}{mN+1-m},\\
&f_{11}=\frac{m(N-1)}{mN+1-m}.
\end{align}
\end{subequations}

%%%%%%%%%%%%%%%%%%%%%%%%%%%%%%%%%%%%%%%%%%%%%%%%%%%%%%%%%%%%%%%%%%%%%%%%%%%%%%%%%%%%%%%%%%%%%%%%%%%%%%%%%%%%%%%%%%%%%%%%%%%%%%%%%%%%%%%%%%%%%%%%%%%%%%%%%%%%%%%%%%%%%%%%%%%%%%%%%%%%%%%%%%%%%%%%%%%%

\subsubsection*{H2: identity-by-descent measures for three offspring}
%Now, consider three offspring $I_1$, $I_2$ and $I_3$ randomly chosen from the same deme.

%%%%%%%%%%%%%%%%%%%%%%%%%%%%%%%%%%%%%%%%%%%%%%%%%%%%%%%%%%%%%%%%%%%%%%%%%%%%%%%%%%%%%%%%%%%%%%%%%%%%%%%%%%%%%%%%%%%%%%%%%%%%%%%%%%%%%%%%%%%%%%%%%%%%%%%%%%%%%%%%%%%%%%%%%%%%%%%%%%%%%%%%%%%%%%%%%%%%
\paragraph{Calculation of $f_{3}$:}
For three offspring to be identical by descent, three situations are possible for their parents one generation back: 
%Assuming that $I_1\equiv I_2\equiv I_3$ and considering their ancestors one generation back in time, three potential scenarios arise:
\begin{itemize}
\item The offspring have three different parents that come from the same deme. This scenario has probability $\left(1-\frac{1}{N}\right)\left(1-\frac{2}{N}\right)(1-m)$. 
\item The offspring have exactly two different parents that come from the same deme, a situation whose probability is $\frac{\binom{3}{2}}{N}\left(1-\frac{1}{N}\right)(1-m)$. 
\item The offspring have the same parent, which occurs with probability  $\frac{1}{N^2}$.
\end{itemize}
Note that all other scenarios are not compatible with $I_1$, $I_2$ and $I_3$ being identical by descent. By conditioning on the parents one generation back, we find 
\begin{equation}
f_{3}=\left(1-\frac{1}{N}\right)\left(1-\frac{2}{N}\right)(1-m)\times f_{3}
+\frac{3}{N}\left(1-\frac{1}{N}\right)(1-m)\times f_{2}
+\frac{1}{N^2}\times 1,
\end{equation}
from which we deduce
\begin{align}
f_{3}&=\frac{1+3(N-1)(1-m)f_{2}}{1+(N-1)(3-2m+mN)}\nonumber\\
&=\frac{1+(N-1)(3-2m)}{(mN+1-m)(mN^2+(3N-2)(1-m))}.
\end{align}

%%%%%%%%%%%%%%%%%%%%%%%%%%%%%%%%%%%%%%%%%%%%%%%%%%%%%%%%%%%%%%%%%%%%%%%%%%%%%%%%%%%%%%%%%%%%%%%%%%%%%%%%%%%%%%%%%%%%%%%%%%%%%%%%%%%%%%%%%%%%%%%%%%%%%%%%%%%%%%%%%%%%%%%%%%%%%%%%%%%%%%%%%%%%%%%%%%%%
\paragraph{Calculation of $f_{21}$:}
For an offspring $I_1$ to be identical by descent to an offspring $I_2$ but not to an offspring $I_3$, the scenarios to consider for the parents of the offspring one generation back and their probabilities are the following: 
\begin{itemize}
\item The offspring have different parents that come from the same deme, which has probability $\left(1-\frac{1}{N}\right)\left(1-\frac{2}{N}\right)(1-m)$. 
\item The offspring $I_1$ and $I_2$ have the same parent different from the parent of $I_3$ but coming from the same deme,  whose  probability is $\frac{1}{N}\left(1-\frac{1}{N}\right)(1-m)$. 
\item The offspring $I_1$ and $I_2$ have the same parent different from the parent of $I_3$ and coming from a different deme, which occurs with probability $\frac{1}{N}\left(1-\frac{1}{N}\right)m$.
\end{itemize}
By conditioning on these possible scenarios one generation back in time, we get the following relationship
\begin{equation}
f_{21}=\left(1-\frac{1}{N}\right)\left(1-\frac{2}{N}\right)(1-m)\times f_{21}
+\frac{1}{N}\left(1-\frac{1}{N}\right)\Big[(1-m)\times f_{11}
+m\times1\Big],
\end{equation}
from which we obtain
\begin{align}
f_{21}&=\frac{(N-1)[(1-m)\phi_{I_1|I_3}+m]}{mN^2+(3N-2)(1-m)}\nonumber\\
&=\frac{mN(N-1)}{(mN+1-m)(mN^2+(3N-2)(1-m))}.
\end{align}

%%%%%%%%%%%%%%%%%%%%%%%%%%%%%%%%%%%%%%%%%%%%%%%%%%%%%%%%%%%%%%%%%%%%%%%%%%%%%%%%%%%%%%%%%%%%%%%%%%%%%%%%%%%%%%%%%%%%%%%%%%%%%%%%%%%%%%%%%%%%%%%%%%%%%%%%%%%%%%%%%%%%%%%%%%%%%%%%%%%%%%%%%%%%%%%%%%%%
\paragraph{Calculation of $f_{111}$:}
For no offspring to be identical by descent to another among three, two scenarios for the parents one generation back are possible:
\begin{itemize}
\item The offspring have different that come from the same deme, with probability $\left(1-\frac{1}{N}\right)\left(1-\frac{2}{N}\right)(1-m)$. 
\item The offspring have different that come from different deme, with probability $\left(1-\frac{1}{N}\right)\left(1-\frac{2}{N}\right)m$.
\end{itemize}
By conditioning on these scenarios, we get
\begin{equation}
f_{111}=\left(1-\frac{1}{N}\right)\left(1-\frac{2}{N}\right)(1-m)\times f_{111}
+\left(1-\frac{1}{N}\right)\left(1-\frac{2}{N}\right)m\times1,
\end{equation}
from which we obtain
\begin{equation}
f_{111}=\frac{(N-1)(N-2)m}{mN^2+(3N-2)(1-m)}.
\end{equation}

%%%%%%%%%%%%%%%%%%%%%%%%%%%%%%%%%%%%%%%%%%%%%%%%%%%%%%%%%%%%%%%%%%%%%%%%%%%%%%%%%%%%%%%%%%%%%%%%%%%%%%%%%%%%%%%%%%%%%%%%%%%%%%%%%%%%%%%%%%%%%%%%%%%%%%%%%%%%%%%%%%%%%%%%%%%%%%%%%%%%%%%%%%%%%%%%%%%%
\subsubsection*{H3: identity-by-descent measures for four offspring}

Similarly, conditioning on  the number of parents of four offspring and the demes where they come from, 
%In this paragraph, consider four offspring randomly chosen from the same deme after reproduction before possible extinction and  recolonization. Considering their ancestral lineage one generation back in time, seven potential scenarios come into play:
%\begin{itemize}
%\item They are offspring of separate parents and yet reside within the same deme. This scenario occurs in the absence of coalescence and non-extinction.
%\item They come from separate parents residing in four different demes. This scenario occurs in the absence of coalescence and the presence of extinction.
%\item They have three distinct parents yet share the same deme. This situation arises with precisely one coalescent event and in the absence of extinction.
%\item They come from three distinct parents residing in three different demes. This scenario occurs with exactly one coalescent event and the presence of extinction.
%\item They possess two distinct parents, yet they inhabit the same deme. This situation occurs when exactly two coalescent events take place and extinction is absent. 
%\item They have two different parents residing in two different demes. This scenario arises when precisely two coalescent events take place, coupled with the occurrence of extinction. 
%\item They share the same parent, a scenario resulting from multiple coalescence events between them. 
%\end{itemize}
we obtain the system of equations
\begin{subequations}
\begin{align}
f_{4}=&\left(1-\frac{1}{N}\right)\left(1-\frac{2}{N}\right)\left(1-\frac{3}{N}\right)(1-m)f_{4}
+\frac{6}{N}\left(1-\frac{1}{N}\right)\left(1-\frac{2}{N}\right)(1-m)f_{3}
\nonumber\\
&+\frac{7}{N^2}\left(1-\frac{1}{N}\right)(1-m)f_{2}
+\frac{1}{N^3},\\
%%%%
%%%%
f_{31}=&\left(1-\frac{1}{N}\right)\left(1-\frac{2}{N}\right)\left(1-\frac{3}{N}\right)(1-m)f_{31}
+\frac{3}{N}\left(1-\frac{1}{N}\right)\left(1-\frac{2}{N}\right)(1-m)f_{21}
\nonumber\\
&+\frac{1}{N^2}\left(1-\frac{1}{N}\right)\Big[(1-m)f_{11}
+m\Big],\\
%%%%
%%%%
f_{22}=&\left(1-\frac{1}{N}\right)\left(1-\frac{2}{N}\right)\left(1-\frac{3}{N}\right)(1-m)f_{22}
+\frac{2}{N}\left(1-\frac{1}{N}\right)\left(1-\frac{2}{N}\right)(1-m)f_{21}
\nonumber\\
&+\frac{1}{N^2}\left(1-\frac{1}{N}\right)\Big[(1-m)f_{11}+m\Big],\\
%%%%
%%%%
f_{211}=&\left(1-\frac{1}{N}\right)\left(1-\frac{2}{N}\right)\left(1-\frac{3}{N}\right)(1-m)f_{211}\nonumber\\
&+\frac{1}{N}\left(1-\frac{1}{N}\right)\left(1-\frac{2}{N}\right)\Big[(1-m)f_{111}+m\Big],\\
%%%%
%%%%
f_{1111}=&\left(1-\frac{1}{N}\right)\left(1-\frac{2}{N}\right)\left(1-\frac{3}{N}\right)\Big[(1-m)f_{1111}
+m\Big].
\end{align}
\end{subequations}
The solution of this system is given by
\begin{subequations}
\begin{align}
&f_{4}=\frac{1+7(N-1)(1-m)f_{2}+6(N-1)(N-2)(1-m)f_{3}}{mN^3+(6N^2-11N+6)(1-m)},\\
&f_{31}=\frac{(N-1)m+(N-1)(1-m)f_{11}+3(N-1)(N-2)(1-m)f_{21}}{mN^3+(6N^2-11N+6)(1-m)},\\
&f_{22}=\frac{(N-1)m+(N-1)(1-m)f_{11}+2(N-1)(N-2)(1-m)f_{21}}{mN^3+(6N^2-11N+6)(1-m)},\\
&f_{2111}=\frac{(N-1)(N-2)m+(N-1)(N-2)(1-m)f_{111}}{mN^3+(6N^2-11N+6)(1-m)},\\
&f_{1111}=\frac{(N-1)(N-2)(N-3)m}{mN^3+(6N^2-11N+6)(1-m)}. 
\end{align}
\end{subequations}

%%%%%%%%%%%%%%%%%%%%%%%%%%%%%%%%%%%%%%%%%%%%%%%%%%%%%%%%%%%%%%%%%%%%%%%%%%%%%%%%%%%%%%%%%%%%%%%%%%%%%%%%%%%%%%%%%%%%%%
\subsubsection*{H4: identity-by-descent measures for five offspring}
%Similarly to the previous cases when considering the ancestral lineages one generation back in time of five offspring picked at random in the same deme after reproduction but before possible extinction and recolonization, we obtain the following identities
Proceeding as previously, the identity-by-descent measures for five offspring are found to be
\begin{subequations}
\begin{align}
&f_{5}=\frac{1+(N-1)(1-m)[15f_{2}+25(N-2)f_{3}+10(N-2)(N-3)f_{4}]}{mN^4+(1-m) (10 N^3-35N^2+50N-24)},\\
%%%%
%%%%
&f_{41}=\frac{
m(N-1)+(N-1)(1-m)[f_{11}+7(N-2)f_{21}+6(N-2)(N-3)f_{31}]
}{mN^4+(1-m) (10 N^3-35N^2+50N-24)},\\
%%%%
%%%%
&f_{32}=\frac{
m(N-1)+(N-1)(1-m)[f_{11}+4(N-2)f_{21}+(N-2)(N-3)(f_{31}+3f_{22})]
}{mN^4+(1-m) (10 N^3-35N^2+50N-24)},\\
%%%%
%%%%
&f_{311}=\frac{
(N-1)(N-2)m+(N-1)(N-2)(1-m)[f_{111}+3(N-3)f_{211}]
}{mN^4+(1-m) (10 N^3-35N^2+50N-24)},\\
%%%%
%%%%
&f_{221}=\frac{
(N-1)(N-2)m+(N-1)(N-2)(1-m)[f_{111}+2(N-3)f_{211}]
}{mN^4+(1-m) (10 N^3-35N^2+50N-24)},\\
%%%%
%%%%
&f_{2111}=\frac{(N-1)(N-2)(N-3)[m+(1-m)f_{1111}]}{mN^4+(1-m) (10 N^3-35N^2+50N-24)},\\
%%%%
%%%%
&f_{11111}=\frac{(N-1)(N-2)(N-3)(N-4)m}{mN^4+(1-m) (10 N^3-35N^2+50N-24)}.
%%%%
%%%%
\end{align}
\end{subequations}

\subsection*{Appendix I: Proof of Proposition \ref{Proposition7} }

In order to derive explicit expressions for $M(\mathbf{v}(x))$ and $Q(\mathbf{v}(x))$ in Theorem \ref{Theorem}, we will write the covariances in the expressions of $M(\mathbf{z})$ and $Q(\mathbf{z})$ given in Propositions \ref{Proposition1} and \ref{Proposition2} in terms of the identity-by-descent measures defined in Eq. (\ref{identitymeasures}) for offspring chosen at random in the same deme in an infinite neutral population in the stationary state $\mathbf{z}=\mathbf{v}(x)$ defined by Eq. (\ref{sec2-eq39}). Note that an offspring chosen at random and all offspring identical by descent to this offspring  are of type $S_1$ with probability $x$ and of type $S_2$ with probability $1-x$ independently of all offspring that are not identical by descent to this offspring, where $x$ is the constant frequency of $S_1$ in the neutral population at equilibrium. Moreover, offspring in different demes are necessarily non identical by descent since they cannot have a common ancestor in the same deme in the case of an infinite number of demes. 

By conditioning on the identity-by-descent status of two offspring $I_1$ and $I_2$ randomly chosen in the same deme, we have
\begin{flalign}\label{AA-eq1}
Cov_{\mathbf{v}(x)}\left(q_{1,I_1},q_{k,I_1}q_{l,I_2}\right)&=xf_{I_1|I_2}\left(\delta_{1k}x_{(l)}-x_{(k)}x_{(l)}\right)
+xf_{I_1I_2}\left(\delta_{1kl}-\delta_{kl}x_{(k)}\right)\nonumber&\\
&=x\left(\delta_{1k}-x_{(k)}\right)\left(\delta_{kl}f_{2}+f_{11}x_{(l)}\right),
\end{flalign}
where $x_{(k)}=x$ if $k=1$ and $1-x$ if $k=2$, while $\delta_{l_1l_2\ldots l_n}=1$ if $l_1=\cdots=l_n$ and $0$ otherwise.

Similarly, by conditioning on the identity-by-descent relationship between three offspring $I_1$, $I_2$ and $I_3$ randomly selected in the same deme, as well as between two offspring $J_1$ and $J_2$ independently chosen in a same deme, we get
\begin{flalign}\label{AA-eq2.0}
&Cov_{\mathbf{v}(x)}\left(q_{1,I_1},q_{k,I_2}q_{l,I_3}\right)\nonumber&\\
&\quad\quad\quad\quad=xf_{I_1I_2I_3}\left(\delta_{1kl}-\delta_{kl}x_{(k)}\right)+xf_{I_1I_2|I_3}\left(\delta_{1k}x_{(l)}-x_{(k)}x_{(l)}\right)+xf_{I_1I_3|I_2}\left(\delta_{1l}x_{(k)}-x_{(k)}x_{(l)}\right)\nonumber&\\
&\quad\quad\quad\quad=x\left(\delta_{1k}-x_{(k)}\right)\left(f_{3}\delta_{kl}+f_{21}x_{(l)}\right)
+xx_{(k)}f_{21}\left(\delta_{1l}-x_{(l)}\right),
\end{flalign}
\begin{flalign}\label{AA-eq2.1}
%%%%
%%%%
&Cov_{\mathbf{v}(x)}\left(q_{1,I_1},q_{k,I_1}q_{l,I_2}q_{k,J_1}q_{l,J_2}\right)\nonumber&\\
&\quad\quad\quad\quad=Cov_{\mathbf{v}(x)}\left(q_{1,I_1},q_{k,I_1}q_{l,I_2}\right)E_{\mathbf{v}(x)}\left(q_{k,J_1}q_{l,J_2}\right)\nonumber&\\
&\quad\quad\quad\quad=x\left(\delta_{1k}-x_{(k)}\right)\Big[f_{I_1I_2}\delta_{kl}+f_{I_1|I_2}x_{(l)}\Big]
\Big[f_{J_1J_2}\delta_{kl}x_{(k)}+f_{J_1|J_2}x_{(k)}x_{(l)}\Big]\nonumber&\\
&\quad\quad\quad\quad=x\left(\delta_{1k}-x_{(k)}\right)x_{(k)}\Big[f_{2}\delta_{kl}+f_{11}x_{(l)}\Big]^2,
\end{flalign}
\begin{flalign}\label{AA-eq2.3}
%%%%
%%%%
&Cov_{\mathbf{v}(x)}\left(q_{1,I_1},q_{k,I_2}q_{l,I_3}q_{k,J_1}q_{l,J_2}\right)\nonumber&\\
&\quad\quad\quad\quad=Cov_{\mathbf{v}(x)}\left(q_{1,I_1},q_{k,I_2}q_{l,I_3}\right)E_{\mathbf{v}(x)}\left(q_{k,J_1}q_{l,J_2}\right)\nonumber&\\
&\quad\quad\quad\quad=xx_{(k)}\Big[
\left(\delta_{1k}-x_{(k)}\right)\left(f_{3}\delta_{kl}+f_{21}x_{(l)}\right)
+x_{(k)}f_{21}\left(\delta_{1l}-x_{(l)}\right)
\Big] 
\Big[f_{2}\delta_{kl}+f_{11}x_{(l)}\Big],&
\end{flalign}
\begin{flalign}\label{AA-eq2.4}
%%%%
%%%%
&Cov_{\mathbf{v}(x)}\left(q_{1,I_1}q_{1,J_1},q_{k,I_1}q_{l,I_2}q_{k,J_1}q_{l,J_2}\right)\nonumber&\\
&\quad\quad\quad\quad=E_{\mathbf{v}(x)}\left(q_{1,I_1}q_{k,I_1}q_{l,I_2}\right)^2-x^2E_{\mathbf{v}(x)}\left(q_{k,I_1}q_{l,I_2}\right)^2\nonumber&\\
&\quad\quad\quad\quad=x^2\Big[f_{11}\delta_{1k}x_{(l)}+f_{2}\delta_{1kl}\Big]^2-x^2x_{(k)}^2\Big[f_{11}x_{(l)}+f_{2}\delta_{kl}\Big]^2\nonumber&\\
&\quad\quad\quad\quad=x^2\left(\delta_{1k}-x_{(k)}^2\right)\Big[f_{2}\delta_{kl}+f_{11}x_{(l)}\Big]^2,
\end{flalign}
\begin{flalign}\label{AA-eq2.5}
%%%%
%%%%
&Cov_{\mathbf{v}(x)}\left(q_{1,I_1},q_{1,J_1}q_{k,I_1}q_{l,I_2}q_{k,J_1}q_{l,J_2}\right)\nonumber&\\
&\quad\quad\quad\quad=Cov_{\mathbf{v}(x)}\left(q_{1,I_1},q_{k,I_1}q_{l,I_2}\right)E_{\mathbf{v}(x)}\left(q_{1,J_1}q_{k,J_1}q_{l,J_2}\right)\nonumber&\\
&\quad\quad\quad\quad=x\left(\delta_{1k}-x_{(k)}\right)\Big[f_{I_1I_2}\delta_{kl}+f_{I_1|I_2}x_{(l)}\Big] \Big[xx_{(l)}f_{I_1|I_2}\delta_{1k}+xf_{I_1I_2}\delta_{1kl}\Big],\nonumber&\\
&\quad\quad\quad\quad=x^2\delta_{1k}\left(\delta_{1k}-x_{(k)}\right)\Big[f_{2}\delta_{kl}+f_{11}x_{(l)}\Big]^2,
\end{flalign}
\begin{flalign}\label{AA-eq2.6}
%%%
%%%%
&Cov_{\mathbf{v}(x)}\left(q_{1,I_1},q_{1,J_1}q_{k,I_2}q_{l,I_3}q_{k,J_1}q_{l,J_2}\right)\nonumber&\\
&\quad\quad\quad\quad=
Cov_{\mathbf{v}(x)}\left(q_{1,I_1},q_{k,I_2}q_{l,I_3}\right)E_{\mathbf{v}(x)}\left(q_{1,J_1}q_{k,J_1}q_{l,J_2}\right)\nonumber&\\
&\quad\quad\quad\quad=x^2\delta_{1k}\Big[
\left(\delta_{1k}-x_{(k)}\right)\left(f_{3}\delta_{kl}+f_{21}x_{(l)}\right)
+x_{(k)}f_{21}\left(\delta_{1l}-x_{(l)}\right)
\Big]
%\nonumber&\\
%&\quad\quad\quad\quad\quad\times
\Big[f_{2}\delta_{kl}+f_{11}x_{(l)}\Big],&
\end{flalign}
%
%\begin{subequations}\label{AA-eq3}
\begin{flalign}\label{AA-eq2.7}
%%%%%
%%%%
&Cov_{\mathbf{v}(x)}\left(q_{1,I_1},q_{1,J_1}q_{k,I_1}q_{l,I_2}q_{k,J_2}q_{l,J_3}\right)\nonumber&\\
&\quad\quad\quad\quad=Cov_{\mathbf{v}(x)}\left(q_{1,I_1},q_{k,I_1}q_{l,I_2}\right)E_{\mathbf{v}(x)}\left(q_{1,J_1}q_{k,J_2}q_{l,J_3}\right)\nonumber&\\
&\quad\quad\quad\quad=x^2\left(\delta_{1k}-x_{(k)}\right)\left(f_{2}\delta_{kl}+f_{11}x_{(l)}\right)
\nonumber&\\
&\quad\quad\quad\quad\quad
\times\Big[\delta_{1k}\left(f_{3}\delta_{kl}+f_{21}x_{(l)}\right)+\left(f_{21}\delta_{1l}+f_{21}\delta_{kl}+f_{111}x_{(l)}\right)x_{(k)}\Big],
\end{flalign}
\begin{flalign}\label{AA-eq2.8}
%%%%%
%%%%
&Cov_{\mathbf{v}(x)}\left(q_{1,I_1},q_{1,J_1}q_{k,I_2}q_{l,I_3}q_{k,J_2}q_{l,J_3}\right)\nonumber&\\
&\quad\quad\quad\quad=Cov_{\mathbf{v}(x)}\left(q_{1,I_1},q_{k,I_2}q_{l,I_3}\right)E_{\mathbf{v}(x)}\left(q_{1,J_1}q_{k,J_2}q_{l,J_3}\right)\nonumber&\\
&\quad\quad\quad\quad=x^2\Big[\left(\delta_{1k}-x_{(k)}\right)\left(f_{3}\delta_{kl}+f_{21}x_{(l)}\right)
+x_{(k)}f_{21}\left(\delta_{1l}-x_{(l)}\right)\Big]\nonumber&\\
&\quad\quad\quad\quad\quad
\times
\Big[\delta_{1k}\left(f_{3}\delta_{kl}+f_{21}x_{(l)}\right)+\left(f_{21}\delta_{1l}+f_{21}\delta_{kl}+f_{111}x_{(l)}\right)x_{(k)}\Big].
\end{flalign}
Finally, by conditioning on the identity-by-descent relationship between four or five individuals randomly selected from the same deme, we obtain
\begin{flalign}\label{AA-eq4}
&Cov_{\mathbf{v}(x)}\left(q_{1,I_1},q_{k,I_1}q_{l,I_2}q_{k,I_3}q_{l,I_4}\right)\nonumber&\\
&\quad\quad\quad\quad=
x\left(\delta_{1k}-x_{(k)}\right)\Big[\delta_{kl}f_{4}+x_{(l)}f_{22}+x^2_{(l)}f_{211}\Big]\nonumber&\\
&\quad\quad\quad\quad\quad+x\left(\delta_{1k}x-x^2_{(k)}\right)\Big[\delta_{kl}\left(2f_{22}+4f_{31}\right)+x_{(l)}f_{211}+x^2_{(l)}f_{1111}\Big]\nonumber&\\
&\quad\quad\quad\quad\quad+4x\delta_{kl}\left(\delta_{1k}x^2-x^3_{(k)}\right)f_{211}
\end{flalign}
and
\begin{flalign}\label{AA-eq5}
&Cov_{\mathbf{v}(x)}\left(q_{1,I_1},q_{k,I_2}q_{k,I_3}q_{l,I_4}q_{l,I_5}\right)\nonumber&\\
&\quad\quad\quad\quad=x\left(\delta_{1k}-x_{(k)}\right)\left[\delta_{kl}f_{5}+x_{(l)}f_{32}+x_{(l)}^2f_{311}\right] +4x\delta_{kl}\left(\delta_{1k}x^2-x_{(k)}^3\right)\Big[2f_{221}+f_{311}\Big]  \nonumber&\\
&\quad\quad\quad\quad\quad+2x\left(\delta_{1k}x-x_{(k)}^2\right)\Big[\delta_{kl}\left(2f_{41}+4f_{32}\right)+x_{(l)}f_{221}+x_{(l)}^2f_{2111}\Big]\nonumber&\\
&\quad\quad\quad\quad\quad+2xx_{(k)}\left(\delta_{1l}x-x_{(l)}^2\right)\Big[f_{221}+x_{(k)}f_{2111}\Big]\nonumber&\\
&\quad\quad\quad\quad\quad+xx_{(k)}\left(\delta_{1l}-x_{(l)}\right)\Big[f_{32}+x_{(k)}f_{311}\Big].
\end{flalign}
%%%%%%%%%%%%%%%%%%%%%%%%%%%%%%%%%%%%%%%%%%%%%%%%%%%%%%%%%%%%%%%%%%%%%%%%%%%%%%%%%%%%%%%%%%%%%%%
%%%%%%%%%%%%%%%%%%%%%%%%%%%%%%%%%%%%%%%%%%%%%%%%%%%%%%%%%%%%%%%%%%%%%%%%%%%%%%%%%%%%%%%%%%%%%%%%%
%\subsection{Derivation of $M(x,\mathbf{0})$ and $Q(x,\mathbf{0})$ in terms of $f_{n_1,\ldots,n_k}$}
%%%%%%%%%%%%%%%%%%%%%%%%%%%%%%%%%%%%%%%%%%%%%%%%%%%%%%%%%%%%%%%%%%%%%%%%%%%%%%%%%%%%%%%%%%%%%%%
%%%%%%%%%%%%%%%%%%%%%%%%%%%%%%%%%%%%%%%%%%%%%%%%%%%%%%%%%%%%%%%%%%%%%%%%%%%%%%%%%%%%%%%%%%%%%%%%%
%\subsection*{Derivation of $M(\mathbf{v}(x))$}
Using the above expressions for the covariances and the relationships
\begin{subequations}\label{AA-eq6}
\begin{align}
&f_{2}=f_{3}+f_{21},\\
&f_{11}=2f_{21}+f_{111},\\
&f_{4}=f_{5}+f_{41},\\
&f_{31}=f_{41}+f_{32}+f_{311},\\
&f_{22}=2f_{32}+f_{221},\\
&f_{211}=f_{311}+2f_{221}+f_{2111},
\end{align}
\end{subequations}
the infinitesimal mean $M(\mathbf{v}(x))$ in Theorem 1, given by 
\begin{align}\label{}
&\sum_{k,l=1}^{2}\mu_{k,l}\Big[Cov_{\mathbf{v}(x)}\left(q_{1,I_1},q_{k,I_1}q_{l,I_2}\right)-(1-m)\, Cov_{\mathbf{v}(x)}\left(q_{1,I_1},q_{k,I_2}q_{l,I_3}\right)\Big]\nonumber
\\
&+\sum_{k,l=1}^{2}\sigma^2_{k,l}\Big[
(1-m)\Big(Cov_{\mathbf{v}(x)}\left(q_{1,I_1},q_{k,I_1}q_{k,I_2}q_{l,I_3}q_{l,I_4}\right)\nonumber\\
&\quad\quad\quad\quad\quad-Cov_{\mathbf{v}(x)}\left(q_{1,I_1},q_{k,I_2}q_{k,I_3}q_{l,I_4}q_{l,I_5}\right)\Big)-m\,  Cov_{\mathbf{v}(x)}\left(q_{1,I_1},q_{k,I_1}q_{l,I_2}q_{k,J_1}q_{l,J_2}\right)
\Big],
\end{align}
can be expressed as 
\begin{align}\label{AA-eq7}
& \sum_{k,l=1}^{2}\mu_{k,l}\Bigg[
x\left(\delta_{1k}-x_{(k)}\right)\left(\delta_{kl}f_{2}+f_{11}x_{(l)}\right)-
(1-m)\Big(x\left(\delta_{1k}-x_{(k)}\right)\left(f_{3}\delta_{kl}+f_{21}x_{(l)}\right)\nonumber\\
&+xx_{(k)}f_{21}\left(\delta_{1l}-x_{(l)}\right)\Big)\Bigg]+\sum_{k,l=1}^{2}\sigma^2_{k,l}\Bigg[(1-m)\Big(
x\left(\delta_{1k}-x_{(k)}\right)\Big[\delta_{kl}f_{4}+x_{(l)}f_{22}+x^2_{(l)}f_{211}\Big]\nonumber&\\
&+x(\delta_{1k}x-x_{(k)}^2)\Big[\delta_{kl}\left(2f_{22}+4f_{31}\right)+x_{(l)}f_{211}+x^2_{(l)}f_{1111}\Big]+4x\delta_{kl}(\delta_{1k}x^2-x_{(k)}^3)f_{211}\Big)\nonumber\\
&-(1-m)\Big(
x\left(\delta_{1k}-x_{(k)}\right)\Big[\delta_{kl}f_{5}+x_{(l)}f_{32}+x_{(l)}^2f_{311}\Big] +4x\delta_{kl}(\delta_{1k}x^2-x_{(k)}^3)[2f_{221}+f_{311}] \nonumber&\\
&+2x(\delta_{1k}x-x_{(k)}^2)\Big[\delta_{kl}\left(2f_{41}+4f_{32}\right)+x_{(l)}f_{221}
+x_{(l)}^2f_{2111}\Big]+xx_{(k)}(\delta_{1l}-x_{(l)})[f_{32}+x_{(k)}f_{311}]\nonumber\\
&+2xx_{(k)}(\delta_{1l}x-x_{(l)}^2)[f_{221}+x_{(k)}f_{2111}]
\Big) -mx\left(\delta_{1k}-x_{(k)}\right)x_{(k)}\left[f_{2}\delta_{kl}+f_{11}x_{(l)}\right]^2
\Bigg].
%%%
%%%
%&=x(1-x)\Bigg[\Big(f_{21}+xf_{111}\Big)\left(\mu_{1,1}-\mu_{2,1}\right)+\Big(f_{21}+(1-x)f_{111}\Big)\left(\mu_{1,2}-\mu_{2,2}\right)\nonumber\\
%&\quad-\Big(f_{41}+(4f_{311}+3f_{221})x+6f_{2111}x^2
%+f_{11111}x^3\Big)\sigma^2_{1,1}\nonumber\\
%&\quad-\Big(f_{32}+\left(3f_{221}+f_{2111}\right)(1-x)+xf_{311}+2x(1-x)f_{2111}+x(1-x)^2f_{11111}\Big)\sigma^2_{1,2}\nonumber\\
%&\quad+\Big(f_{32}+\left(3f_{221}+f_{2111}\right)x+f_{311}(1-x)+2x(1-x)f_{2111}+x^2(1-x)f_{11111}\Big)\sigma^2_{2,1}\nonumber\\
%&\quad+\Big(f_{41}+(4f_{311}+3f_{221})(1-x)+6f_{2111}(1-x)^2+f_{11111}(1-x)^3\Big)\sigma^2_{2,2}\Bigg]\nonumber\\
%&\quad+O\left(N^{-1}\right).
\end{align}
%%%%%%%%%%%%%%%%%%%%%%%%%%%%%%%%%%%%%%%%%%%%%%%%%%%%%%%%%%%%%%%%%%%%%%%%%%%%%%%%%%%%%%%%%%%%%%%
%%%%%%%%%%%%%%%%%%%%%%%%%%%%%%%%%%%%%%%%%%%%%%%%%%%%%%%%%%%%%%%%%%%%%%%%%%%%%%%%%%%%%%%%%%%%%%%%%
%\subsection*{Derivation of $Q(\mathbf{v}(x))$}
As for the infinitesimal variance $Q(\mathbf{v}(x))$ in Theorem 1, given by 
\begin{align}\label{}
& x(1-x)\left(1+(1-m)(Nm-1)f_2\right)\nonumber\\
&+
\sum_{k,l=1}^{2}\sigma^2_{k,l}\Bigg[
m^2 Cov_{\mathbf{v}(x)}\left(q_{1,I_1}q_{1,J_1},q_{k,I_1}q_{l,I_2}q_{k,J_1}q_{l,J_2}\right)
-2mx\,  Cov_{\mathbf{v}(x)}\left(q_{1,I_1},q_{k,I_1}q_{l,I_2}q_{k,J_1}q_{l,J_2}\right)
\nonumber\\
&+(1-m^2)\Big(
Cov_{\mathbf{v}(x)}\left(q_{1,I_1},q_{1,J_1}q_{k,I_1}q_{l,I_2}q_{k,J_1}q_{l,J_2}\right)
-
Cov_{\mathbf{v}(x)}\left(q_{1,I_1},q_{1,J_1}q_{k,I_2}q_{l,I_3}q_{k,J_1}q_{l,J_2}\right)
\Big)
\nonumber\\
&+
2m(1-m)\, Cov_{\mathbf{v}(x)}\left(q_{1,I_1},q_{k,I_2}q_{l,I_3}q_{k,J_1}q_{l,J_2}\right)
+
(1-m)^2\Big(
Cov_{\mathbf{v}(x)}\left(q_{1,I_1},q_{1,J_1}q_{k,I_2}q_{l,I_3}q_{k,J_2}q_{l,J_3}\right)\nonumber\\
&
-
Cov_{\mathbf{v}(x)}\left(q_{1,I_1},q_{1,J_1}q_{k,I_1}q_{l,I_2}q_{k,J_2}q_{l,J_3}\right)
\Big)\Bigg],
\end{align}
it takes the form
\begin{align}\label{AA-eq9}
& x(1-x)\left(1+(1-m)(Nm-1)f_2\right)\nonumber\\
&+
\sum_{k,l=1}^{2}\sigma^2_{k,l}\Bigg[
m^2x^2(\delta_{1k}-x_{(k)}^2)\left(f_{2}\delta_{kl}+f_{11}x_{(l)}\right)^2
-2mx^2\left(\delta_{1k}-x_{(k)}\right)x_{(k)}\left(f_{2}\delta_{kl}+f_{11}x_{(l)}\right)^2\nonumber\\
&+\delta_{1k}x^2(1-m^2)(f_{2}\delta_{kl}+f_{11}x_{(l)})
\Big(\left(\delta_{1k}-x_{(k)}\right)(f_{21}\delta_{kl}+(f_{21}+f_{111})x_{(l)})
-x_{(k)}f_{21}\left(\delta_{1l}-x_{(l)}\right)
\Big)\nonumber\\
&+ 2m(1-m)xx_{(k)}(f_{2}\delta_{kl}+f_{11}x_{(l)})\Big(
\left(\delta_{1k}-x_{(k)}\right)\left(f_{3}\delta_{kl}+f_{21}x_{(l)}\right)
+x_{(k)}f_{21}\left(\delta_{1l}-x_{(l)}\right)
\Big)
\nonumber\\
&+x^2(1-m)^2\Big(\delta_{1k}\left(f_{3}\delta_{kl}+f_{21}x_{(l)}\right)+\left(f_{21}\delta_{1l}+f_{21}\delta_{kl}+f_{111}x_{(l)}\right)x_{(k)}\Big)\nonumber\\
&\times\Big(
x_{(k)}f_{21}(\delta_{1l}-x_{(l)})-(\delta_{1k}-x_{(k)})(f_{21}\delta_{kl}+(f_{21}+f_{111})x_{(l)})
\Big)
\Bigg].
\end{align}
Here, besides the previous expressions for the covariances, we have used the identities
\begin{subequations}\label{AA-eq8}
\begin{align}
&Var_{\mathbf{v}(x)}(q_{I_1})=x(1-x),\\
&Cov_{\mathbf{v}(x)}(q_{1,I_1},q_{1,I_2})=x(1-x)f_{2}.
\end{align}
\end{subequations}
Finally, letting $N\rightarrow\infty$ and $m\rightarrow 0$ in a way such that $Nm\rightarrow\nu$, we get the approximations
\begin{align}\label{AA-eq10}
M(\mathbf{v}(x))
&\approx x(1-x)\Bigg[\Big(f_{21}+xf_{111}\Big)\left(\mu_{1,1}-\mu_{2,1}\right)+\Big(f_{21}+(1-x)f_{111}\Big)\left(\mu_{1,2}-\mu_{2,2}\right)\nonumber\\
&\quad-\Big(f_{41}+(4f_{311}+3f_{221})x+6f_{2111}x^2
+f_{11111}x^3\Big)\sigma^2_{1,1}\nonumber\\
&\quad-\Big(f_{32}+\left(3f_{221}+f_{2111}\right)(1-x)+xf_{311}+2x(1-x)f_{2111}+x(1-x)^2f_{11111}\Big)\sigma^2_{1,2}\nonumber\\
&\quad+\Big(f_{32}+\left(3f_{221}+f_{2111}\right)x+f_{311}(1-x)+2x(1-x)f_{2111}+x^2(1-x)f_{11111}\Big)\sigma^2_{2,1}\nonumber\\
&\quad+\Big(f_{41}+(4f_{311}+3f_{221})(1-x)+6f_{2111}(1-x)^2+f_{11111}(1-x)^3\Big)\sigma^2_{2,2}\Bigg]%\nonumber\\
%&\quad+O\left(N^{-1}\right)
\end{align}
and
\begin{align}\label{AA-eq11}
Q(\mathbf{v}(x))
&\approx x(1-x)(f_{11}+\nu f_{2})+x^2(1-x)^2\left(f_{21}+xf_{111}\right)^2\left(\sigma^2_{1,1}+\sigma^2_{2,1}\right)\nonumber\\
&\quad+x^2(1-x)^2\left(f_{21}+(1-x)f_{111}\right)^2\left(\sigma^2_{1,2}+\sigma^2_{2,2}\right).%+O\left(N^{-1}\right).
\end{align}

%%%%%%%%%%%%%%%%%%%%%%%%%%%%%%%%%%%%%%%%%%%%%%%%%%%%%%%%%%%%%%%%%%%%%%%%%%%%%%%%%%%%%%%%%%%%%%%\\
%%%%%%%%%%%%%%%%%%%%%%%%%%%%%%%%%%%%%%%%%%%%%%%%%%%%%%%%%%%%%%%%%%%%%%%%%%%%%%%%%%%%%%%%%%%%%%%
\subsection*{Appendix J: Proof of Proposition \ref{Proposition8}}

In a neutral population of an infinite number of demes of size $N$ with $N$ generations as unit of time as $N$ approaches infinity and $m$ approaches zero such that $Nm\rightarrow\nu$, each pair of lineages of offspring within the same deme coalesces  backward in time at the rate $1$ independently of all the other pairs of lineages like in the Kingmann coalescent (Kingmann \cite{K1982}).  On the other hand, each deme goes extinct backward in time at the rate  $\nu$. When this occurs, the lineages within the deme migrate to all different demes and stay in different demes afterward so that they cannot coalesce anymore. Therefore, offspring are identical by descent only if they are in the same deme and their lineages coalesce backward in time before the deme goes extinct (see, e.g., Hudson \cite{H1998} and Rousset \cite{R2002}).

We are interested in the probability for an ordered sample of offspring selected at random from the same deme, represented by $I^{(1)}_1,\ldots,I^{(1)}_{n_1},\ldots,I^{(k)}_1,\ldots, I^{(k)}_{n_k}$, to be such that :
\begin{itemize}
\item $I^{(i)}_1,\ldots,I^{(i)}_{n_i}$ are identical by descent for $i=1,\ldots,k$, and
\item $I^{(i)}_1$ and $I^{(j)}_1$ are not identical by descent if $i\neq j$ for $i,j=1,\ldots,k$.
\end{itemize}
This event is represented by $I^{(1)}_1\ldots \, I^{(1)}_{n_1}|\ldots \,|I^{(k)}_1\ldots \,  I^{(k)}_{n_k}$ and its probability by $f_{I^{(1)}_1\ldots \, I^{(1)}_{n_1}|\ldots \,|I^{(k)}_1\ldots \, I^{(k)}_{n_k}}$.

The offspring $I_1,\ldots, I_{n}$  will be identical by descent if and only if all their lineages coalesce before the deme goes extinct. Note that the probability of a coalescence event involving two lineages among $j$ in the same deme before an extinction event involving the deme occurs is given by the rate of coalescence, $\binom{j}{2}$, divided by the rate of coalescence or extinction, $\binom{j}{2}+\nu$, 
for $j=2, \ldots, n$. Therefore, we have
\begin{equation}\label{B-eq1}
f_{I_1\ldots  \,I_{n}}=\prod_{j=2}^{n}\frac{\binom{j}{2}}{\binom{j}{2}+\nu}.
\end{equation}
On the other hand, none of the offspring 
$I_1,\ldots, I_{n}$ will be identical by descent to another if and only if the first event to occur is the extinction of the deme, which gives
\begin{equation}\label{B-eq2}
f_{I_1|\ldots \, |I_{n}}=\frac{\nu}{\binom{n}{2}+\nu}.
\end{equation}
More generally, the occurrence of the event $I^{(1)}_1\ldots \,I^{(1)}_{n_1}|\ldots \, |I^{(k)}_1\cdots I^{(k)}_{n_k}$ can be explained by a series of $n_1+\cdots+n_k-k$ coalescent events followed by an extinction event. Specifically, $n_i-1$ coalescence events must occur among the ancestral lines of $I^{(i)}_1,\ldots,I^{(i)}_{n_i}$ until they reach their most recent common ancestor, denoted by $J^{(i)}$, for $i=1,\ldots,k$, before the deme goes extinct. 
Moreover, the ancestral lines of $J^{(i_1)}$ and $J^{(i_2)}$ must never experience a coalescence event, for $i_1\not=i_2$, before the deme goes extinct. As a result, we obtain
\begin{equation}\label{B-eq3}
f_{I^{(1)}_1\ldots \, I^{(1)}_{n_1}|\ldots \, |I^{(k)}_1\ldots \, I^{(k)}_{n_k}}=\binom{n_1+\cdots+n_k-k}{n_1-1,\ldots,n_k-1}\frac{\prod_{i=1}^{k}\prod_{j=2}^{n_i}\binom{j}{2}}
{\prod_{j=k+1}^{n_1+\cdots+n_k}\left(\binom{j}{2}+\nu\right)}\times\frac{\nu}{\binom{k}{2}+\nu},
\end{equation}
where
\begin{equation}\label{B-eq4}
\binom{n_1+\cdots+n_k-k}{n_1-1,\ldots,n_k-1}=\dfrac{(n_1+\cdots+n_k-k)!}{(n_1-1)!\cdots( n_k-1)!},
\end{equation}
for $n_1, \ldots, n_k \geq 1$ for $k\geq2$.

%%%%%%%%%%%%%%%%%%%%%%%%%%%%%%%%%%%%%%%%%%%%%%%%%%%%%%%%%%%%%%%%%%%%%%%%%%%%%%%%%%%
%%%%%%%%%%%%%%%%%%%%%%%%%%%%%%%%%%%%%%%%%%%%%%%%%%%%%%%%%%%%%%%%%%%%%%%%%%%%%%%%%%%%%%%%%%%%%%%
%%%%%%%%%%%%%%%%%%%%%%%%%%%%%%%%%%%%%%%%%%%%%%%%%%%%%%%%%%%%%%%%%%%%%%%%%%%%%%%%%%%%%%%%%%%%%%%

\bibliographystyle{unsrt}

\end{document}